\documentclass[11pt,a4paper]{article}

\usepackage[utf8]{inputenc}
\usepackage[T1]{fontenc}
\usepackage{microtype}

\usepackage[margin=1in]{geometry}
\usepackage{hyperref}       %
 \usepackage{booktabs}       %
 \usepackage{amsfonts}       %
 \usepackage{nicefrac}       %
 \usepackage{microtype}      %
 \usepackage{xcolor}         %
\usepackage{enumitem}

\usepackage{lipsum}

\usepackage{amsmath,mathtools}
\usepackage{amsthm,thmtools}

\usepackage{bm,dsfont}
\usepackage{csquotes}

\usepackage{derivative}

\usepackage{subcaption}
\usepackage{graphicx}
\usepackage{tikz}
\usepackage{pgfplots}
\pgfplotsset{compat=newest}

\allowdisplaybreaks

\usepackage[ruled,linesnumbered,noline]{algorithm2e}

\usepackage[skip=5pt plus 3pt minus 1pt,parfill=0pt]{parskip}
\makeatletter
\def\thm@space@setup{%
  \thm@preskip=\parskip \thm@postskip=0pt
}
\makeatother

\usepackage[
 backend=biber, style=numeric, citetracker, hyperref=true,
 maxcitenames=2, %
 sortcites,      %
 sorting=nyt,
 maxbibnames=12, %
 giveninits,     %
 date=year,      %
 isbn=false,     %
 url=false,      %
 doi=false,      %
 eprint=false,   %
]{biblatex}

\newbibmacro{string+doiurlisbn}[1]{
  \iffieldundef{doi}{
    \iffieldundef{url}{
      #1
    }{
      \href{\thefield{url}}{#1}
    }
  }{
    \href{http://dx.doi.org/\thefield{doi}}{#1}
  }
}

\DeclareFieldFormat
[article,inbook,incollection,inproceedings,patent,thesis,unpublished]
  {title}{\usebibmacro{string+doiurlisbn}{#1}}

\usepackage{hyperref}
\hypersetup{colorlinks,citecolor=green!60!black,linkcolor=blue!70!black,urlcolor=blue!70!black,breaklinks=true}
\usepackage[capitalize, noabbrev,nameinlink]{cleveref}
\usepackage{xurl}

\usepackage{xspace}

\definecolor{col1}{HTML}{56b3e9}
\definecolor{col2}{HTML}{e69f00}
\definecolor{col3}{HTML}{009e74}
\definecolor{col4}{HTML}{cc79a7}
\definecolor{col5}{HTML}{d55e00}
\definecolor{col6}{HTML}{0071b2}
\pgfplotsset{
    colormap={signalheat}{
        color(0cm)=(white);
        color(1cm)=(col1!70!white);
        color(2cm)=(col3!80!white);
        color(3cm)=(col2!85!white);
        color(4cm)=(col5)
    }
}

\declaretheorem{theorem}
\declaretheorem[sibling=theorem]{lemma}

\DeclarePairedDelimiter\ceil{\lceil}{\rceil}
\DeclarePairedDelimiter\floor{\lfloor}{\rfloor}

\newcommand{\cI}{\mathcal{I}}
\newcommand{\cH}{\mathcal{H}}
\newcommand{\pr}{\mathbb{P}}

\newcommand{\opt}{{\mathrm{OPT}}}
\newcommand{\optrand}{\mathrm{OPT}^{\mathrm{rand}}}
\newcommand{\optrandfull}{{\mathrm{OPT}^{\mathrm{rand, full}}_n}}
\newcommand{\optdet}{\mathrm{OPT}^{\mathrm{det}}}
\newcommand{\optdetfull}{{\mathrm{OPT}^{\mathrm{det, full}}_n}}
\newcommand{\pol}{{\mathcal{A}}}
\newcommand{\polmax}[1]{{\mathcal{A}(\max\{#1\})}}

\newcommand{\Beta}{\mathrm{Beta}}
\newcommand{\ILP}{\mathrm{ILP}}

\newcommand{\prob}[1]{\mathbb{P}\left[#1\right]}

\newcommand{\hist}{\mathcal{H}}
\newcommand{\success}{\text{success}}

\usepackage[colorinlistoftodos,prependcaption,textsize=scriptsize]{todonotes}
\title{The Secretary Problem with a Stochastic Precursor}

\author{Franziska Eberle\thanks{Institut für Mathematik, Technische Universität Berlin, Germany. Funded by the Deutsche Forschungsgemeinschaft (DFG, German Research Foundation) under Germany’s Excellence Strategy -- The Berlin Mathematics Research Center MATH$^+$ (EXC-2046/1, EXC-2046/2, project ID: 390685689).} \and Alexander Lindermayr\thanks{Institut für Mathematik, Technische Universität Berlin, Germany.}}
\date{}

\begin{document}

\maketitle

\begin{abstract}
In learning-augmented online algorithms, predictions 
are usually valued for what they say: a value estimate, a solution, or an algorithmic recommendation. 
This paper shows that predictions can also be valuable solely due to their arrival time.  
We study the fundamental secretary problem augmented with a stochastic 
precursor: a content-free signal that is guaranteed to arrive no later than 
the best item, but is otherwise stochastically timed. 
The signal does not carry any additional information; nevertheless, its 
timing alone changes the structure of optimal stopping. We characterize 
optimal policies in the random-order and adversarial-order models. In random 
order, a single uniformly timed precursor already gives success probability at 
least $\nicefrac12$, improving on the classic $\nicefrac1e$ benchmark. With increasingly 
late precursors, the success probability approaches $1$. In adversarial order, 
for which traditional models do not admit strong guarantees, 
sufficiently concentrated precursors recover constant success guarantees. Our 
results show that such novel forms of asynchronous temporal information are a 
distinct and powerful form of advice in online decision making
and may also be effective for other problems.
\end{abstract}

\section{Introduction}

The classic secretary problem is one of the most fundamental models of online decision making under uncertainty~\cite{dynkin1963optimum,Lindley1961DynamicPA,Gilbert1966RecognizingTM,Ferguson1989WhoST}.
Its parsimonious formulation has made it a natural testbed for richer information models.
Recent work on secretary and related problems studies algorithms with access to samples~\cite{KaplanNR25,CorreaCFOT25}, augmented by predictions of values, ranks, or thresholds~\cite{AntoniadisGKK23,FujiiY24,BraunS24,BalkanskiMM24,nourmohammadi2026ordinal,karisani2026secretary,BraunS24}, or relying on more general forms of advice~\cite{DuttingLLV24}.
A common feature of these models is that the additionally provided information is \emph{synchronous} and \emph{instructional}: it is available before the process starts (or arrives with the respective item) and directly informs when to stop. 

In many applications, however, side information is generated by a separate process and arrives on its \emph{asynchronous} timeline.
In hiring, for instance, a delayed recommendation or inference-model score may become available only after interviews have already started, and may indicate only that a particularly strong candidate is still among the remaining applicants.
Similar \emph{precursor events} arise in popularity forecasting through early attention measurements~\cite{SzaboH10,Tatar2014ASO,Hu2017PredictingKE}, in recommendation systems through evolving user-behavior signals~\cite{yang2022generalized,joulani13,Qian2024LearningWA}, and in scientific impact prediction or scouting through early indicators of future long-run success~\cite{DziallasB19,AbramoAF19,Adams2005EarlyCC}.
Such %
indicators may contain rich information in practice, but one basic aspect is purely temporal: they arrive at a time that only indicates
whether the most important event has already occurred.
This 
is much weaker than the predictions commonly used in learning-augmented algorithms since it does not %
inform a decision.

This observation motivates a simple question at the boundary of %
optimal stopping theory and learning-augmented online algorithms:
\emph{Can an online algorithm provably benefit from precursors whose only information is their timing?}
Our focus is on signals that are inherently \emph{asynchronous}: they may be produced by machine-learning pipelines, external observations, or side processes that do not align with the arrival sequence.
In this sense, our model is complementary to standard learning-augmented formulations.
Rather than predicting values, ranks, or thresholds, the signal reveals only the temporal cue that something important is still ahead.

In the standard secretary problem, %
$n$ secretaries (or \emph{items}) $\{1,\ldots,n\} =: [n]$ are presented online one by one, in adversarial or in uniformly random order, at times $1,\ldots,n$.
The goal is to \emph{maximize the probability} of stopping exactly when the unique best item arrives.
Without side information, the classic random-order problem has optimal success probability $\frac1e$~\cite{dynkin1963optimum,Lindley1961DynamicPA,Ferguson1989WhoST},
while adversarial order is much harsher: deterministic algorithms cannot guarantee positive success probability, and randomized algorithms achieve at most $\frac1n$.

We introduce an asynchronous and non-instructional precursor model for secretary problems.
After the arrival order has been fixed,
a time $S$ is drawn from a distribution over $[I]$, where $I$ denotes the arrival time of the unique best item.
When time $S$ is reached, the algorithm receives a content-free signal. %
This isolates the algorithmic value of temporal side information from richer advice.

The main message of the paper is that even this extremely weak signal model is surprisingly powerful; we give a high-level overview now and precise statements and definitions in \Cref{sec:our-contributions}:
\begin{itemize}
    \item \textbf{Full characterization of deterministic and randomized optimal policies} for $\alpha$-power distributed signal times under both arrival orders for every $\alpha > 0$; those policies always beat the respective benchmarks.
    \item[] \hspace{-\leftmargin} To highlight two concrete results, we show that, even without knowing $n$ in advance, in the 
    \item \textbf{random-order setting} a uniform signal suffices to succeed with probability $\frac12$, and in the 
	\item \textbf{adversarial-order setting} a uniform signal guarantees that even a deterministic policy succeeds with probability $\frac1n$. 
\end{itemize}

Conceptually, our results show that \emph{timing alone} can be algorithmically valuable.
The signal itself cannot be used to assess the quality of the current item directly. %
Nevertheless, it helps to outperform the classic benchmark, eliminates the need for knowing $n$, and,
when sufficiently concentrated near the optimum $I$, allows us to recover constant guarantees even under adversarial order.
This makes asynchronous temporal information a distinct and tractable resource and a novel augmented information model in online decision making.
We believe it will be useful beyond secretary problems.

\subsection{A more detailed overview of our results}\label{sec:our-contributions}
We focus on the $\alpha$-power signal model for $\alpha > 0$.
Conditioned on $I=i$, the signal time is distributed as $S = \ceil{i \cdot B}$ for $B \sim \Beta(\alpha,1)$; equivalently, each signal time $s \in [i]$ appears with probability
\[
	\pr[S=s \mid I=i]
	=
	\frac{s^\alpha-(s-1)^\alpha}{i^\alpha}.
\]
This Beta family gives a simple interpolation between signals that tend to occur early and signals that tend to occur late in the range $[i]$.
Two prominent special cases are $\alpha=1$, where $S$ is uniform on $[i]$,
and integer values $\alpha=m \geq 1$, where $S$ has the same distribution as the maximum of $m$ independent uniform signals in $[i]$. 
In general, for $\alpha \ll 1$, $S$ underestimates $i$, i.e., small $s$ are more likely, even for large~$i$, and for $\alpha \gg 1$, $S$ is likely close to~$i$; cf. \Cref{fig:beta} for representative densities.

We call a time/item a \emph{record} if the arriving item is the best item so far.
The \emph{threshold policy} $\pol(k)$ with threshold $k$ rejects all items before time $k$ and then accepts the first record from that time onward. 

\paragraph{Random order: exact optimal policy (\Cref{sec:random-order-optimal}).}
We fully characterize the optimal online policy (cf.~\Cref{thm:random-order-tight}).
If $\alpha\ge 1$, then the optimal rule is $\pol(S)$ (which does not need to know~$n$): wait for the signal and then accept the first record.
Notably, already a uniform signal boosts the probability of the next record being the overall maximum above~$\frac12$.   
If $0<\alpha<1$, then the optimal policy is $\pol(\max\{S,k_n\})$ relying on~$n$, where $k_n/n \to (1-\alpha)^{1/\alpha}$.
Asymptotically (cf.  \Cref{fig:beta}),  the worst-case probability
\[
	\opt_n(\alpha) \xrightarrow{n \to \infty} \opt(\alpha)=
	\begin{cases}
		\displaystyle \frac{\alpha+(1-\alpha)^{1+1/\alpha}}{\alpha+1}, & \text{if } 0<\alpha<1,  \\[1.2ex]
		\displaystyle \frac{\alpha}{\alpha+1},                         & \text{if } \alpha\ge 1.
	\end{cases}
\]
In particular, $\opt(\alpha) > \frac1e$, beating the benchmark for all $\alpha>0$, and $\opt(\alpha) \xrightarrow{\alpha\to 0} \frac1e$.

\begin{figure}[t]
	\centering
    \begin{subfigure}[t]{0.49\textwidth}
		\centering
		\begin{tikzpicture}
			\begin{axis}[
					width=\linewidth,
					height=0.6\linewidth,
					xmin=0,
					xmax=1,
					ymin=0,
					ymax=5.2,
					samples=400,
					xlabel={$x$},
					ylabel={signal density},
					axis lines=left,
					grid=both,
					tick label style={font=\scriptsize},
					label style={font=\footnotesize},
					legend style={draw=none, fill=none, at={(0.5,1.07)}, anchor=north, font=\scriptsize},
				]
				\addplot[very thick, color=col2, domain=0.013:1] {0.1*x^(-0.9)};
				\addlegendentry{$\alpha=\makebox[1.3em][r]{0.1}$}

				\addplot[very thick, color=col2, dashed, domain=0.01:1] {0.5/sqrt(x)};
				\addlegendentry{$\alpha=\makebox[1.3em][r]{0.5}$}

				\addplot[very thick, color=black, domain=0:1] {1};
				\addlegendentry{$\alpha=\makebox[1.3em][r]{1}$}

				\addplot[very thick, color=col1, domain=0:1] {2*x};
				\addlegendentry{$\alpha=\makebox[1.3em][r]{2}$}

				\addplot[very thick, color=col1, dashed, domain=0:1] {10*x^9};
				\addlegendentry{$\alpha=\makebox[1.3em][r]{10}$}
			\end{axis}
		\end{tikzpicture}
	\end{subfigure}
	\begin{subfigure}[t]{0.49\textwidth}
		\centering
		\begin{tikzpicture}
			\begin{axis}[
					width=\linewidth,
					height=0.6\linewidth,
					xmin=0.0001,
					xmax=100,
					ymin=0.33,
					ymax=1.05,
					samples=200,
					xlabel={$\alpha$},
					ylabel={asymptotic success prob.},
					axis lines=left,
					grid=both,
					tick label style={font=\scriptsize},
					label style={font=\footnotesize},
					legend style={draw=none, fill=none, at={(0.98,0.02)}, anchor=south east, font=\scriptsize},
					ytick={0.3678794412,0.5,1},
					yticklabels={$1/e$,$1/2$,$1$},
					xtick={0.0001,0.2,0.4,0.6,0.8,1,10,100},
					xticklabels={$0$,,$0.4$,,$0.8$,$1$,$10$,$100$},
					x coord trafo/.code={\pgfmathparse{#1 <= 1 ? #1 : 1 + ln(#1)/ln(10)}},
					x coord inv trafo/.code={\pgfmathparse{#1 <= 1 ? #1 : 10^(#1-1)}},
				]
				\addplot[very thick, color=col2, domain=0.001:1] {(x + (1-x)^(1 + 1/x))/(x+1)};
				\addlegendentry{$\opt(\alpha)$, $\alpha < 1$}

				\addplot[very thick, color=col1, domain=1:100] {x/(x+1)};
                \addlegendentry{$\opt(\alpha)$, $\alpha > 1$}

				\addplot[black, dashed] coordinates {(1,0.33) (1,1.02)};
				\addplot[gray, dashed] coordinates {(0.001,0.3678794412) (100,0.3678794412)};
				\addplot[gray, dashed] coordinates {(0.001,1) (100,1)};
				\addplot[only marks, mark=*, black] coordinates {(1,0.5)};

			\end{axis}
		\end{tikzpicture}
	\end{subfigure}\hfill
	\caption{Left: PDFs of $\mathrm{Beta}(\alpha, 1)$. \quad Right: Plot of $\opt(\alpha)$ from \Cref{thm:random-order-tight}.}
    \label{fig:beta}
\end{figure}

\paragraph{Random order: robustness (\Cref{sec:random-order-robustness}).} 
Knowing $\alpha\notin (0,1)$ is sufficient to recover the optimal success probability achievable with precise knowledge of~$\alpha$. 
In general, the qualitative structure is robust: 
If the policy only knows a conservative estimate $\hat\alpha \le \alpha$ of the true parameter~$\alpha$, a threshold policy tuned to $\hat\alpha$ still guarantees $\opt(\hat\alpha)> \frac1e$ asymptotically (cf.~\Cref{thm:conservative-misspecification}); underestimating~$\alpha$ is safe. 
We also give an explicit formula for the asymptotic success probability for arbitrary $\alpha,\hat \alpha > 0$:
\[
		g(\alpha,\hat\alpha)
		=
		\begin{cases}
			\displaystyle
			\frac{\alpha}{\alpha+1}
			+
			\frac{1-\alpha}{\alpha}(1-\hat\alpha)^{1/\hat\alpha}
			-
			\frac{(1-\hat\alpha)^{(\alpha+1)/\hat\alpha}}{\alpha(\alpha+1)},
			 & \text{if } 0<\hat\alpha\le 1, \\[3ex]
			\displaystyle \frac{\alpha}{\alpha+1},
			 & \text{if } \hat\alpha\ge 1.
		\end{cases}
\]

\paragraph{Adversarial order: deterministic and randomized policies (\Cref{sec:adversarial}).}
We also study an adversarial-order model with an $\alpha$-power signal, where $i^\star$ is some (adversarially) fixed best item. %
For randomized policies, we give the exact worst-case success probability
\[
	\opt^{\mathrm{rand}}_{n}(\alpha)=\frac{n^\alpha}{\sum_{j=1}^n j^\alpha} \ ,
\]
achieved by %
$\polmax{S,R}$, where $R$ is a (non-trivial) random threshold that depends on $n$ (cf.\ \Cref{thm:adversarial-rand-opt}).
This yields sharp asymptotic regimes: if $\alpha=o(n)$, then $\opt^{\mathrm{rand}}_{n}(\alpha) = o(1)$; if $\frac\alpha n\to c\in(0,\infty)$, then $\opt^{\mathrm{rand}}_{n}(\alpha)\to 1-e^{-c}$ (cf.\ \Cref{cor:adversarial-rand-regimes}).
For deterministic policies, we show that $\pol(S)$
is optimal for every $\alpha>0$ (cf.\ \Cref{thm:adversial-det}), with exact worst-case success probability 
\[
	\opt^{\mathrm{det}}_{n}(\alpha)=1-\left(1-\frac1n\right)^\alpha \ .
\]
If $\alpha=o(n)$, then $\opt^{\mathrm{det}}_{n}(\alpha)=o(1)$, while $\frac\alpha n\to c\in(0,\infty)$ implies $\opt^{\mathrm{det}}_{n}(\alpha) \to 1-e^{-c}$ (cf.\ \Cref{cor:adversarial-det-regimes}).
Thus, qualitatively, deterministic and randomized policies exhibit the same asymptotic behaviors; randomization improves the finite-$n$ guarantees in the small-$\alpha$ regime. %

\paragraph{Adversarial order: full signal histories (\Cref{sec:multiple-vs-alpha}).}
When $\alpha=m$ is an integer, the $\alpha$-power signal %
can be interpreted as the last of $m$ independent uniform signals in $[i^\star]$.
We therefore also study the richer \emph{full-history} model in which the algorithm observes all $m$ signal times, rather than only their maximum.
For randomized policies, we show that %
this gives no additional worst-case power (cf.\ \Cref{lem:last-signal-rand}). 
For deterministic policies, seeing the full history strictly improves the guarantee, already for $n=4$ and $m=2$.
To capture this additional power, we characterize the deterministic full-history optimum by an integer linear program (\Cref{thm:det-ilp}). %
For $m=2$ we simplify the characterization of optimal solutions (\Cref{thm:m2-det}).
Thus, unlike in the randomized case, the internal history of asynchronous signals can be algorithmically meaningful for deterministic adversarial-order stopping. %

\paragraph{Experimental results (\Cref{sec:experiments}).}
Our experimental results complement the theory with simulations of the actual online policies.
In the random-order model, we show that the theorized gains over the classic secretary benchmark are visible already at moderate problem sizes, and that the threshold policies behave robustly when the signal-quality parameter is conservatively misspecified.
We further test corrupted asynchronous signals 
and show that a simple fall-back-to-classic rule provides a smooth interpolation between the learned and prediction-free regimes.
In the adversarial-order model, we confirm the behavior in the $\alpha=cn$ regime predicted by the analysis, and demonstrate that full signal histories can strictly help deterministic algorithms.

\paragraph{Conclusion.}
Taken together, %
our results isolate a new source of algorithmic power in online optimal stopping: \emph{not predictive content, but predictive timing}.
Asynchronous signals help even in the notoriously hard adversarial regime, and in random order they beat the classic optimum already with a single uniform signal.
Most proofs are deferred to the appendix.

\subsection{Further related work}

A natural extension of the classic secretary problem is the \emph{full information} setting, where item values are drawn i.i.d.\ from a known distribution \cite{Gilbert1966RecognizingTM,EsfandiariHLM20,Nuti22}. 
More broadly, optimal stopping problems such as prophet inequalities~\cite{CorreaFHOV18,Lucier17,HillK92}, Pandora's box problems~\cite{BeyhaghiC23,Weitzman1978}, and variants there-of typically assume full distributional knowledge. 
A recent line of work relaxes full distributional knowledge by providing the algorithm with \emph{samples} from the underlying distribution(s), both in the prophet \cite{AzarKW19,CorreaC0S22,CorreaDFS22,RubinsteinWW20} and secretary \cite{KaplanNR25,DuttingLLV24,CorreaCFOT25} settings. 
Another way to relax full distributional knowledge is by only assuming \emph{distributional advice} such as partial or approximate knowledge of the input distribution \cite{DiakonikolasKTV21,CuiDinitz26,DinitzILMNV24,ABD024,MoseleyNPZ25}. 
While these models and our model of knowing the signal distribution may appear similar, the distinction is structural: 
All of these models provide information about the (distribution of the) item values, whereas the precursor signal only carries \emph{temporal} information dependent on the arrival time of the optimum and cannot be used to evaluate any candidate directly.

In  the \emph{learning-augmented algorithms} framework \cite{MitzenmacherV22,alps}, algorithms are equipped with a (potentially erroneous) machine-learned prediction, and their performance guarantee is typically analyzed as a function of some prediction error. 
This paradigm has been investigated for a variety of problems, e.g.\ caching 
\cite{LykourisV21,Wei20,BansalCKPV22,AntoniadisCEPS23,EKMM24}, scheduling \cite{LattanziLMV20,AzarLT21,LindermayrM25}, matching and allocation \cite{JinM22,SpaehE23,ChooGL024}, submodular maximization \cite{AgarwalB24}, and online learning \cite{KhodakBTV22,RamanT24}. 
In the optimal stopping context, predictions typically take the form of the (final) rank or value of the current item and, if perfect, directly allow to stop at the maximum \cite{AntoniadisGKK23,FujiiY24,BraunS24,BalkanskiMM24,nourmohammadi2026ordinal,karisani2026secretary,BraunS24}. 
In contrast, the precursor only signals that the optimum has not yet passed, and the question is how to optimally exploit a probabilistic but trustworthy temporal clue.

Closest in spirit to our model is recent work 
on non-clairvoyant scheduling \cite{GuptaKLSY25,BCLS25}, 
where a delayed external signal provides partial information about each job's 
characteristics. 
Our precursor signal shares the flavor of delayed and reliable side information, 
but differs in that it is tied to the optimality of an item rather than to  intrinsic features such as processing times.

\section{Random-order model: optimal policies}\label{sec:random-order-optimal}

We start with the random-order model. 
Here, the arrival time $I$ of the best item is uniformly distributed over all times $[n]$, that is, $\pr [I=i] = \frac1n$ for each $i \in [n]$. 
The baseline is the classic threshold policy~$\pol\big(\frac ne\big)$, 
which achieves the optimal success probability $\frac1e \approx 0.37$~\cite{dynkin1963optimum,Lindley1961DynamicPA,Ferguson1989WhoST}.

\subsection{Warm-up: special cases} %

\paragraph{Single uniform signal.} 
We show that we can improve substantially over $\frac1e$ 
even with a single uniformly distributed signal $S \sim \mathcal U([I])$ 
(equivalent to the $\alpha$-power model with parameter $\alpha=1$).

We consider the policy $\pol(S)$, that is, we reject all items before time~$S$ and accept the first record in~$\{S,\ldots,n\}$. 
For the analysis of the success probability of this policy, 
fix a realization~$I = i \in [n]$ and~$S = s \in [i]$. 
If~$i = s$, then~$\pol(S)$ picks~$i$ and succeeds. %
If~$s < i$ (and hence $i \geq 2$), then~$\pol(S)$ wins if the best item among %
$[i-1]$ arrives before~$s$. This makes~$s$ a non-record, lets $\pol(S)$ stop at $i$ and happens with probability $\frac{s-1}{i-1}$. Averaging over~$s \in [i]$ yields success probability
\(
	\frac{1}{i} \sum_{s=1}^i \frac{s-1}{i-1} = \frac12.
\)
Finally, averaging over $i \in [n]$ gives an overall success
probability of $\frac{1}{n} + \frac{n-1}{2n} = \frac{n+1}{2n} > \frac{1}{2}$.

\paragraph{$\alpha$-power signals.}

As a second warm-up, we extend the preceding argument to an $\alpha$-power signal
with an arbitrary fixed parameter $\alpha>0$.
We again use the policy~$\pol(S)$: reject all items before the signal time~$S$
and accept the first record from time~$S$ onward.

For the analysis, condition on $I=i$.
The case $i=1$ is trivial as then $S=1$ and the policy succeeds.
Hence, suppose $i\geq 2$, and condition further on $S=s\in[i]$.
If $s=i$, then the signal occurs at the best item and $\pol(S)$ accepts it.
If $s<i$, then $\pol(S)$ succeeds exactly when the best item among the first
$i-1$ times appears before time~$s$, which happens with probability $\frac{s-1}{i-1}$.
Therefore,
\begin{align*}
	\pr[\success \mid I = i] &= \frac{i^\alpha - (i-1)^\alpha}{i^\alpha} + \sum_{s=1}^{i-1} \frac{s-1}{i-1} \cdot \frac{s^\alpha - (s-1)^\alpha}{i^\alpha} \\
	&=  1 - \frac{(i-1)^\alpha}{i^\alpha} + \frac{(i-2)(i-1)^\alpha}{(i-1)i^\alpha} - \sum_{s=1}^{i-2} \frac{s^\alpha}{(i-1)i^\alpha}	\\ %
	&= 1 - \sum_{s=1}^{i-1} \frac{s^\alpha}{(i-1)i^\alpha} 
    = 1 - \left( \frac{1}{\alpha+1} \frac{(i-1)^\alpha}{i^\alpha} + O\!\left(\frac1i \right) \right)
	= \frac{\alpha}{\alpha+1} +  O\!\left(\frac1i \right).
    \shortintertext{Averaging over all $i \in [n]$ gives a total success probability of at least}
    \pr[\success]
	& = \frac{1}{n} \sum_{i=1}^n \pr[\success \mid I = i]
	= \frac{\alpha}{\alpha+1} + O\!\left( \frac{\log n}{n} \right)
	\xrightarrow{n \to \infty} \frac{\alpha}{\alpha+1}.
\end{align*}

The limiting success probability is increasing in $\alpha$ and tends to $1$ as
$\alpha\to\infty$, so increasingly late signals make $\pol(S)$ nearly perfect.
In the remainder, we show that, for all $\alpha\geq 1$,
$\pol(S)$ is indeed optimal. %
For $\alpha\to 0$, the signal concentrates at $S=1$, and the
asymptotic success probability of $\pol(S)$ vanishes.
In that regime, the signal alone should not be used as the threshold: the optimal
policy combines $S$ with an $n$-dependent deterministic threshold, achieving a
success probability strictly larger than $\frac1e$ for every $0<\alpha<1$ and
approaching the classic $\frac1e$-guarantee as $\alpha\to 0$.

\subsection{Bellman recursion and optimal policies}

We first isolate the dynamic-programming structure of the problem in the following lemma.
The \emph{history} at time~$t$ describes the observed relative ranks as well as the signal time if $S \leq t$. 

\begin{restatable}[Bellman recursion]{lemma}{lemmaBellman}\label{lem:random-order-bellman}
	Fix $\alpha>0$.
	There is an optimal policy that never stops before the signal and only stops at record times.
   	Let $\Pi_t$ be the optimal success probability conditioned on not having stopped before time~$t$ and having observed a history in which
	\begin{enumerate}[label=(\roman*),nosep]
		\item the signal has already appeared, and
		\item time $t$ is a record.
	\end{enumerate}
	Then, $\Pi_t$ depends only on $t$.
    With the normalizing term $\Psi_t:=t^{1-\alpha}+\sum_{i=t+1}^n i^{-\alpha}$ and the (unnormalized) value $\Phi_t:=\Psi_t\Pi_t$,
	we have $\Phi_n=n^{1-\alpha}$ and, for every $t<n$,
	\begin{equation}\label{eq:bellman-random-order}
		\Phi_t=
		\max\left\{
		t^{1-\alpha},
		\sum_{u=t+1}^n \frac{t}{u(u-1)}\Phi_u
		\right\}.
	\end{equation}
\end{restatable}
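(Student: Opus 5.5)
The plan is to compute the posterior of $I$ given the history directly, and then obtain \eqref{eq:bellman-random-order} as the standard optimal-stopping recursion over successive record times. The structural claim comes first and is short. If the signal has not appeared by time $t$, then $S>t$, and since $S\le I$ the best item has not yet arrived; stopping at $t$ therefore succeeds with probability $0$. Stopping at a non-record also succeeds with probability $0$. Hence any policy can be modified to continue in these situations without loss, for instance by following an arbitrary policy from then on. So there is an optimal policy that stops only at record times at or after the signal.

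\textbf{Step 1: posterior.} Fix a history $H$ up to time $t$ in which the signal appeared at some $s\le t$ and $t$ is a record. I would compute $\pr[H\mid I=i]\cdot\pr[I=i]$ for each $i$.
\begin{itemize}
\item If $i<t$, the history is impossible, because $t$ would be a record after the overall best.
\item If $i>t$, the relative ranks of the first $t$ items are uniform, so the rank pattern has probability $1/t!$.
\item If $i=t$, only the first $t-1$ ranks are free, giving probability $1/(t-1)!$, and $t$ is a record automatically.
\end{itemize}
The signal contributes the factor $(s^\alpha-(s-1)^\alpha)/i^\alpha$, whose numerator does not depend on $i$. The factors $1/t!$, $1/n$ and the numerator cancel upon normalisation. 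This leaves
\[
\pr[I=t\mid H]\propto t\cdot t^{-\alpha}=t^{1-\alpha}, \qquad \pr[I=i\mid H]\propto i^{-\alpha}\quad (i>t),
\]
with normaliser $\Psi_t$. In particular, stopping at $t$ succeeds with probability $t^{1-\alpha}/\Psi_t$, and the whole posterior depends only on $t$ and not on $s$ or the earlier ranks. This is the Markov property that makes $\Pi_t$ a function of $t$ alone, with $\Pi_n=1$, i.e.\ $\Phi_n=n^{1-\alpha}$.

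\textbf{Step 2: transition to the next record.} If we continue at $t$, the next relevant state is the next record time $u>t$. When $I=t$ there is no later record and continuation wins nothing, so this posterior mass drops out. For $i>t$ I would compute the probability that the next record after $t$ occurs exactly at $u$:
\begin{itemize}
\item For $i>u$, the items before $i$ are in uniformly random relative order, so the probability is the classical $t/(u(u-1))$.
\item For $i=u$, we only need no record in $t+1,\dots,u-1$, which has probability $t/(u-1)$.
\end{itemize}
Weighting by the unnormalised posterior gives
\[
\sum_{i>u} i^{-\alpha}\frac{t}{u(u-1)}+u^{-\alpha}\frac{t}{u-1}=\frac{t}{u(u-1)}\Psi_u ,
\]
so the probability that the next record is $u$ equals $\frac{t}{u(u-1)}\,\Psi_u/\Psi_t$. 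The history at $u$ is again of the type in Step 1, with value $\Pi_u$.

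\textbf{Step 3: assembling the recursion.} By the principle of optimality,
\[
\Pi_t=\max\Bigl\{\frac{t^{1-\alpha}}{\Psi_t},\ \sum_{u>t}\frac{t\,\Psi_u}{u(u-1)\Psi_t}\,\Pi_u\Bigr\}.
\]
Multiplying by $\Psi_t$ yields \eqref{eq:bellman-random-order}.

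The main obstacle is the conditioning bookkeeping in Steps 1--2: one must justify carefully that the relative-rank likelihoods are $1/t!$ versus $1/(t-1)!$, and that the signal likelihood factors out, so that the state is genuinely summarised by $t$. Once that is in place, the rest is routine.
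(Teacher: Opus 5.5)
Your proposal is correct and follows essentially the same route as the paper's proof. Both restrict to policies that stop only at records after the signal, derive the posterior $t^{1-\alpha}/\Psi_t$ versus $i^{-\alpha}/\Psi_t$ from the $1/(t-1)!$ versus $1/t!$ rank likelihoods (the signal factor cancels), combine the next-record probabilities $t/(u-1)$ and $t/(u(u-1))$ into $\frac{t}{u(u-1)}\Psi_u/\Psi_t$, and multiply the resulting maximum by $\Psi_t$. You also make explicit two points the paper leaves implicit: the impossibility of $i<t$, and $\Pi_n=1$.
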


The recursion is a posterior comparison. 
Suppose the signal has already arrived and the current time~$t$ is a record. 
Conditioned on this, the exact signal time no longer matters: 
we obtain the same likelihood for every possible location of the maximum.
In the unnormalized value~$\Phi_t=\Psi_t\Pi_t$, stopping at~$t$ gives~$t^{1-\alpha}$, the posterior weight that the current record is the overall maximum.
If we continue, the only relevant future times are future records.
The probability that the next record occurs at~$u>t$ is the standard record factor
\(
 \frac t{u(u-1)},
\)
and the value from that point on is~$\Phi_u$.
Thus, the Bellman recursion simply compares stopping now with waiting for the next record.

A simplified view of the recursion asks if one should stop at~$t$, assuming that any later records will imply immediate stopping. 
In this case, the continuation value becomes 
\(
    C_t := t\sum_{u=t+1}^n \frac{1}{(u-1)u^\alpha},
\)
whereas the stopping value is~$t^{1-\alpha}$. 
Hence, comparing $G_t := t^{1-\alpha} - C_t$ to $0$ governs the local decision. 
Exactly here the parameter~$\alpha$ comes into play: the larger~$\alpha$, the faster future weights~$u^{-\alpha}$ (and the future probabilities for records) decay and the more likely the current record is the maximum. 

For~$\alpha\ge1$, Bernoulli's inequality ensures that the continuation value telescopes:
\[
    C_t
    \le
    t\sum_{u=t+1}^n\bigl((u-1)^{-\alpha}-u^{-\alpha}\bigr)
    < t^{1-\alpha}.
\]
Thus,~$G_t>0$ for each~$t$, and stopping at the first record after the signal is better; $\pol(S)$ is optimal. 
We can evaluate the success probability and its asymptotic behavior as seen in the warm-up.

For~$0<\alpha<1$, $G_t$ switches signs from negative to positive exactly once. 
On the scale~$t=\floor*{cn}$,
\[
    n^{\alpha-1}G_t
    \longrightarrow
    c^{1-\alpha}-c\int_c^1 y^{-(\alpha+1)}\,dy
    =
    \frac{c-(1-\alpha)c^{1-\alpha}}{\alpha}\, .
\]
This limit is negative for~$c<(1-\alpha)^{1/\alpha}$ and positive for~$c>(1-\alpha)^{1/\alpha}$
and we can show that the sign change induces a deterministic cutoff $k_n$ with~$\frac{k_n}n\to(1-\alpha)^{1/\alpha}$: Records after the signal but before the cutoff point should still be skipped, while records after both the signal and the cutoff point should be accepted, making $\polmax{S,k_n}$ optimal. 
Again, we can compute the optimal success probability and derive the theorem below.

\begin{restatable}{theorem}{thmOptimalRandomOrder}\label{thm:random-order-tight}
	For $\alpha\ge 1$, the policy $\pol(S)$ %
    is optimal.
    For $0<\alpha<1$, there is a threshold $k_n\in[n]$ such that the policy $\pol(\max\{S,k_n\})$ is optimal and $\frac{k_n}{n} \xrightarrow{n \to \infty} (1-\alpha)^{1/\alpha}$. 
    If $\opt_n(\alpha)$ denotes the optimal success probability on instances of length~$n$, then 
	\[
		\opt_n(\alpha) \xrightarrow{n \to \infty} \opt(\alpha):=
		\begin{cases}
			\displaystyle \frac{\alpha+(1-\alpha)^{1+1/\alpha}}{\alpha+1}, & 0<\alpha<1,  \\[1.2ex]
			\displaystyle \frac{\alpha}{\alpha+1},                         & \alpha\ge 1 \, .
		\end{cases}
	\]
\end{restatable}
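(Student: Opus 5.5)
Starting from the Bellman recursion of \Cref{lem:random-order-bellman}, I will show that the optimal decision at a record time $t$ (with the signal already seen) is \emph{monotone}: there is a cutoff $k_n$ such that stopping is optimal exactly when $t\ge k_n$. By the lemma, an optimal policy stops only at records after the signal, so monotonicity means the optimal policy is $\polmax{S,k_n}$.

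The plan is a one-step-lookahead (monotone stopping problem) argument. Write the continuation value under "stop at the next record" as $C_t=t\sum_{u=t+1}^n \frac{1}{(u-1)u^\alpha}$ and set $G_t=t^{1-\alpha}-C_t$.
\begin{enumerate}
\item \textbf{Sign pattern of $G_t$.} I will show that $\{t: G_t\ge 0\}$ is an up-set $\{k_n,\dots,n\}$. Dividing by $t$ gives
\[
G_t/t=t^{-\alpha}-\sum_{u>t}\frac{1}{(u-1)u^\alpha}.
\]
Comparing consecutive values,
\[
G_{t}/t-G_{t-1}/(t-1)=t^{-\alpha}-(t-1)^{-\alpha}+\frac{1}{(t-1)t^{\alpha}}.
\]
For $\alpha<1$, Bernoulli-type bounds give $(t-1)^{-\alpha}-t^{-\alpha}\le \alpha (t-1)^{-1}t^{-\alpha}\cdot(\text{const})$, and I will use this to check that the difference is nonnegative. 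Nonnegativity makes $G_t/t$ nondecreasing, so the sign changes at most once; since $G_n>0$, the cutoff $k_n$ exists. For $\alpha\ge1$, the telescoping bound already stated in the text gives $G_t>0$ for all $t$, so $k_n=1$.
\item \textbf{One-step lookahead is optimal.} By backward induction on the recursion \eqref{eq:bellman-random-order}, I will show that $\Phi_u=u^{1-\alpha}$ for all $u\ge k_n$. Indeed, if this holds for all $u>t\ge k_n$, then the continuation term equals $tC_t\cdot$(normalization), which is at most $t^{1-\alpha}$ because $G_t\ge0$. For $t<k_n$, the continuation value is at least $C_t>t^{1-\alpha}$, since $\Phi_u\ge u^{1-\alpha}$ always. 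So continuing is optimal there.
\item \textbf{Asymptotics of the cutoff.} The limit of $n^{\alpha-1}G_{\lfloor cn\rfloor}$ computed in the text has a unique root at $c^*=(1-\alpha)^{1/\alpha}$, with the limit negative before and positive after. Together with monotonicity, this gives $k_n/n\to c^*$.
\end{enumerate}

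Finally, I will compute the success probability of $\polmax{S,k}$ with $k=\lfloor cn\rfloor$. Condition on $I=i$. If $i<k$, the policy behaves as the classical threshold rule, and only the event $S\le i$ matters. For $i\ge k$, the policy succeeds if $S=i$, or if the best of the first $i-1$ items lies before $\max\{S,k\}$, which has probability $\frac{\max\{s,k\}-1}{i-1}$. Weighting by $\frac{s^\alpha-(s-1)^\alpha}{i^\alpha}$ and passing to Riemann integrals with $x=i/n$ and $b\sim\Beta(\alpha,1)$ gives
\[
\int_c^1 \mathbb E\!\left[\max\{b,c/x\}\right]dx=\int_c^1\left(\frac{\alpha}{\alpha+1}+\frac{(c/x)^{\alpha+1}}{\alpha+1}\right)dx .
\]
Evaluating this at $c=c^*$ should simplify to $\frac{\alpha+(1-\alpha)^{1+1/\alpha}}{\alpha+1}$, using $c^{*\alpha}=1-\alpha$. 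Controlling the error terms uniformly is routine, in the same way as in the warm-up.

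The main obstacle is step 1, the single-crossing or monotonicity claim for finite $n$: the discrete inequality needs care near small $t$. If the difference argument fails, my fallback is to show that $t\mapsto G_t/t^{1-\alpha}$ is increasing, comparing sums with integrals of $y^{-\alpha-1}$.
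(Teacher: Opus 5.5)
Your proposal is correct and follows the paper's proof essentially step by step: the same $G_t$ with $G_t/t$ increasing (this is in fact the exact identity $\frac{G_t}{t}-\frac{G_{t-1}}{t-1}=\frac{t^{1-\alpha}-(t-1)^{1-\alpha}}{t-1}>0$, so no extra care is needed at small $t$), backward induction giving $\Phi_t=t^{1-\alpha}$ above the cutoff, the limit of $n^{\alpha-1}G_{\lfloor cn\rfloor}$ to locate $k_n/n$, and the same Riemann-sum integrand $\frac{\alpha}{\alpha+1}+\frac{(c/x)^{\alpha+1}}{\alpha+1}$ for the success probability. (For $i<k$ the policy simply cannot succeed, rather than behaving like the classical rule, and your integral over $[c,1]$ already reflects this.) The only local difference is below the cutoff: you use $\Phi_u\ge u^{1-\alpha}$ to bound the continuation value below by $C_t>t^{1-\alpha}$, whereas the paper explicitly constructs $\widehat\Phi_t=C_n$ for $t<k_n$ and verifies that it solves the recursion; your argument is slightly shorter, while the paper's additionally identifies the value function.
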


\section{Random-order model: robustness}\label{sec:random-order-robustness}

The threshold $k_n$ in \Cref{thm:random-order-tight} depends on the exact knowledge of $\alpha$.
In general, $\alpha$ might not be known. 
In order to ensure robustness to parameter misspecification, we also investigate the asymptotic behavior of $\polmax{S,\lceil\beta n\rceil}$ for~$\beta\in[0,1]$ on instances with parameter~$\alpha$. We show that \[
    \prob{\success} 
        \xrightarrow{n\to\infty}
        f(\alpha,\beta) 
        = 
        \frac{\alpha}{\alpha+1}
		+
		\frac{1-\alpha}{\alpha}\beta
		-
		\frac{\beta^{\alpha+1}}{\alpha(\alpha+1)}.
\] 
Letting $\beta^\star(\alpha)$ denote the maximizer for $\alpha > 0$, 
we recover $f(\alpha,\beta^\star(\alpha))  = \opt(\alpha)$. 
More importantly, for an estimate~$\hat\alpha$ of~$\alpha$, we still obtain good bounds using $\beta^\star(\hat\alpha)$ (instead of $\beta^\star(\alpha)$).

\begin{restatable}[Smoothness and Robustness]{theorem}{thmRobustness}\label{thm:conservative-misspecification}
	  Let $\alpha > 0$ and $\hat \alpha > 0$. %
	The policy $\pol(\max\{S,k_n\})$ with
	\(
		k_n=\ceil*{\beta^\star(\hat\alpha)n}%
	\)
	has an asymptotic success probability of at least
    \[
		g(\alpha,\hat\alpha)
		=
		\begin{cases}
			\displaystyle
			\frac{\alpha}{\alpha+1}
			+
			\frac{1-\alpha}{\alpha}(1-\hat\alpha)^{1/\hat\alpha}
			-
			\frac{(1-\hat\alpha)^{(\alpha+1)/\hat\alpha}}{\alpha(\alpha+1)},
			 & 0<\hat\alpha\le 1, \\[3ex]
			\displaystyle \frac{\alpha}{\alpha+1},
			 & \hat\alpha\ge 1.
		\end{cases}
	\]
    In particular, if $\hat\alpha\le \alpha$, then
	\(
		g(\alpha,\hat\alpha)\ge \opt(\hat\alpha) \ge \frac 1e.
	\)
\end{restatable}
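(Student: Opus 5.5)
The plan is to compute the exact limiting success probability $f(\alpha,\beta)$ of $\pol(\max\{S,\lceil\beta n\rceil\})$ for an arbitrary fixed $\beta\in[0,1]$, maximize it over $\beta$ to identify $\beta^\star$, and then substitute $\beta=\beta^\star(\hat\alpha)$. The inequality $g(\alpha,\hat\alpha)\ge\opt(\hat\alpha)$ will follow from a monotonicity property of $f$ in its first argument.

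\textbf{Step 1: conditional success probability.} Write $k=\lceil\beta n\rceil$ and condition on $I=i$ and $S=s$.
If $i<k$, the policy never reaches the best item and fails.
If $i\ge k$ and $s\ge k$, the warm-up analysis applies verbatim: the policy succeeds with probability $1$ if $s=i$, and with probability $\frac{s-1}{i-1}$ otherwise.
If $s<k\le i$, the effective threshold is $k$. The policy then succeeds exactly when the best of the first $i-1$ items arrives before time $k$, which has probability $\frac{k-1}{i-1}$.

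\textbf{Step 2: passage to the continuum.} Put $i\approx xn$ and $S\approx iB$ with $B\sim\Beta(\alpha,1)$. For $x\ge\beta$, the conditional success probability tends to
\[
\int_{\beta/x}^1 b\,\alpha b^{\alpha-1}\,db+\Big(\frac\beta x\Big)^{\alpha}\cdot\frac\beta x=\frac{\alpha}{\alpha+1}+\frac{(\beta/x)^{\alpha+1}}{\alpha+1}.
\]
Averaging over $x\in[\beta,1]$ and simplifying gives exactly
\[
f(\alpha,\beta)=\frac{\alpha}{\alpha+1}+\frac{1-\alpha}{\alpha}\beta-\frac{\beta^{\alpha+1}}{\alpha(\alpha+1)}.
\]
Making this rigorous is the main technical obstacle. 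The conditional probabilities are explicit sums, namely a Riemann sum for $\sum_s \frac{s-1}{i-1}\cdot\frac{s^\alpha-(s-1)^\alpha}{i^\alpha}$ as in the warm-up, plus the boundary term. I would bound their deviation from the limiting integrand by $O(1/i)$, uniformly over $i\ge k$. Summing over $i$ then gives an $O(\log n/n)$ error overall, since the ceiling in $k$ and the terms with small $i$ contribute only $O(1/n)$ each. I would handle the case $\beta=0$ directly via the warm-up computation.

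\textbf{Step 3: the maximizer.} Differentiating gives $\partial_\beta f=\frac{1-\alpha-\beta^\alpha}{\alpha}$, and $f$ is concave in $\beta$. Hence $\beta^\star(\alpha)=(1-\alpha)^{1/\alpha}$ for $\alpha<1$ and $\beta^\star(\alpha)=0$ for $\alpha\ge1$. Substituting $\beta^\star(\hat\alpha)$ into $f(\alpha,\cdot)$ yields exactly $g(\alpha,\hat\alpha)$ in both cases. The same substitution recovers $f(\alpha,\beta^\star(\alpha))=\opt(\alpha)$, consistent with \Cref{thm:random-order-tight}.

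\textbf{Step 4: conservative estimates.} For $\hat\alpha\le\alpha$ we have $g(\alpha,\hat\alpha)=f(\alpha,\beta^\star(\hat\alpha))$ and $\opt(\hat\alpha)=f(\hat\alpha,\beta^\star(\hat\alpha))$, so it suffices to show that $f(\cdot,\beta)$ is nondecreasing for each fixed $\beta$. Rewrite the integrand from Step 2 as
\[
1-\frac{1-y^{\alpha+1}}{\alpha+1}=1-\int_y^1 t^\alpha\,dt,\qquad y=\beta/x\in[0,1].
\]
Since $t^\alpha$ is nonincreasing in $\alpha$ for $t\in[0,1]$, the integrand is nondecreasing in $\alpha$ pointwise, and therefore so is $f$.

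Finally, $\opt(\hat\alpha)>\frac1e$ follows from \Cref{thm:random-order-tight}, because $\opt_n(\hat\alpha)$ is at least the success probability of the classic rule $\pol(n/e)$, which ignores the signal. Equivalently, $\opt(\hat\alpha)\ge f(\hat\alpha,1/e)$, and this can be checked directly to be at least $\frac1e$.
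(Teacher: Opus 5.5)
Your proposal is correct and follows the paper's proof essentially step for step. The shared steps are: the limit $f(\alpha,\beta)$ via the exact conditional success probability and a Riemann-sum argument, the concavity-based maximizer $\beta^\star$, substitution of $\beta^\star(\hat\alpha)$ to obtain $g$, and monotonicity of $f(\cdot,\beta)$ in $\alpha$. The only differences are minor. You read off the monotonicity from the integrand $1-\int_{\beta/x}^1 t^\alpha\,dt$, which also covers $\hat\alpha\ge 1$ without a case split, whereas the paper rewrites the closed form as $1-\beta-\frac1\alpha\int_\beta^1(1-t^\alpha)\,dt$. You also justify $\opt(\hat\alpha)\ge\frac1e$ by comparison with the signal-ignoring rule $\pol\big(\frac ne\big)$, which the paper's proof leaves implicit. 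That comparison yields only $\opt(\hat\alpha)\ge\frac1e$, not the strict inequality you wrote, but the non-strict version is all the theorem claims.
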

We highlight that for $\alpha \notin (0,1)$, the optimal policy does not require a threshold besides~$S$, and hence, the optimal success probability can be achieved only knowing $\alpha\geq 1$.
Thus, the more demanding regime is $\alpha < 1$, in which our guarantee~$g(\alpha,\hat\alpha)$ degrades smoothly in $|\alpha - \hat\alpha|$; cf. \cref{fig:adversarial-order}.

\section{Adversarial order}\label{sec:adversarial}

\begin{figure}[t]
	\centering
	\begin{subfigure}[t]{0.49\textwidth}
		\centering
		\includegraphics[width=0.9\linewidth]{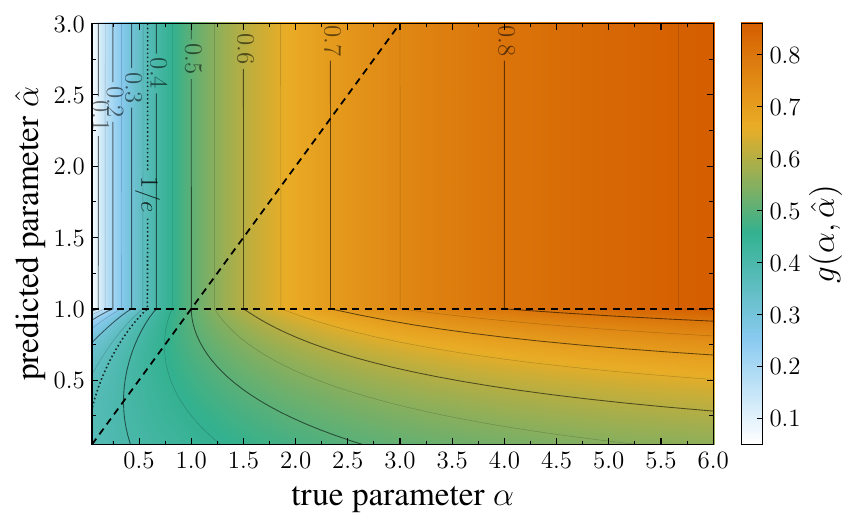}
	\end{subfigure}\hfill
	\begin{subfigure}[t]{0.49\textwidth}
		\centering
		\includegraphics[width=0.9\linewidth]{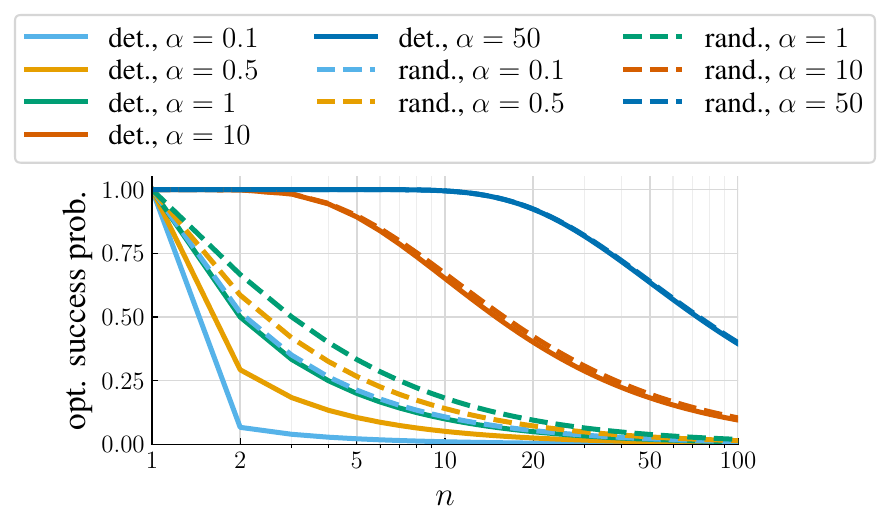}
	\end{subfigure}
	\caption{Left: Heatmap of the asymptotic guarantee $g(\alpha,\hat\alpha)$ from \Cref{thm:conservative-misspecification}. Right: Optimal success probability in the adversarial-order model. Solid lines show the det.\ opt.\ $1-(1-1/n)^\alpha$; dashed lines show the rand.\ opt.\ $n^\alpha/\sum_{j=1}^n j^\alpha$.}
	\label{fig:adversarial-order}
\end{figure}
In this section, we present our tight results for adversarial arrival orders and policies enhanced by a single $\alpha$-power signal.
Here, the unique maximum arrives at an adversarially chosen time $i^\star \in [n]$ 
and the policy receives an $\alpha$-power signal at time $S \in [i^\star]$.

To describe instances, instead of talking about the relative ranks of items, we will think about the item at time~$i$ as having a \emph{value} $v_i$. 
This allows us to represent an instance $\cI$ by its value vector~$v \in \mathbb N_0^n$.
In fact, all of our upper bounds on the optimal success probabilities in this section use the family~$\{\cI_i\}_{i=1}^n$ of \emph{hard} instances where $\mathcal I_i := (1,2,\ldots,i,0,\ldots,0) \in \mathbb N_0^n$
for $i \in [n]$.

\subsection{Randomized policies}

For randomized policies without information augmentation,
the best possible policy has success probability $\frac1n$ by stopping at a uniformly random time.
With an additional $\alpha$-power signal, the optimal randomized policy has significantly larger success probability. %

\begin{restatable}[Randomized optimum]{theorem}{thmRandomizedOptimum}\label{thm:adversarial-rand-opt}
Let $\alpha>0$. For adversarial arrival orders with an $\alpha$-power signal, 
the optimal success probability for randomized policies is
	\(
		\opt^{\mathrm{rand}}_{n}(\alpha)
		=
		n^\alpha / ( \sum_{j=1}^n j^\alpha ).
	\)
    The policy $\pol(\max\{R,S\})$ is optimal where $R\in[n]$ is independent of~$S$ and, for $r\in[n]$, 
	\[
		\pr[R \le r]
		=
		\frac{n^\alpha}{\sum_{j=1}^n j^\alpha}
		\cdot
		\frac{\sum_{j=1}^r j^\alpha}{r^\alpha} \,.
	\]        
\end{restatable}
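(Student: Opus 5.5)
Write $W := \sum_{j=1}^n j^\alpha$ and $p := n^\alpha/W$. The proof has two halves: the policy $\pol(\max\{R,S\})$ achieves $p$ on every instance, and no randomized policy beats $p$ on the hard family $\{\cI_i\}$.

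\textbf{Well-definedness of $R$.} We must check that $F(r) := p\,\frac{\sum_{j\le r} j^\alpha}{r^\alpha}$ is a CDF. Clearly $F(n)=1$. Monotonicity amounts to
\[
\frac{\sum_{j\le r+1} j^\alpha}{(r+1)^\alpha} \ge \frac{\sum_{j\le r} j^\alpha}{r^\alpha},
\]
which is equivalent to $r^\alpha(r+1)^\alpha \ge \bigl((r+1)^\alpha - r^\alpha\bigr)\sum_{j\le r} j^\alpha$. This holds because $\sum_{j\le r} j^\alpha \le r\cdot r^\alpha$ and $r\bigl((r+1)^\alpha - r^\alpha\bigr) \le$ ... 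Not every $\alpha$ makes this bound immediate, so I expect to argue it instead through the ratio $\frac{1}{r^\alpha}\sum_{j\le r} j^\alpha$, i.e.\ the average of $(j/r)^\alpha$ times $r$. I will verify this carefully; it is a routine but necessary step.

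\textbf{Upper bound.} On $\cI_i$, every time $t\le i$ is a record, and the observation history up to time $t<i$ is identical across all instances $\cI_{i'}$ with $i'\ge t$, apart from the signal. Since signal distributions are nested, $\pr[S=s\mid i] = (s^\alpha-(s-1)^\alpha)/i^\alpha$, a randomized policy amounts to choosing, for each signal time $s$, a distribution $q_s(t)$ for $t\ge s$ over the stopping time (before the signal it may also stop, but that is weakly dominated). A policy succeeds on $\cI_i$ iff it stops exactly at $i$. Hence
\[
P_i = \sum_{s\le i} \frac{s^\alpha-(s-1)^\alpha}{i^\alpha}\, q_s(i).
\]
I then take the weighted average $\sum_i \lambda_i P_i$ with $\lambda_i = i^\alpha/W$. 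The factors $i^\alpha$ cancel, giving
\[
\frac{1}{W}\sum_s \bigl(s^\alpha-(s-1)^\alpha\bigr)\sum_{i\ge s} q_s(i) \le \frac{1}{W}\sum_s \bigl(s^\alpha-(s-1)^\alpha\bigr) = \frac{n^\alpha}{W},
\]
since the inner sum is at most $1$ and the outer sum telescopes to $n^\alpha$. So $\min_i P_i \le p$.

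\textbf{Achievability.} The policy $\pol(\max\{R,S\})$ succeeds on any instance with maximum at $i^\star=i$ whenever $\max\{R,S\} = i$, or whenever the first record from time $\max\{R,S\}$ onward is $i$. The worst case should be $\cI_i$, where it succeeds exactly when $\max\{R,S\}=i$. For general instances, success holds at least when $\max\{R,S\}=i$, since $i$ is a record and the policy starts accepting at $i$. The accepted item is the first record at or after $\max\{R,S\}$, and $i$ is a record, so if $\max\{R,S\}=i$ it stops at $i$. Therefore it suffices to show
\[
\pr[\max\{R,S\}=i] = F(i)\frac{i^\alpha}{i^\alpha} - F(i-1)\frac{(i-1)^\alpha}{i^\alpha} = \frac{p}{i^\alpha}\Bigl(\sum_{j\le i} j^\alpha - \sum_{j\le i-1} j^\alpha\Bigr) = p,
\]
using $\pr[S\le r\mid i] = r^\alpha/i^\alpha$ and independence. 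This identity is exactly what motivated the choice of $F$.

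The main obstacle is the monotonicity of $F$, which must hold for all $\alpha>0$. I will attempt it by showing that $r \mapsto r^{-\alpha}\sum_{j\le r} j^\alpha$ is nondecreasing, comparing terms via $(j/(r+1))^\alpha$ versus $(j/r)^\alpha$ after an index shift $j\mapsto j+1$:
\[
\sum_{j=1}^{r+1}\Bigl(\frac{j}{r+1}\Bigr)^\alpha \ge \sum_{j=1}^{r}\Bigl(\frac{j}{r}\Bigr)^\alpha.
\]
This follows termwise after pairing $j+1$ on the left with $j$ on the right, since $(j+1)/(r+1) \ge j/r$, with the left side keeping one extra term.
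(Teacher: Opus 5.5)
Your proposal is correct and follows the paper's proof. Achievability uses the same identity $\pr[\max\{R,S\}=i] = F(i) - \bigl(\tfrac{i-1}{i}\bigr)^\alpha F(i-1) = n^\alpha/\sum_{j=1}^n j^\alpha$, and the upper bound uses the same $i^\alpha$-weighted average over the hard family $\cI_i$, where the factors $i^\alpha$ cancel and $\sum_s \bigl(s^\alpha-(s-1)^\alpha\bigr)$ telescopes to $n^\alpha$.

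Your index-shift argument that $r\mapsto r^{-\alpha}\sum_{j\le r} j^\alpha$ is nondecreasing, so that $F$ is a genuine CDF, is correct and fills a step the paper leaves implicit. When you write it up, delete the abandoned first attempt at the start of that paragraph.
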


We can again analyze the asymptotic behavior of $\optrand_n(\alpha)$ for $n \to \infty$. 
\begin{restatable}{corollary}{coroRandomizedOptimum}\label{cor:adversarial-rand-regimes}
	For adversarial orders with an $\alpha$-power signal,
    $\opt^{\mathrm{rand}}_{n}(\alpha)=o(1)$ if $\alpha = o(n)$
    and
    \(
		\opt^{\mathrm{rand}}_{n}(\alpha) \to 1-e^{-c} 
	\) for $\alpha/n\to c\in(0,\infty)$.
\end{restatable}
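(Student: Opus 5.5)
By \Cref{thm:adversarial-rand-opt} it suffices to study the explicit quantity $\opt^{\mathrm{rand}}_{n}(\alpha)=n^\alpha/\sum_{j=1}^n j^\alpha$. I would rewrite it as
\[
\frac{1}{\opt^{\mathrm{rand}}_{n}(\alpha)}=\sum_{j=1}^n \Bigl(\frac jn\Bigr)^\alpha=\sum_{k=0}^{n-1}\Bigl(1-\frac kn\Bigr)^\alpha ,
\]
and handle both regimes through this sum. Each term lies in $(0,1]$ and is non-increasing in $k$. The two regimes differ only in how many terms remain of constant size.

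\textbf{Regime $\alpha=o(n)$.} I would show that the sum tends to infinity. For $0\le k\le K$ with $K=\floor{n/\max\{\alpha,1\}}$, use Bernoulli-type bounds. If $\alpha\ge1$, then $(1-k/n)^\alpha\ge 1-\alpha k/n\ge 1-\alpha K/n\ge 0$. A cleaner route is to keep only $k\le n/(2\max\{\alpha,1\})$. For those terms, $\log(1-x)\ge -2x$ for $x\le 1/2$ gives $(1-k/n)^\alpha\ge e^{-2\alpha k/n}\ge e^{-1}$. There are about $n/(2\max\{\alpha,1\})\to\infty$ such terms, because $\alpha=o(n)$. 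Hence the sum diverges and $\opt^{\mathrm{rand}}_{n}(\alpha)\to0$. For $\alpha<1$ every term is at least $1-k/n$, so the sum is at least $(n+1)/2$. Only $n\to\infty$ is needed here.

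\textbf{Regime $\alpha/n\to c\in(0,\infty)$.} I would show that the sum converges to $\sum_{k\ge0}e^{-ck}=1/(1-e^{-c})$, which gives the claimed limit $1-e^{-c}$.

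Pointwise convergence is immediate: for fixed $k$, $(1-k/n)^\alpha=\exp(\alpha\log(1-k/n))\to e^{-ck}$, since $\alpha\log(1-k/n)=-\alpha k/n+O(\alpha k^2/n^2)$.

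For dominated convergence I would use $\log(1-x)\le -x$. This gives $(1-k/n)^\alpha\le e^{-\alpha k/n}\le e^{-ck/2}$ for all $k$ once $\alpha/n\ge c/2$, i.e., for all large $n$. This dominating sequence is summable, and extending each term by zero for $k\ge n$ is harmless. The limit of the sum therefore equals $1/(1-e^{-c})$.

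The only mildly delicate step is this uniform domination, which lets us exchange limit and sum. The inequality $\log(1-x)\le -x$ handles it directly, so I do not expect a real obstacle.
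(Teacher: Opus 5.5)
Your proof is correct. In the regime $\alpha/n\to c$ you follow the paper's route: you rewrite $1/\opt^{\mathrm{rand}}_{n}(\alpha)$ as $\sum_{k=0}^{n-1}(1-k/n)^\alpha$ and take termwise limits. You also supply the summable majorant $(1-k/n)^\alpha\le e^{-\alpha k/n}\le e^{-ck/2}$, which justifies exchanging the limit with the sum. The paper goes from pointwise convergence of each term straight to convergence of the sum without this step, so your version is more complete here.

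In the regime $\alpha=o(n)$ the two routes differ. The paper uses the integral comparison $\sum_{j=1}^n j^\alpha\ge\int_0^n x^\alpha\,dx=n^{\alpha+1}/(\alpha+1)$. This gives the explicit bound $\opt^{\mathrm{rand}}_{n}(\alpha)\le(\alpha+1)/n$ for every $\alpha>0$, with no case distinction. You instead count the roughly $n/(2\max\{\alpha,1\})$ terms that are at least $e^{-1}$. That yields the same order $O(\max\{\alpha,1\}/n)$, but with a constant worse by a factor of about $e$ and more work.

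Your abandoned Bernoulli attempt and the separate remark for $\alpha<1$ are redundant. The $\max\{\alpha,1\}$ argument already covers every $\alpha>0$, so both can be dropped.
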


\subsection{Deterministic policies}

Without any information augmentation, it is easy to see that no
deterministic policy can achieve positive success probability, even for $n=2$.
In the $\alpha$-power signal model this picture changes completely.
We show that even for small values of~$\alpha$ deterministic policies can achieve
positive success probability and, asymptotically, they even match
the performance of randomized policies for large enough values of $\alpha$.
Note that, in our model, no policy gains by stopping before the signal.

In the $\alpha$-power model, the deterministic optimum again admits a closed-form exact characterization.

\begin{restatable}[Deterministic optimum]{theorem}{thmDeterministicOptimum}\label{thm:adversial-det}
	Let $\alpha>0$. 
    For adversarial orders with an $\alpha$-power signal, policy $\pol(S)$ is optimal with success probability
	\(
		\opt^{\mathrm{det}}_{n}(\alpha)
		=
		1-\left(1-\frac{1}{n}\right)^\alpha.
	\)
\end{restatable}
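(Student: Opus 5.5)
The plan is to prove a matching lower and upper bound, both via the increasing structure of the hard instances $\cI_i$. We use the remark preceding the theorem that no policy gains by stopping before the signal, so we may restrict to policies that stop at or after time $S$.

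\textbf{Lower bound for $\pol(S)$.} Fix any instance with best item at $i^\star$. Whenever $S=i^\star$, the item at time $S$ is the overall maximum, hence a record, and $\pol(S)$ accepts it. Therefore
\[
\pr[\success]\ \ge\ \pr[S=i^\star]\ =\ 1-\Bigl(\tfrac{i^\star-1}{i^\star}\Bigr)^{\alpha}\ \ge\ 1-\Bigl(1-\tfrac1n\Bigr)^{\alpha}.
\]
The last inequality holds because $1-1/i$ is increasing in $i$, so the worst case is $i^\star=n$.

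\textbf{Upper bound for every deterministic policy.} We restrict the adversary to the family $\{\cI_i\}_{i=1}^n$. On $\cI_i$ with signal $s\le i$, the history up to time $i$ consists of the increasing values $1,\dots,i$ together with the signal at $s$. This history is identical across all instances $\cI_j$ with $j\ge i$. Hence a deterministic policy is described, for each $s$, by a single time $\tau(s)\ge s$: the time at which it stops on the increasing prefix when the signal came at $s$. It then succeeds on $\cI_i$ exactly when $\tau(s)=i$. Set $T_i:=\{s\le i:\tau(s)=i\}$. These sets are pairwise disjoint, and the success probability on $\cI_i$ equals $\sum_{s\in T_i}\pr[S=s\mid I=i]$.

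Suppose, for contradiction, that every $\cI_i$ has success probability strictly larger than $p^\ast:=1-(1-1/n)^\alpha$. We show by induction on $i$ that $\tau(i)=i$ and $T_i=\{i\}$.
\begin{itemize}
\item For $i=1$, the only possible signal is $1$. So $T_1$ must be nonempty, which forces $T_1=\{1\}$.
\item For the inductive step, all $s<i$ already satisfy $\tau(s)=s\ne i$, so $T_i\subseteq\{i\}$. It must be nonempty, since its success probability exceeds $p^\ast>0$. Thus $T_i=\{i\}$.
\end{itemize}
At $i=n$ this gives success probability exactly $\pr[S=n\mid I=n]=p^\ast$ on $\cI_n$, which is not strictly larger than $p^\ast$, a contradiction. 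Hence some $\cI_i$ yields at most $p^\ast$, and combining with the lower bound, $\pol(S)$ is optimal with value $1-(1-1/n)^\alpha$.

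The main obstacle is making the reduction to the stopping-time function $\tau$ rigorous. This is the step where we invoke the remark that stopping before the signal never helps, so we can assume $\tau(s)\ge s$. One also has to argue carefully that on $\cI_i$ the policy's behavior up to time $i$ is determined by $s$ alone, since the zeros after time $i$ are seen only after any successful stop.
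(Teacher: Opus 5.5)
Your proposal is correct and essentially matches the paper's proof: the lower bound comes from $\pr[S=i^\star]$, and the upper bound reduces any deterministic policy to a stopping map $\tau$ on the increasing prefix of $\mathcal{I}_n$ and shows $\tau$ must be the identity. The only difference is in that last step. The paper argues that positive success on every $\mathcal{I}_i$ makes $\tau$ surjective, hence a bijection, hence the identity via $\tau(s)\ge s$ and a sum comparison; your forward induction reaches $\tau=\mathrm{id}$ equivalently and a bit more directly.
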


Turning again to the limit $n\to \infty$, we observe that $\optdet_n(\alpha)$ behaves as $\optrand_n(\alpha)$. 
\begin{restatable}{corollary}{coroDeterministicOptimum}\label{cor:adversarial-det-regimes}
	For the adversarial orders with an $\alpha$-power signal it holds that 
		 if $\alpha=o(n)$, then $\opt^{\mathrm{det}}_{n}(\alpha)=o(1)$, and 
         if $\alpha/n\to c\in(0,\infty)$, then
		      \(
			      \opt^{\mathrm{det}}_{n}(\alpha) \to 1-e^{-c}.
		      \)
\end{restatable}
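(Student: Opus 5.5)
Since $\opt^{\mathrm{det}}_{n}(\alpha)=1-(1-\frac1n)^\alpha$ is given in closed form by \Cref{thm:adversial-det}, the corollary only requires an asymptotic analysis of this expression; no further properties of policies are needed. I will rewrite it as $1-\exp\bigl(\alpha\log(1-\frac1n)\bigr)$ and control the exponent. Using $\log(1-x)=-x+O(x^2)$ for $x=\frac1n\to0$, we get
\[
	\alpha\log\Bigl(1-\frac1n\Bigr)=-\frac{\alpha}{n}\Bigl(1+O\Bigl(\frac1n\Bigr)\Bigr).
\]

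For the regime $\alpha/n\to c\in(0,\infty)$, the right-hand side converges to $-c$. Since $\exp$ is continuous, $\opt^{\mathrm{det}}_{n}(\alpha)\to 1-e^{-c}$ follows immediately.

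For the regime $\alpha=o(n)$, the exponent tends to $0$, so $(1-\frac1n)^\alpha\to1$ and hence $\opt^{\mathrm{det}}_{n}(\alpha)\to0$. I will also record the cleaner elementary bound, which makes this explicit and avoids any limit bookkeeping. For $\alpha\ge1$, Bernoulli's inequality gives $(1-\frac1n)^\alpha\ge 1-\frac{\alpha}{n}$. For $0<\alpha<1$, concavity of $x\mapsto x^\alpha$ together with $1-\frac1n\in[0,1]$ gives $(1-\frac1n)^\alpha\ge 1-\frac1n$. In both cases
\[
	\opt^{\mathrm{det}}_{n}(\alpha)\le\frac{\max\{\alpha,1\}}{n}=o(1).
\]
Here $\alpha$ may depend on $n$; the argument uses only that $\alpha/n\to0$ in the first case and $\alpha/n\to c$ in the second.

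No step is a genuine obstacle. The only point needing care is that $\alpha$ varies with $n$, so the $O(1/n)$ error term must be uniform. It is: the expansion error is $\frac{\alpha}{n}\cdot O(\frac1n)$, which vanishes whenever $\alpha/n$ stays bounded. The bound $-x-x^2\le\log(1-x)\le -x$ for $x\le\frac12$ makes this rigorous.
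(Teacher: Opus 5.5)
Your proposal is correct and follows the paper's proof: the paper also rewrites $\opt^{\mathrm{det}}_{n}(\alpha)=1-\exp\bigl(\alpha\log(1-\frac1n)\bigr)$ and uses $\log(1-\frac1n)=-\frac1n+O(\frac1n)^{2}$ to get an exponent of $o(1)$ in the first regime and a limit of $-c$ in the second. Your additional explicit bound $\opt^{\mathrm{det}}_{n}(\alpha)\le \max\{\alpha,1\}/n$, obtained from Bernoulli's inequality and concavity, is valid. It is a tidy non-asymptotic alternative for the $\alpha=o(n)$ case, analogous to the $(\alpha+1)/n$ bound the paper uses for the randomized corollary.
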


\section{Multiple uniform signals versus one $\alpha$-power signal}
\label{sec:multiple-vs-alpha}

In this section, we investigate whether a policy for adversarial arrival can exploit receiving $m$ i.i.d. uniform signals in $[i^*]$
instead of receiving a single $m$-power signal, which follows the law of the maximum of those independent signals.
Let
$S_1,\ldots,S_m$ be independently sampled from the uniform distribution over $[i^\star]$ and
$L:=\max_{j\in[m]} S_j$.
Thus, $L$ satisfies $\pr[L\le \ell \mid i^\star=i] = \left(\frac{\ell}{i}\right)^m$.
Revealing only signal $L$ is exactly the same as revealing a single $\alpha$-power signal with parameter
$\alpha=m$.

\subsection{Randomized policies: the last signal is enough}

We first show that for randomized policies under adversarial arrival, it does not matter whether all $m$ signals are known or only the latest signal is known. 
That is, the full signal history does not imply additional worst-case power beyond
the single $\alpha$-power signal with $\alpha=m$. 
For $m\in \mathbb N$, let $\optrandfull(m)$ denote the optimal success probability \emph{with} access to all~$m$ signals.

\begin{restatable}{theorem}{thmRandomizedLastSignal}\label{lem:last-signal-rand}
	For all integers $n,m \geq 1$, it holds that
	\(
		\optrand_{n}(m)
		=
		\optrandfull(m).
	\)
\end{restatable}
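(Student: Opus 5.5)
The inequality $\optrand_n(m)\le\optrandfull(m)$ is immediate: a full-history policy can ignore all signals except $L=\max_j S_j$, and $L$ has exactly the law of an $m$-power signal. It therefore suffices to prove $\optrandfull(m)\le n^m/\sum_{j=1}^n j^m$. The matching lower bound then comes from \Cref{thm:adversarial-rand-opt} with $\alpha=m$.

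For the upper bound I would use Yao's principle on the hard family $\{\cI_i\}_{i=1}^n$. The adversary picks $\cI_i$ with probability $q_i$, and I then bound the success probability of every deterministic full-history policy against this distribution. On $\cI_i$ the prefix $1,2,\dots,t$ looks identical across all $i\ge t$, so a deterministic policy's only information at time $t\le i$ is the set of signal times already revealed. Since every item up to time $i$ is a record, the policy succeeds on $\cI_i$ exactly when it stops at time $i$.

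The key step is a likelihood computation. Condition on a history in which the signals seen by time $t$ form a multiset $H$ of size $k\le m$ in $[t]$, and the remaining $m-k$ signals are known to exceed $t$. Under $\cI_i$ with $i\ge t$, the probability of this history is $i^{-m}(i-t)^{m-k}$ times a factor independent of $i$. In particular, if all $m$ signals have been seen ($k=m$), the posterior weight of $\cI_i$ is proportional to $q_i i^{-m}$ for $i\ge t$, the same form as in the single-signal case. If $k<m$, then $i=t$ has zero likelihood, so stopping at $t$ cannot succeed. Hence, without loss of generality, a deterministic policy stops only after all signals have arrived, that is, at some time $\tau(H)\ge L$ determined by the full multiset $H$ of signal times.

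Next I would write the success probability as
\[
\sum_i q_i\, i^{-m}\sum_{H:\max H\le i} c(H)\,\mathbf 1[\tau(H)=i],
\]
where $c(H)$ is the multinomial count of ordered signal vectors with value multiset $H$. Each $H$ contributes its weight $c(H)$ at the single index $i=\tau(H)\ge \max H$. With $q_i\propto i^m-(i-1)^m$, the quantity $q_i i^{-m}$ should be nonincreasing in $i$, so moving each $\tau(H)$ down to $\max H$ can only help. This collapses the problem to the policy $\pol(L)$, or more carefully to the single-signal bound, giving $\sum_H c(H)\,q_{\max H}(\max H)^{-m}=\sum_\ell q_\ell\bigl(\ell^m-(\ell-1)^m\bigr)\ell^{-m}$.

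The main obstacle is choosing $q$ so that this expression is at most $n^m/\sum_j j^m$. I would take $q$ to be the adversary's equalizing distribution from the proof of \Cref{thm:adversarial-rand-opt}, i.e.\ the dual optimal distribution for $\alpha=m$, instead of guessing it. That proof already bounds every deterministic single-signal policy by $n^m/\sum_j j^m$ against $q$. Its policies are exactly the rules "stop at $\tau(\ell)\ge\ell$", and the full-history policy above induces such a rule after averaging over $H$ given $\max H=\ell$, so the bound transfers directly. Since a randomized full-history policy mixes in extra randomness, it is likewise a mixture of single-signal randomized policies, which closes the argument.
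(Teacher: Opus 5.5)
Your argument goes through, but only after you discard the one prior you actually write down. With $q_i\propto i^m-(i-1)^m$ the resulting bound is too weak. For $m=1$, $n=2$ this prior is uniform, and $\pol(L)$ wins with probability $\frac12\cdot 1+\frac12\cdot\frac12=\frac34>\frac23=n^m/\sum_{j}j^m$.

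The prior that works is $q_i=i^m/\sum_{j}j^m$. It is implicit in the proof of \Cref{thm:adversarial-rand-opt}, which shows $\sum_i i^\alpha p_i\le n^\alpha$. Under this prior $q_i i^{-m}$ is constant, so your step of moving $\tau(H)$ down to $\max H$ does nothing: every rule with $\tau(H)\ge\max H$ earns exactly $\sum_H c(H)/\sum_j j^m=n^m/\sum_j j^m$. Write this line over labeled tuples $s\in[n]^m$ instead of multisets, and apply it directly to a randomized full-history policy. You then get the paper's proof, $\sum_i i^m p_i=\sum_{s\in[n]^m}\sum_{i\ge L(s)}q_{s,i}\le n^m$, with no Yao step and no reduction to deterministic policies.

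The genuinely different part of your proposal is the transfer in the last paragraph, and it actually makes the choice of $q$ unnecessary. On $\cI_i$, the conditional law of the signal multiset given $L=\ell\le i$ does not depend on $i$. So a last-signal policy, on seeing $L=\ell$, can sample a surrogate history from that law and run the full-history policy; any stop the latter would have made before $L$ fails anyway. This reproduces every $p_i$, and the upper bound of \Cref{thm:adversarial-rand-opt} applies verbatim to the induced randomized rule.

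Compared with the paper's self-contained recount, this sufficiency argument buys three things:
\begin{itemize}
\item It explains \emph{why} the equality holds: $L$ is a sufficient statistic for $i^\star$.
\item It gives instance-wise domination on every adversarial instance, not only on the hard family.
\item It shows why the argument must fail for deterministic policies: sampling the surrogate history needs internal randomness. This is consistent with the strict gap $\frac12>\frac7{16}$ for $n=4$, $m=2$.
\end{itemize}
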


\subsection{Deterministic policies: the full history can help}

We move to deterministic policies. In contrast to randomized policies, we will see that 
using the full signal history can indeed improve over the optimal deterministic success probability.

Before moving to our formal results, we consider a small example. 
Let $n=4$ and $m=2$.
    With only the last signal, the optimal deterministic guarantee by \Cref{thm:adversial-det} is
	\(
		1-\left(\frac{3}{4}\right)^2=\frac{7}{16}.
	\)
    We define a deterministic policy by mapping each signal pair $(a,b)$ with $a \leq b$ 
    to a stopping time $t \geq b$:
 	\begin{alignat*}{8}	
		(1,1) &\mapsto 1, 
		& \qquad (1,2) &\mapsto 2, & \qquad 
		(1,3),(2,3),(3,3) &\mapsto 3, & \qquad 
        &\text{rest } &\mapsto 4.
    \end{alignat*}
	The resulting success probabilities on 
    the worst-case instances $\mathcal I_1,\mathcal I_2,\mathcal I_3,\mathcal I_4$ are
	$1, \frac12, \frac59, \frac12$,
	respectively. Hence, the worst-case success probability of this deterministic policy is $\frac12 > \frac{7}{16}$.

    For general $m,n\in\mathbb N$, we now characterize $\optdetfull(m)$, the optimal success probability \emph{with} access to all~$m$ signals, using an integer linear program (ILP). %
    Clearly, $\optdet_n(m) \leq \optdetfull(m)$.

In this richer setting, we define the \emph{history} as a vector $h=(c_1,\ldots,c_{\ell(h)},0,\ldots) \in (\{0\}\cup[m])^n$ with $c_{\ell(h)}>0$ and $\sum_{t=1}^{\ell(h)} c_t = m$, where $c_t \in [m]$ is
the number of signals at time $t$ and $\ell(h) \in [n]$ is the time of the last signal. 
Let $\mathcal H_m$ denote the set of all possible histories.
Setting $0! := 1$, we show that, on $\cI_i$, history $h \in \cH_m$ is observed with probability
\[
	\lambda_i(h) :=
	\begin{cases}
		 \frac{m!}{c_1! \cdot \ldots \cdot c_{\ell(h)}!}\cdot \frac{1}{i^m} & \text{if } \ell(h)\le i, \\
		 0                                                     & \text{if } \ell(h)>i.
	\end{cases}
\]

In the following ILP, observing variable $x_{h,t}=1$ means that the policy stops after history~$h$ at time~$t$. 
\begin{alignat}{4}
		\max \quad   & z    \hypertarget{ilp}{\tag{ILP}}                                                                                             \\
		\text{s.t.}\quad && \sum_{t=\ell(h)}^n x_{h,t} 
            & =1  & \quad & \forall h \in \mathcal H  \notag                           \\
		&& \sum_{h \in \mathcal H:\ell(h)\le i} \lambda_i(h)\,x_{h,i} 
            & \ge z &  & \forall i\in[n] \notag       \\
		&&  x_{h,t} 
            & \in\{0,1\}   &  & \forall h \in \mathcal H,\ \forall t\in\{\ell(h),\ldots,n\} \notag
\end{alignat}

\begin{restatable}[ILP characterization]{theorem}{thmILP}\label{thm:det-ilp}
	The optimal objective value of $(\hyperlink{ilp}{\ILP})$ is equal to $\optdetfull(m)$.
\end{restatable}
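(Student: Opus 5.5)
We prove the two inequalities $\optdetfull(m)\le \mathrm{val}(\hyperlink{ilp}{\ILP})$ and $\mathrm{val}(\hyperlink{ilp}{\ILP})\le\optdetfull(m)$. First we justify the formula for $\lambda_i(h)$. On $\cI_i$ the signals $S_1,\ldots,S_m$ are i.i.d.\ uniform on $[i]$. A history $h$ with counts $c_1,\ldots,c_{\ell(h)}$ corresponds to exactly $m!/(c_1!\cdots c_{\ell(h)}!)$ ordered tuples, and each tuple has probability $i^{-m}$ if $\ell(h)\le i$ and $0$ otherwise.

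Next comes the key structural reduction, which we expect to be the main obstacle. We must show that, without loss of generality, an optimal deterministic policy waits for the full signal history and then stops at a time depending only on $h$. A general deterministic policy observes relative values and the signals up to the current time. We argue that no policy gains by stopping before the last signal $\ell(h)$. Stopping at a time $t$ before time $\ell(h)$ is wasted on every instance where $h$ has positive probability: the maximum lies at $i^\star\ge \ell(h)>t$, so the item at $t$ is not the best. Modifying the policy to defer such stops therefore cannot decrease any instance's success probability.

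The delicate point is that the policy does not know when the last signal has occurred, since it only sees signals so far. We handle this with a worst-case argument, as in \Cref{thm:adversial-det}: it suffices to lower bound the guarantee against the hard family $\{\cI_i\}$, and against that family the upper bound holds. Concretely, for the inequality $\optdetfull(m)\le \mathrm{val}$, take any deterministic policy and restrict attention to instances $\cI_1,\ldots,\cI_n$. On these instances the value sequence is fixed, so the policy's stopping time is a deterministic function $\tau$ of the signal process, and it is adapted. Define $x_{h,t}=1$ iff $\tau=t$ on the full history $h$. Since the full history $h$ determines the signal path, this is well defined. If $\tau<\ell(h)$, we reassign that history to $t=\ell(h)$. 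This keeps the constraint $\sum_t x_{h,t}=1$ and does not decrease $\sum_h\lambda_i(h)x_{h,i}$, because success on $\cI_i$ requires $\tau=i\ge\ell(h)$. On $\cI_i$ the policy succeeds iff $\tau=i$, so its success probability is $\sum_{h:\ell(h)\le i}\lambda_i(h)x_{h,i}$. Taking $z$ as the minimum over $i$ gives a feasible ILP solution with value at least the policy's guarantee.

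For the reverse inequality, take an optimal ILP solution $x$ and build the following policy. Wait until after time $\ell(h)$, which is known only in hindsight, so we proceed as follows. At each time $t$, let $h_t$ be the signals seen so far. Stop at $t$ iff $t$ is a record, all $m$ signals have arrived with full history $h$, and $x_{h,t}=1$. This is implementable because $x_{h,t}=1$ requires $t\ge\ell(h)$, and all $m$ signals are then observed, which the policy can detect since it knows $m$.

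Finally, we must show this policy's worst case over \emph{all} adversarial instances is at least $z$. Fix an instance with maximum at $i^\star$. The signal law depends only on $i^\star$, so $h$ has probability $\lambda_{i^\star}(h)$. Whenever $x_{h,i^\star}=1$, the policy reaches $i^\star$ without having stopped earlier: by determinism of $x$ for that $h$, its unique designated stopping time is $i^\star$, so it skipped earlier times. Time $i^\star$ is a record, so the policy accepts the maximum. Hence the success probability is at least $\sum_h\lambda_{i^\star}(h)x_{h,i^\star}\ge z$, matching the ILP row for $i=i^\star$. This gives $\optdetfull(m)\ge z$ and completes the equality.
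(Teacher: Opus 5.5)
The overall plan matches the paper's proof. You use the same encoding of a policy by $x_{h,t}=1$ iff it stops at $t$ after history $h$, and the same history-driven policy for the converse. Your derivation of $\lambda_i(h)$ and your converse direction are fine. Additionally requiring that $t$ be a record is harmless: $i^\star$ is always a record, so the success event is still exactly $\{x_{h,i^\star}=1\}$.

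The gap is in the direction $\optdetfull(m)\le$ (ILP value). You argue that on the hard instances ``the value sequence is fixed, so the policy's stopping time is a deterministic function $\tau$ of the signal process,'' and then build one assignment $x$ from this $\tau$. But $\cI_1,\dots,\cI_n$ have different value sequences. A priori the policy has a different stopping rule $\tau_i$ on each $\cI_i$, whereas the ILP needs a single $x$ (one stopping time per history) that certifies all $n$ rows at once. Your step ``on $\cI_i$ the policy succeeds iff $\tau=i$'' silently identifies all the $\tau_i$.

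The missing idea is the prefix (indistinguishability) property of the hard family, which the paper invokes explicitly:
\begin{itemize}
\item let $\tau$ be the stopping rule on $\cI_n$;
\item $\cI_i$ and $\cI_n$ coincide on positions $1,\dots,i$;
\item for every history $h$ with $\ell(h)\le i$, the signal counts observed up to time $i$ are identical on both instances;
\item hence the policy makes the same decisions at times $1,\dots,i$ on $\cI_i$ as on $\cI_n$, and stops at $i$ on $\cI_i$ iff $\tau(h)=i$.
\end{itemize}
This is not cosmetic, because your reasoning never uses the structure $\cI_i=(1,\dots,i,0,\dots,0)$. Apply it verbatim to a family whose instances can be told apart before their maxima. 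There a deterministic policy can identify $i^\star$ early and win with probability $1$, so the same argument would yield a false bound.

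Once you add this step, your proof coincides with the paper's. You should also assign any history on which the policy never stops on $\cI_n$ to an arbitrary $t\ge\ell(h)$; this cannot make any row smaller than the true success probability.
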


For the special case of~$m = 2$ we give a full characterization of $\optdetfull(2)$ in \Cref{app:subsec:det-full}.
In particular, we show  
	\(
		\frac{6(n-1)}{(n+1)(2n+1)}
		\le
		\optdetfull(2) 
		\le
		\frac{6n}{(n+1)(2n+1)}
	\)
which implies $\optdetfull(2) =\frac{3}{n}+O\left(\frac{1}{n^2}\right)$. 
For small values of $n$, this notably improves over $\opt_n^{\mathrm{det}}(2) = \frac{2}{n} + O(\frac{1}{n^2})$.

\section{Numerical experiments}\label{sec:experiments}

We complement the theoretical results with synthetic experiments in the random-order model.
In \Cref{app:additional-experiments} we present more in-depth experiments, also in the adversarial-order model.

\paragraph{Setup.}
For $n=1000$ and signal parameter $\alpha$, we first sample a uniformly random permutation of ranks $[n]$ and let $I$ be the arrival time of the highest rank.
In the clean model, the signal $S \in[I]$ is then sampled from
$\Pr[S=s\mid I=i] = \frac{s^\alpha-(s-1)^\alpha}{i^\alpha}$.
The empirical success probabilities of the actual policies are averaged over $1000$ trials for each specific parameterization.
Shaded regions are approximate $95\%$ confidence intervals.
The middle panel reports empirical paired gains over the classic $\frac1e$-baseline policy; standard errors for these gains are included in the supplementary files.

\begin{figure}[t]
    \centering
    \includegraphics[width=\textwidth]{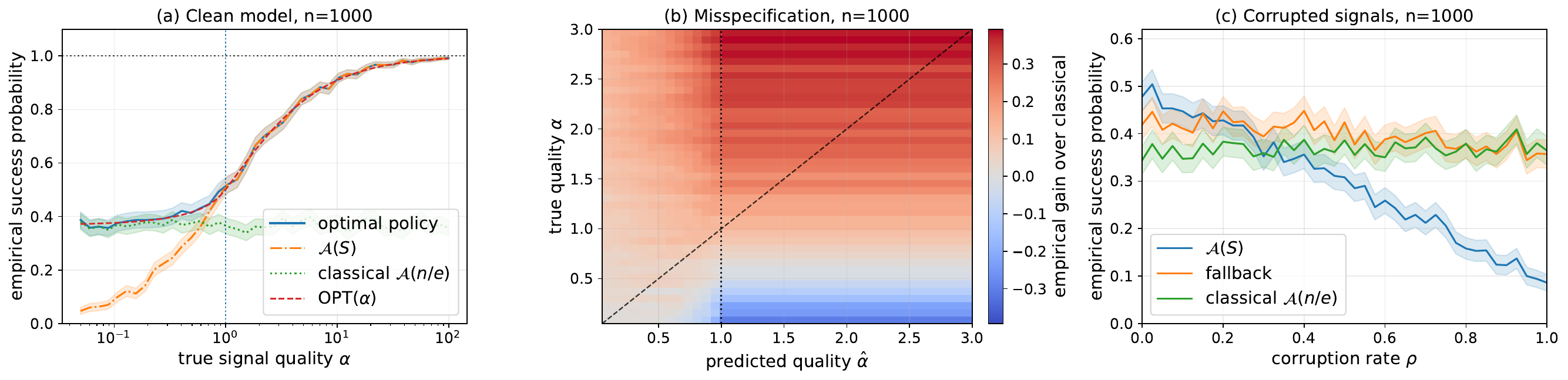}
    \caption{Empirical experiments for asynchronous predictions. Left: clean-model empirical success probabilities. Middle: empirical gain over the classic baseline under misspecified signal quality. Right: robustness to corrupted distributions.}
    \label{fig:main-experiments}
\end{figure}

\paragraph{Clean finite-sample behavior.}
In the left panel of \Cref{fig:main-experiments}, we compare the optimal policy from \Cref{thm:random-order-tight}, the signal-trusting policy $\pol(S)$, and the classic $\pol\big(\frac ne\big)$ policy. %
The function $\opt(\alpha)$ is shown for reference.
The simulation confirms that the asymptotic improvement over the classic threshold policy $\pol\big(\frac ne\big)$ is also visible at this finite scale; we present more results for smaller $n$ in \Cref{app:additional-experiments}.
It also illustrates the transition at $\alpha=1$: for late signals, $\pol(S)$ is optimal, while for early signals it is too aggressive and an additional waiting threshold is needed.

\paragraph{Misspecified signal quality.}
The middle panel evaluates the learning-augmented policy $\pol(\max\{S,k_n\})$ obtained by 
tuning the threshold $k_n$ to a predicted parameter $\hat\alpha$ as in \Cref{thm:conservative-misspecification}.
For each true $\alpha$, we generate $1000$ trials and 
evaluate all $\hat\alpha$-tuned policies on those same instances.
The heatmap plots the empirical success probability minus
the empirical success probability of the classic $\pol\big(\frac ne\big)$ policy.
The dashed diagonal is the correctly specified case, and the vertical line at $\hat\alpha=1$ marks the point after which all larger predictions induce the same policy $\pol(S)$.
The conservative region, where $\hat\alpha\le \alpha$, remains positive in accordance with \Cref{thm:conservative-misspecification}; overconfident predictions can be harmful when the true signal is early, 
but we only observe a smooth degradation.

\paragraph{Corrupted asynchronous signals.}
The right panel uses $\alpha=1$ and considers different \emph{levels of corruption}.
With probability $1-\rho$ the clean signal is observed.
With probability $\rho$, the signal is corrupted in one of three equally likely ways:
it is missed entirely,
replaced by a uniformly random false alarm in $[n]$,
or delayed to a uniformly random time after $I$ when such a time exists.
We compare the optimal policy $\pol(S)$ for $\alpha=1$ that trusts the signal, the classic $\pol\big(\frac ne\big)$ policy, and a fallback policy, which uses the optimal policy $\pol(S)$ that trusts the signal if the signal arrives before the classic threshold $\big\lceil \frac ne \big\rceil$, but otherwise reverts to the classic $\pol\big(\frac ne\big)$ policy.
The figure shows that blindly trusting corrupted signals can degrade rapidly, whereas the fallback policy gives a smoother interpolation between the learned and prediction-free regimes.

\section{Discussion and limitations}
\label{sec:discussion-limitations}

We study a deliberately weak form of side information: 
the algorithm receives no a priori prediction, but only an asynchronously delayed signal that arrives before the best item.
Our results show that even this timing-only information can be algorithmically useful.
In the random order model, it breaks the classic $\frac1e$ barrier via a simple threshold rule, 
and conservative underestimation of the signal quality remains fully robust.
In adversarial order, timing information alone is weaker, but randomization and access to signal history still provide meaningful improvements.

The main limitation is that our sharp theory assumes a clean signal: 
it always precedes the best item and follows the $\alpha$-power model.
Real predictions may be missing, delayed, or triggered by false positives.
Our experiments indicate that a fallback policy can mitigate such corruptions, 
but a full theoretical treatment of noisy asynchronous signals is left for future work.
Other natural extensions include richer secretary models with more general feasibility constraints.

\printbibliography

\appendix
\crefalias{section}{appendix}
\crefname{appendix}{Appendix}{Appendices}
\Crefname{appendix}{Appendix}{Appendices}

\section{Omitted proofs from \Cref{sec:random-order-optimal}}\label{app:random-order-optimal}
This section is dedicated to giving the formal proofs omitted from \Cref{sec:random-order-optimal}. 
We start by proving the Bellman recursion. 

\lemmaBellman*

\begin{proof}
	Stopping before the signal can never succeed because $S\le I$ almost surely.
	Likewise, stopping at a non-record cannot be optimal.
	Hence we may restrict attention to policies that only stop at records in $\{S,\ldots,n\}$.

	Fix a time $t$ and a history $\mathcal H_t$ satisfying the assumptions in the statement and let~$s$ be the realized signal time in~$\mathcal H_t$. 

    We start by bounding the probability of observing~$\hist_t$.
    The realized order~$\sigma_t$ of relative ranks up to time $t$ with~$R_t = 1$ and~$S=s$ determine $\mathcal H_t$.    
    Therefore, 
	\[
		\pr[\mathcal H_t,I=i] 
        = 
        \prob{S=s,\sigma_t, R_t =1, I = i}
        =
        \prob{S=s \mid I = i} \cdot \prob{\sigma_t \mid I=i} \cdot \prob{I=i},
    \]
    where we used that $\{S=s\}$ only depends on $\{I=i\}$ but not on~$\sigma_t$. Let \(g_s:=s^\alpha-(s-1)^\alpha. \)

    If $i>t$, the maximum has not yet appeared. Hence, the probability of observing a particular random order~$\sigma_t$ conditioned on~$\{I=i\}$ is~$\frac1{t!}$. 
    Thus, 
    \[
        \pr[\mathcal H_t,I=i] = \prob{S=s \mid I = i} \cdot \prob{\sigma_t \mid I=i} \cdot \prob{I=i}
		=
		\frac{g_s}{i^\alpha}\cdot \frac1{t!} \cdot \frac1n 
	\]
    
	If $i=t$, then the current item is the maximum. 
    Hence, the probability of observing a particular random order~$\sigma_t$ conditioned on~$\{I=t\}$ is the same as observing the prefix~$\sigma_{t-1}$. 
    Thus
     \[
        \pr[\mathcal H_t,I=t] = \prob{S=s \mid I = t} \cdot \prob{\sigma_t \mid I=t} \cdot \prob{I=t}
		=
		\frac{g_s}{t^\alpha}\cdot \frac1{(t-1)!} \cdot \frac1n 
	\]
	As the events~$\{I=i\}$ are disjoint, we can sum over~$i\in[n]$ and obtain
	\[
		\pr[\mathcal H_t]
		=
		g_s \cdot \frac1{t!} \cdot  \frac1n \cdot \left(t^{1-\alpha}+\sum_{i=t+1}^n i^{-\alpha}\right)
		=
		g_s \cdot \frac1{t!} \cdot  \frac1n \cdot  \Psi_t.
	\]
	Therefore,
	\begin{equation}
		\pr[I=t\mid \mathcal H_t] = \frac{\prob{I = t, H_t}}{\prob{H_t}} = \frac{t^{1-\alpha}}{\Psi_t}, \label{eq:bellman-succ-prob1}
	\end{equation}
	and
	\begin{equation}
		\pr[I=i\mid \mathcal H_t]=\frac{i^{-\alpha}}{\Psi_t}, \label{eq:bellman-succ-prob2}
	\end{equation}
	for all $i > t$,
	so the optimal success probability~$\Pi_t$ indeed depends only on $t$.

	If we stop at time $t$, we succeed exactly when $I=t$,
	so we win with probability $\Pi_t^{\mathrm{stop}}=\frac{t^{1-\alpha}}{\Psi_t}$.

	Let $U$ denote the next record after time $t$.
	If we continue at time $t$, then the only way to reach a future decision
	state is if $U = u > t$.
	Thus, our success probability if we continue at time $t$ is given by
	\[
		\Pi_t^{\mathrm{cont}}= \sum_{u=t+1}^n \pr[U=u \mid \mathcal H_t] \cdot \Pi_u \ .
	\]
	We next compute $\pr[U=u \mid \mathcal H_t]$.
    Observe that for any~$r \neq I$, we have $\pr[R_r \neq 1\mid r < I] = 1-\frac1r$. 
	
    If $I=u$, then $u$ is the next record after $t$ if there is no record in $\{t+1,\ldots,u-1\}$.
    Hence, we get
	\[
		\pr[U=u \mid I=u,\mathcal H_t] = \prod_{r=t+1}^{u-1} \biggl( 1 - \frac1r \biggr) = \frac{t}{u-1}
	\]
	If $I>u$, then $u$ is the next record after $t$ if there is no record in $\{t+1,\ldots,u-1\}$ and $u$ is a record itself, which happens with probability~$\frac1u$ conditioned on~$I > u$. Thus,
	\[
		\pr[U=u \mid I>u,\mathcal H_t] = \frac{1}{u} \prod_{r=t+1}^{u-1} \biggl( 1 - \frac1r \biggr) = \frac{t}{u(u-1)}
	\]
	Combining both with \eqref{eq:bellman-succ-prob1} and \eqref{eq:bellman-succ-prob2} gives
	\begin{align*}
		\pr[U=u \mid \mathcal H_t]
		 & = \pr[U=u \mid I=u,\mathcal H_t] \cdot \pr[I=u \mid \mathcal H_t] + \pr[U=u \mid I>u,\mathcal H_t] \sum_{i=u+1}^n \pr[I=i \mid \mathcal H_t] \\
		 & = \pr[U=u \mid I=u,\mathcal H_t] \cdot \frac{u^{-\alpha}}{\Psi_t} + \pr[U=u \mid I>u,\mathcal H_t] \sum_{i=u+1}^n \frac{i^{-\alpha}}{\Psi_t}       \\
		 & = \frac{t}{u(u-1)\Psi_t} \biggl(u^{1-\alpha} + \sum_{i=u+1}^n i^{-\alpha} \biggr)                                                               \\
		 & = \frac{t}{u(u-1)\Psi_t} \Psi_u \ .
	\end{align*}
	Since we already established that $\Pi_u$ only depends on $u$ for all $u \geq t$, we have
	\[
		\Pi_t^{\mathrm{cont}}= \sum_{u=t+1}^n \pr[U=u \mid \mathcal H_t] \cdot \Pi_u
		= \sum_{u=t+1}^n \frac{t}{u(u-1)\Psi_t} \Psi_u \cdot \Pi_u
		= \frac{1}{\Psi_t} \sum_{u=t+1}^n \frac{t}{u(u-1)} \Phi_u \ .
	\]

	In total, taking the maximum over both actions yields
	\[
		\Pi_t = \max\{\Pi_t^{\mathrm{stop}}, \Pi_t^{\mathrm{cont}}\}
		= \max \biggl\{ \frac{t^{1-\alpha}}{\Psi_t}, \frac{1}{\Psi_t} \sum_{u=t+1}^n \frac{t}{u(u-1)} \Phi_u \biggr\} \ .
	\]
	Multiplying by $\Psi_t$ gives \eqref{eq:bellman-random-order} and completes the proof of the lemma. 
\end{proof}

To solve the recursion just proved, we characterize the success probability of the previously defined threshold policies $\pol(\max\{S,k\})$. 
Recall that $\pol(\max\{S,k\})$ accepts the first record at or after the signal, but not before~$k$. 

\begin{restatable}[Threshold policy characterization]{lemma}{lemmaThresholdCharacterization}\label{lem:max-threshold-characterization}
	Let $k\in\{2,\ldots,n\}$.
	The success probability of the threshold policy $\pol(\max\{S,k\})$ is
	\[
		p_{n,\alpha}(k)
		=
		\frac{1}{n}\sum_{i=k}^n
		\left(
		1-\frac{1}{(i-1)i^\alpha}\sum_{r=k}^{i-1} r^\alpha
		\right).
	\]
    For $k = 1$, the success probability of $\pol(\max\{S,1\})$ is $p_{n,\alpha}(1) = p_{n,\alpha}(2) + \frac1n$.
\end{restatable}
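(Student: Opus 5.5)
We condition on $I=i$ and compute $\pr[\success\mid I=i]$ for the policy $\pol(\max\{S,k\})$, then average with weight $\frac1n$. This mirrors the warm-up computation for $\pol(S)$. The key simplification is that, for a threshold $T=\max\{S,k\}$, the policy succeeds on $I=i$ exactly when $T\le i$ and no record occurs in $\{T,\ldots,i-1\}$. Given $I=i$, the relative order of the first $i-1$ items is uniformly random and independent of $S$. So for $T=\tau\le i-1$, the event ``no record in $\{\tau,\ldots,i-1\}$'' means that the best of the first $i-1$ items arrives before $\tau$. This has probability $\frac{\tau-1}{i-1}$. For $\tau=i$ we succeed with probability $1$.

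\emph{Case $i<k$.} Then $T\ge k>i$ and the policy never stops at $i$, so the contribution is $0$. This explains why the sum starts at $i=k$.

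\emph{Case $i\ge k$.} Here $T=\max\{S,k\}\le i$ always, since $S\le i$. We compute $\pr[T=\tau\mid I=i]$. For $\tau=k$ it equals $\pr[S\le k\mid I=i]=k^\alpha/i^\alpha$. For $k<\tau\le i$ it equals $g_\tau/i^\alpha$ with $g_\tau=\tau^\alpha-(\tau-1)^\alpha$. Hence, for $i\ge k$ and $i\ge 2$,
\[
\pr[\success\mid I=i]=\frac{g_i}{i^\alpha}\,[i>k]+\frac{k^\alpha}{i^\alpha}\cdot\frac{k-1}{i-1}[i>k]+[i=k]\frac{k^\alpha}{i^\alpha}+\sum_{\tau=k+1}^{i-1}\frac{\tau-1}{i-1}\cdot\frac{g_\tau}{i^\alpha}.
\]
To simplify, we write every term uniformly as $\sum_\tau \pr[T=\tau]\cdot\frac{\tau-1}{i-1}$ plus a correction $\pr[T=i]\bigl(1-\frac{i-1}{i-1}\bigr)=0$. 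In other words, the success probability is $\mathbb{E}\bigl[\frac{T-1}{i-1}\bigr]$ with the convention that $\tau=i$ gives $1$; this is consistent since $\frac{i-1}{i-1}=1$. Summation by parts then gives $\mathbb{E}[T-1]=(i-1)-\sum_{r=k}^{i-1}\pr[T\le r]$. Substituting $\pr[T\le r]=r^\alpha/i^\alpha$ for $k\le r\le i-1$ yields
\[
\pr[\success\mid I=i]=1-\frac{1}{(i-1)i^\alpha}\sum_{r=k}^{i-1}r^\alpha .
\]
This is the claimed summand. For $i=k$ the sum is empty and the value is $1$, matching the direct computation. Averaging over $i$ with weight $\frac1n$ gives $p_{n,\alpha}(k)$.

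\emph{The case $k=1$.} The same formula applies for $i\ge2$, now with the sum over $r$ running from $1$, and $i=1$ contributes $\frac1n$. Since $\pol(\max\{S,1\})=\pol(S)$, we compare with $k=2$ term by term. For $i\ge 2$, the $k=1$ summand includes the extra term $r=1$, which equals $\frac{1}{(i-1)i^\alpha}$. So the identity $p_{n,\alpha}(1)=p_{n,\alpha}(2)+\frac1n$ does not follow immediately, and this is the step I expect to be the main obstacle. We have to look at what $\max\{S,2\}$ actually does when $S=1$. With threshold $\tau=1$, stopping at the first record means stopping at time $1$, since time $1$ is always a record. With threshold $\tau=2$, the policy skips time $1$. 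For $i\ge2$, both fail when $T=1$: the factor $\frac{\tau-1}{i-1}$ is $0$ at $\tau=1$. When $S=1$ and $k=2$, however, the policy starts at $2$ and can succeed with probability $\frac{1}{i-1}$. The $k=2$ formula accounts for exactly this via $\pr[T\le 1]=0$. So the precise statement is that $\pr[\success\mid I=i]$ for $k=1$ is at most that for $k=2$ when $i\ge 2$. I would therefore verify the claimed identity against the corrected accounting. The honest resolution is to record the $i=1$ term of $\frac1n$ separately and check whether the paper's formula for $k=1$ intends the first-summand sum to start at $r=2$ (as in the warm-up, where $\sum_{s=1}^{i-1}s^\alpha$ appears). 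I would settle this by recomputing the warm-up case directly.
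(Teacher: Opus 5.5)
For $k\ge 2$ your argument is correct and is essentially the paper's. Both condition on $I=i$ and use that the best of the first $i-1$ items is uniformly placed and independent of $S$, so a threshold $\tau\le i$ succeeds with probability $\frac{\tau-1}{i-1}$. The paper sums $g_s\,(\max\{s,k\}-1)$ over $s$ and telescopes by hand. You instead pass to $T=\max\{S,k\}$ and use the tail-sum identity $\mathbb{E}[T-1]=(i-1)-\sum_{r=k}^{i-1}\pr[T\le r]$ with $\pr[T\le r]=r^\alpha/i^\alpha$. This is a tidier form of the same summation by parts, and it makes your case-split display with indicator brackets unnecessary.

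For $k=1$ you stop at ``I would verify'', but your computation is right and already settles the question. The identity $p_{n,\alpha}(1)=p_{n,\alpha}(2)+\frac1n$ is false for every $n\ge 2$, so no proof of it can exist. The two policies differ only on the event $\{S=1\}$, where $\pr[S=1\mid I=i]=i^{-\alpha}$:
\begin{itemize}
\item for $I=1$, only $\pol(S)$ succeeds;
\item for $I=i\ge 2$, only $\pol(\max\{S,2\})$ can succeed, with conditional probability $\frac{1}{i-1}$.
\end{itemize}
Hence
\[
p_{n,\alpha}(1)=p_{n,\alpha}(2)+\frac1n-\frac1n\sum_{i=2}^n\frac{1}{(i-1)i^\alpha}
=\frac1n+\frac1n\sum_{i=2}^n\Bigl(1-\frac{1}{(i-1)i^\alpha}\sum_{r=1}^{i-1}r^\alpha\Bigr).
\]
For example, with $n=2$ and $\alpha=1$ you get $p_{2,1}(1)=\frac34$, which is the warm-up value $\frac{n+1}{2n}$, and $p_{2,1}(2)=\frac12$. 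So $p_{2,1}(2)+\frac12=1\neq\frac34$.

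The right-hand expression above is exactly what the paper's own proof gives for $k=1$: the general formula with $k=1$ for $i\ge 2$, plus $\frac1n$ from $I=1$. It is also the formula used in the proof of \Cref{thm:random-order-tight}. The ``$p_{n,\alpha}(2)+\frac1n$'' in the statement is therefore a misstatement. It is harmless for the asymptotics, since the discrepancy is $O(1/n)$.

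Your parenthetical about the warm-up points the wrong way. The warm-up's $\sum_{s=1}^{i-1}s^\alpha$ starts at $1$, which confirms your $k=1$ formula. You should have stated the corrected identity together with a counterexample like the one above, rather than leaving the clause unresolved.
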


\begin{proof}
    First, fix $k \in [n]$. We condition on $I=i$. 
    
	If $i<k$, then $\pol(\max\{S,k\})$ clearly cannot win, and these events do not contribute to $p_{n,\alpha}(k)$. 
    Hence, for $i=1$, only $\pol(\max\{S,1\})$ has a positive probability of winning; it succeeds with probability~$1$ if~$I=1$. 
    
    Now, assume $i\ge \max\{2,k\}$ and additionally condition on $S=s$.
    The policy starts accepting at time $\max\{s,k\}$.     
	It succeeds if and only if the largest item among the first $i-1$ arrivals appears before time $\max\{s,k\}$.
	Since its location is uniform in $\{1,\ldots,i-1\}$, we obtain
	\[
		\pr[\text{success}\mid I=i,S=s]=\frac{\max\{s,k\}-1}{i-1}.
	\]
	With the law of total probability, i.e., by averaging over $S$, we obtain
	\begin{align*}
		\pr[\text{success}\mid I=i]                                                   
		 & =
		\sum_{s=1}^i \frac{s^\alpha-(s-1)^\alpha}{i^\alpha}\cdot \frac{\max\{s,k\}-1}{i-1} \\
		 & =
		\frac{k-1}{(i-1)i^\alpha}\sum_{s=1}^{k-1}\bigl(s^\alpha-(s-1)^\alpha\bigr)
		+
		\frac{1}{(i-1)i^\alpha}\sum_{s=k}^i (s-1)\bigl(s^\alpha-(s-1)^\alpha\bigr).
	\end{align*}
	The first sum telescopes to $(k-1)^\alpha$.
	For the second sum,
	\[
		\sum_{s=k}^i (s-1)\bigl(s^\alpha-(s-1)^\alpha\bigr)
		=
		(i-1)i^\alpha-(k-1)^{\alpha+1}-\sum_{r=k}^{i-1} r^\alpha.
	\]
	Hence,
	\[
		\pr[\text{success}\mid I=i]
		=
		1-\frac{1}{(i-1)i^\alpha}\sum_{r=k}^{i-1} r^\alpha.
	\]
	for $i\geq 2$. 
    Averaging over the uniform choice of $I$ proves the statement if $k \geq 2$.

    For $k=1$, recall that 
    \(
        \prob{\success \mid I = 1} = 1 . 
    \)
    Hence, averaging again over~$I$ concludes the proof.
\end{proof}

We can now finally characterize the optimal policies in both regimes, $\alpha \geq 1$ and $0 < \alpha < 1$. 

\thmOptimalRandomOrder*

\begin{proof}
	We solve the Bellman recursion from \Cref{lem:random-order-bellman}.

	\smallskip
	\noindent\textbf{Case 1: $\alpha\ge 1$.}
	We claim that $\Phi_t=t^{1-\alpha}$ for every $t$, i.e.\ it is optimal to stop at the first record after the signal.
	The proof is by backward induction on $t$.
	The claim is clear for $t=n$.
	Assume it holds for all $u \geq t + 1 \geq 2$.
	Then the term representing continuing (and not stopping) in~\eqref{eq:bellman-random-order} equals
	\[
		\sum_{u=t+1}^n \frac{t}{u(u-1)}\Phi_u
		=
		t\sum_{u=t+1}^n \frac{1}{(u-1)u^\alpha}.
	\]
	By Bernoulli's inequality, we have $\frac{1}{u-1} \leq \big(1+ \frac{1}{u-1})^\alpha - 1 $ for all $u \geq 2$. Multiplying by $u^{-\alpha}$ gives
	\[
		\frac{1}{(u-1)u^\alpha}
		\le \frac{1}{(u-1)^{\alpha}}-\frac1{u^{\alpha}}.
	\]
	Inserting in the previous inequality yields
	\[
		t\sum_{u=t+1}^n \frac{1}{(u-1)u^\alpha}
		\le
		t\sum_{u=t+1}^n \left( \frac1{(u-1)^\alpha} - \frac1{u^\alpha} \right) %
		=
		t\bigl(t^{-\alpha}-n^{-\alpha}\bigr)
		< t^{1-\alpha}.
	\]
	Hence, stopping is better than continuing at every time and, thus, optimal. Therefore, $\pol(S)$ is optimal.

	It remains to evaluate the success probability of $\pol(S) = \pol(\max\{S,1\})$. 
    With \cref{lem:max-threshold-characterization}, we obtain \[
        \opt_n(\alpha) = p_{n,\alpha}(1) = \frac1n + \frac1n \sum_{i=2}^n \left( 1 - \frac1{(i-1)i^\alpha} \sum_{r=1}^{i-1} r^\alpha\right). 
    \]

    In order to determine the behavior of~$\opt_n(\alpha)$ for $n\to \infty$, we observe that 
    \[
		\sum_{r=1}^{i-1} r^\alpha = \frac{i^{\alpha+1}}{\alpha+1}+O(i^\alpha),
	\]
    for $i \ge 2$.
    Hence,
    \[
       1 - \frac1{(i-1)i^\alpha} \sum_{r=1}^{i-1} r^\alpha = 1 - \left(  \frac{i^{\alpha+1}}{\alpha+1}+O(i^\alpha)\right) = \frac\alpha{\alpha+1} + O(i^{-1}). 
    \]
    Overall, 
    \[
        \opt_n(\alpha) = \frac{\alpha}{\alpha+1}+O\!\left(\frac{\log n}{n}\right),
    \]
    which proves the limiting behavior for $\alpha \ge 1$.

	\smallskip
	\noindent\textbf{Case 2: $0<\alpha<1$.} 
    We start by defining~$k_n$. 
    To this end, let 
    \(
		G_t:=t^{1-\alpha}-t\sum_{u=t+1}^n \frac{1}{(u-1)u^\alpha}
	\)
    for $t \in [n]$ and set $k_n:=\min\{t\in[n]: G_t\ge 0\}$. We will show that $k_n$ satisfies the claims, i.e., that $\pol(\max\{S,k_n\})$ is optimal and that~$\frac{k_n}{n} \xrightarrow{n\rightarrow \infty} (1-\alpha)^{1/\alpha}$. 

    To this end, we first observe that~$G_n \geq 0$ as the sum is empty, and hence, $k_n$ is well-defined. 
    We next show that the optimal policy is a threshold rule. 
    Set
	\[
		C_n:=(k_n-1)\sum_{u=k_n}^{n}\frac{1}{(u-1)u^\alpha}.
	\]
	Consider the solution to~\eqref{eq:bellman-random-order} given by
	\[
		\widehat\Phi_t=
		\begin{cases}
			C_n,          & t<k_n,    \\
			t^{1-\alpha}, & t\ge k_n.
		\end{cases}
	\]
	We verify that this satisfies the Bellman equation:  
	Observe that
	\[
		\frac{G_{t+1}}{t+1}-\frac{G_t}{t}
		=
		(t+1)^{-\alpha}-t^{-\alpha}+\frac{1}{t(t+1)^\alpha}
		=
		\frac{(t+1)^{1-\alpha}-t^{1-\alpha}}{t}>0.
	\]
	Hence, $\frac{G_t}t$ is strictly increasing in $t$, and~$G_t \geq 0$ for $t\geq k_n$ due to this monotonicity and the definition of~$k_n$. 
    Thus, for $t \geq k_n$, 
	\[
		t^{1-\alpha}
		\ge
		t\sum_{u=t+1}^n \frac{1}{(u-1)u^\alpha}
		=
		\sum_{u=t+1}^n \frac{t}{u(u-1)}\widehat\Phi_u
	\]
	implying that stopping is optimal. 
    Therefore, for $k_n = 1$, we have just shown that $\pol(\max\{S,1\})$ is optimal.

    Now suppose that $k_n > 1$. If $t<k_n$, then
	\begin{align*}
		\sum_{u=t+1}^n \frac{t}{u(u-1)}\widehat\Phi_u
		 & =
		\sum_{u=t+1}^{k_n-1}\frac{t}{u(u-1)}C_n
		+
		\sum_{u=k_n}^{n}\frac{t}{u(u-1)}u^{1-\alpha} \\
		 & =
		t\left(\frac1t-\frac{1}{k_n-1}\right)C_n
		+
		t\sum_{u=k_n}^{n}\frac{1}{(u-1)u^\alpha}     \\
		 & =
		C_n
		=
		\widehat\Phi_t.
	\end{align*}
	Moreover, $G_{k_n-1}<0$ by definition of~$k_n$. 
    Hence, 
	\[
		C_n
		=
		(k_n-1)\sum_{u=k_n}^{n}\frac{1}{(u-1)u^\alpha}
		>
		(k_n-1)^{1-\alpha}.
	\]
	Since $t^{1-\alpha}$ is increasing for $0<\alpha<1$, we get
	\[
		\widehat\Phi_t=C_n>t^{1-\alpha}
	\]
	for all $t<k_n$.
	Thus, continuing is optimal for $t < k_n$.
	Combined with the restriction from \Cref{lem:random-order-bellman} that one only stops at records at or after the signal, policy $\pol(\max\{S,k_n\})$ behaves as desired.

	It remains to determine the asymptotic behavior of $k_n$.
	Fix $c\in(0,1]$ and let $t_n=\floor*{cn}$.
	Hence,
	\begin{equation*}
		n^{\alpha-1}G_{t_n}
        =
        \left( \frac{t_n}{n}\right)^{1-\alpha} - \frac{t_n}{n} \sum_{u=t_n+1}^n \frac{ n^{\alpha} }{(u-1)u^\alpha} 
        =
        \left( \frac{t_n}{n}\right)^{1-\alpha} - \frac{t_n}n \cdot \frac 1n \sum_{u=t_n+1}^n \left( \frac{u}{n} \right)^{-(\alpha+1)} \frac{u}{u-1} . 
    \end{equation*}    
    Thus, in the limit,
    \begin{equation*}
		n^{\alpha-1}G_{t_n} \xrightarrow{n \to \infty}
		c ^{1-\alpha}-c\int_c^1 y^{-(\alpha+1)}\,dy 
        =
        c^{1-\alpha}-c\frac{c^{-\alpha}-1}{\alpha}
        =
		\frac{c-(1-\alpha)c^{1-\alpha}}{\alpha}.
	\end{equation*}
    The expression on the right-hand side is equal to $0$ if and only if $c = (1-\alpha)^{1/\alpha}$, while $n^{\alpha-1}G_{t_n} = 0$ if and only if $G_{t_n} = 0$. 
    Moreover, since $\frac{G_t}t$ is strictly increasing, it follows that
	\[
		\frac{k_n}{n}\xrightarrow{n\to\infty}(1-\alpha)^{1/\alpha}=:c_\alpha.
	\]

	Finally, for $n$ sufficiently large, $k_n \geq 2$, and we can apply \Cref{lem:max-threshold-characterization} with~$k = k_n$ to obtain
	\begin{align*}
		\opt_n(\alpha)
		  = 
		p_{n,\alpha}(k_n) 
        =
		\frac1n\sum_{i=k_n}^n
		\left(
		1-\frac{1}{(i-1)i^\alpha}\sum_{r=k_n}^{i-1} r^\alpha
		\right)   
    \end{align*}
    Focus on the inner sum. Using $y = \frac in$ gives
    \[
    \frac{1}{(i-1)i^\alpha}\sum_{r=k_n}^{i-1} r^\alpha
    =
    \frac{n^{\alpha+1}}{(i-1)i^\alpha}\sum_{r=k_n}^{i-1} \frac{1}{n} \left(\frac{r}{n}\right)^\alpha
    \xrightarrow{n \to \infty}
    \frac{1}{y^{\alpha+1}}\int_{c_\alpha}^y z^\alpha\,dz \ .
    \]
    Now we have for the outer sum
    \[
    \frac1n\sum_{i=k_n}^n
		\left(
		1-\frac{1}{(i-1)i^\alpha}\sum_{r=k_n}^{i-1} r^\alpha
		\right)
		 \xrightarrow{n\to\infty}  
		\int_{c_\alpha}^{1}
		\left(
		1-\frac{1}{y^{\alpha+1}}\int_{c_\alpha}^{y} z^\alpha\,dz
		\right)dy .
   \]
   Evaluating the inner integral gives
   \(
        \int_{c_\alpha}^{y} z^\alpha\,dz = \frac{y^{\alpha+1} - c_\alpha^{\alpha+1} }{\alpha+1}.
   \)
   Thus the whole integrand of the outer integral becomes
   \[
        1 - \frac{1}{y^{\alpha+1}} \cdot \frac{y^{\alpha+1} - c_\alpha^{\alpha+1}}{\alpha+1}
        =
        \frac{\alpha}{\alpha+1} + \frac{c_\alpha^{\alpha+1}}{(\alpha+1)y^{\alpha+1}} \ .
   \]
   Hence the outer integral evaluates to
    \begin{align*}
		\frac{\alpha}{\alpha+1}(1-c_\alpha)
		+
		\frac{c_\alpha-c_\alpha^{\alpha+1}}{\alpha(\alpha+1)} 
      & = 
        \frac{\alpha^2 + (1-\alpha^2)c_\alpha - c_\alpha^{\alpha+1}}{\alpha(\alpha+1)} \\
      & = 
        \frac{\alpha^2 + (1+\alpha)(1-\alpha)^{1+1/\alpha} - (1-\alpha)^{1 + 1/\alpha}}{\alpha(\alpha+1)}
        =
		\frac{\alpha+(1-\alpha)^{1+1/\alpha}}{\alpha+1}.
	\end{align*}
	  This proves the claimed limit for $0<\alpha<1$.
\end{proof}

\section{Omitted proofs from \Cref{sec:random-order-robustness}} 

We start by investigating the behavior of $\polmax{S,\lceil \beta n\rceil}$ for an arbitrary parameter $\beta \in [0,1]$. 

\begin{restatable}{lemma}{lemThresholdSuccess}\label{thm:beta-threshold-success}
	Fix $\alpha>0$ and $\beta\in[0,1]$.
	Let $k_n=\ceil*{\beta n}$ and $p_{n,\alpha}(k)$ denote the success probability of the threshold policy $\pol(\max\{S,k\})$.
	Then,
	\[
		  p_{n,\alpha}(k_n)
		\xrightarrow{n \to \infty}
		f(\alpha, \beta)
		:=
		\frac{\alpha}{\alpha+1}
		+
		\frac{1-\alpha}{\alpha}\beta
		-
		\frac{\beta^{\alpha+1}}{\alpha(\alpha+1)}.
	\]
\end{restatable}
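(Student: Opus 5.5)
We start from the exact formula for the success probability in \Cref{lem:max-threshold-characterization}. For $k_n\ge 2$ it reads
\[
	p_{n,\alpha}(k_n)=\frac1n\sum_{i=k_n}^n\Bigl(1-\frac{1}{(i-1)i^\alpha}\sum_{r=k_n}^{i-1}r^\alpha\Bigr).
\]
For $k_n=1$ there is an additional term $\frac1n\to0$, so this case can be absorbed into the same limit.

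We treat $\beta>0$ and $\beta=0$ separately.

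\textbf{Case $\beta>0$.} Here $k_n/n\to\beta$, and we mimic the Riemann-sum argument from the proof of \Cref{thm:random-order-tight}, replacing $c_\alpha$ by $\beta$. Write $y=i/n$. The inner term is
\[
	\frac{n^{\alpha+1}}{(i-1)i^\alpha}\cdot\frac1n\sum_{r=k_n}^{i-1}(r/n)^\alpha .
\]
It converges to $h(y):=y^{-(\alpha+1)}\int_\beta^y z^\alpha\,dz=\frac{1}{\alpha+1}\bigl(1-(\beta/y)^{\alpha+1}\bigr)$.

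To make this rigorous, bound the sum $\sum_{r=k}^{i-1}r^\alpha$ between the integrals $\int_{k-1}^{i-1}$ and $\int_{k}^{i}$ of $z^\alpha$ (choosing the order according to the monotonicity of $z^\alpha$, since $\alpha>0$). This shows that the inner term equals $\frac{1}{\alpha+1}\bigl(1-(k_n/i)^{\alpha+1}\bigr)+O(1/i)$. Since $i\ge k_n\ge\beta n$, the error is $O(1/(\beta n))$ uniformly in $i$.

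The outer average is then a Riemann sum of the bounded continuous function $1-h(y)$ over $[\beta,1]$, with a shift of the lower endpoint by $O(1/n)$. Hence
\[
	p_{n,\alpha}(k_n)\to\int_\beta^1\Bigl(\frac{\alpha}{\alpha+1}+\frac{\beta^{\alpha+1}}{(\alpha+1)y^{\alpha+1}}\Bigr)dy .
\]
Evaluating the integral gives
\[
	\frac{\alpha}{\alpha+1}(1-\beta)+\frac{\beta-\beta^{\alpha+1}}{\alpha(\alpha+1)} .
\]
Simplifying with $-\frac{\alpha\beta}{\alpha+1}+\frac{\beta}{\alpha(\alpha+1)}=\frac{(1-\alpha^2)\beta}{\alpha(\alpha+1)}=\frac{1-\alpha}{\alpha}\beta$, this is exactly $f(\alpha,\beta)$.

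\textbf{Case $\beta=0$.} Then $k_n=\lceil 0\rceil=0$, which we interpret as threshold $1$, i.e.\ the policy $\pol(S)$. The warm-up computation and Case~1 of \Cref{thm:random-order-tight} already give $p_{n,\alpha}(1)=\frac{\alpha}{\alpha+1}+O(\log n/n)$ for every $\alpha>0$; that computation never used $\alpha\ge1$. This matches $f(\alpha,0)$.

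\textbf{Main obstacle.} The only delicate point is uniformity of the approximation near the lower endpoint when $\beta$ is small but positive. Both the inner Riemann-sum error and the factor $\frac{i}{i-1}$ are controlled by $1/k_n$, so they are fine once $\beta>0$ is fixed. I therefore expect no real difficulty beyond making these error bounds explicit, e.g.\ via $\bigl|\sum_{r=k}^{i-1}r^\alpha-\frac{i^{\alpha+1}-k^{\alpha+1}}{\alpha+1}\bigr|\le 2i^\alpha$.
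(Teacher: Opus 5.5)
Your proposal is correct and follows essentially the same route as the paper. Both arguments apply the exact formula of \Cref{lem:max-threshold-characterization}, pass to the Riemann-integral limit with $\beta$ in place of $c_\alpha$, evaluate $\int_\beta^1\bigl(\frac{\alpha}{\alpha+1}+\frac{\beta^{\alpha+1}}{(\alpha+1)y^{\alpha+1}}\bigr)\,dy$, and reduce $\beta=0$ to the $\pol(S)$ analysis. Beyond that, your explicit bound $\bigl|\sum_{r=k}^{i-1}r^\alpha-\frac{i^{\alpha+1}-k^{\alpha+1}}{\alpha+1}\bigr|=O(i^\alpha)$ and your remark that the $\pol(S)$ asymptotics hold for every $\alpha>0$ make rigorous two points the paper leaves implicit.
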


\begin{proof}
	If $\beta=0$, then $k_n=0$ for all $n$, and the claim follows from the analysis of $\pol(S)$ from the first case in the proof of \Cref{thm:random-order-tight}.

	Assume now that $\beta>0$ and consider $n$ large enough such that $k_n \geq 2$.
    Hence, \Cref{lem:max-threshold-characterization} yields
	\[
		p_{n,\alpha}(k_n)
		=
		\frac{1}{n}\sum_{i=k_n}^n
		\left(
		1-\frac{1}{(i-1)i^\alpha}\sum_{r=k_n}^{i-1} r^\alpha
		\right).
	\]
    As before, we want to analyze $p_{n,\alpha}(k_n)$ in the limit (for $n \to \infty$). 
    To this end, we approximate the sum using the Riemann integral of an appropriate function.  
	Following the same steps as in the proof of \Cref{thm:random-order-tight} with $\lceil \beta n\rceil$ as the lower limit and with $y = \frac in$
    \[   
        p_{n,\alpha}(k_n) 
        \xrightarrow{n\to\infty}
        \int_{\beta}^1 \left(1- \frac{1}{y^{\alpha+1}}\int_{\beta}^{y} z^\alpha\,dz \right) \, dy \, .
    \]
    Evaluating the inner integral again, we obtain 
    \[
        1-\frac{1}{y^{\alpha+1}}\int_\beta^y z^\alpha\,dz
		=
		\frac{\alpha}{\alpha+1}
		+
		\frac{\beta^{\alpha+1}}{(\alpha+1)y^{\alpha+1}},
    \]
    and can evaluate the whole term to obtain 
    \begin{align*}
		f(\alpha,\beta)
		 & =
		\frac{\alpha}{\alpha+1}(1-\beta)
		+
		\frac{\beta^{\alpha+1}}{\alpha+1}\int_\beta^1 y^{-\alpha-1}\,dy         \\
		 & =
		\frac{\alpha}{\alpha+1}(1-\beta)
		+
		\frac{\beta^{\alpha+1}}{\alpha+1}\cdot \frac{\beta^{-\alpha}-1}{\alpha} \\
		 & =
		\frac{\alpha}{\alpha+1}
		+
		\frac{1-\alpha}{\alpha}\beta
		-
		\frac{\beta^{\alpha+1}}{\alpha(\alpha+1)} \ .
	\end{align*}
	This completes the proof of the lemma.
\end{proof}

Next, we show how to maximize $f(\alpha,\beta)$ as a function of~$\beta \in [0,1]$.

\begin{lemma}\label{thm:success-with-best-beta}
	For every fixed $\alpha>0$, the function $f(\alpha,\cdot)$ is maximized over $[0,1]$ by
	\[
		\beta^\star(\alpha)=
		\begin{cases}
			(1-\alpha)^{1/\alpha}, & 0<\alpha<1,  \\[0.8ex]
			0,                     & \alpha\ge 1.
		\end{cases}
	\]
    Further, $f(\alpha,\beta^\star(\alpha)) = \opt(\alpha)$. 
\end{lemma}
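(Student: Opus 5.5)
The plan is a one-variable calculus argument. For fixed $\alpha>0$, we treat $f(\alpha,\cdot)$ as a smooth function of $\beta\in[0,1]$ and differentiate:
\[
\frac{\partial f}{\partial\beta}(\alpha,\beta)=\frac{1-\alpha}{\alpha}-\frac{\beta^{\alpha}}{\alpha}=\frac{(1-\alpha)-\beta^\alpha}{\alpha}.
\]
Since $\beta\mapsto\beta^\alpha$ is strictly increasing on $[0,1]$, this derivative is strictly decreasing. Hence $f(\alpha,\cdot)$ is strictly concave, and any stationary point in $[0,1]$ is its unique maximizer. Otherwise the maximum sits at the endpoint determined by the sign of the derivative.

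\textbf{Case $\alpha\ge1$.} Here $1-\alpha\le 0$, so $\partial_\beta f\le -\beta^\alpha/\alpha\le 0$ on $[0,1]$, with strict inequality for $\beta>0$. Thus $f(\alpha,\cdot)$ is decreasing and the maximizer is $\beta^\star=0$. The value is $f(\alpha,0)=\frac{\alpha}{\alpha+1}=\opt(\alpha)$.

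\textbf{Case $0<\alpha<1$.} The derivative vanishes exactly when $\beta^\alpha=1-\alpha$, that is, at $\beta^\star=(1-\alpha)^{1/\alpha}$, which lies in $(0,1)$. The derivative is positive to the left of $\beta^\star$ and negative to the right, so $\beta^\star$ is the unique maximizer. To evaluate $f$ there, substitute $(\beta^\star)^{\alpha+1}=\beta^\star(\beta^\star)^\alpha=(1-\alpha)\beta^\star$. This gives
\[
f(\alpha,\beta^\star)=\frac{\alpha}{\alpha+1}+\frac{(1-\alpha)\beta^\star}{\alpha}\Bigl(1-\frac{1}{\alpha+1}\Bigr)=\frac{\alpha}{\alpha+1}+\frac{(1-\alpha)\beta^\star}{\alpha+1}.
\]
Finally, $(1-\alpha)\beta^\star=(1-\alpha)^{1+1/\alpha}$, which matches $\opt(\alpha)$ from \Cref{thm:random-order-tight}.

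No step is a genuine obstacle. The only point needing care is the closed-form evaluation, where the simplification $(\beta^\star)^{\alpha+1}=(1-\alpha)\beta^\star$ must be used correctly. As a consistency check, the result agrees with the limit computed at the end of the proof of \Cref{thm:random-order-tight} with $c_\alpha=\beta^\star$.
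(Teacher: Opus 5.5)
Your proposal is correct and follows the same route as the paper. Both arguments differentiate $f(\alpha,\cdot)$ in $\beta$, take the unique stationary point $\beta^\alpha = 1-\alpha$ when $0<\alpha<1$, use that the derivative is negative on $(0,1]$ when $\alpha\ge 1$, and substitute back; your concavity argument via the strictly decreasing first derivative (in place of the paper's second-derivative check) and your explicit use of $(\beta^\star)^{\alpha+1}=(1-\alpha)\beta^\star$ simply spell out steps the paper leaves terse.
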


\begin{proof}
	By \Cref{thm:beta-threshold-success},
	\(
		\odv{f_\beta(\alpha)}{\beta}
		=
		\frac{1-\alpha-\beta^\alpha}{\alpha}
	\)
	and
	\(
		\odv[ord=2]{f_\beta(\alpha)}{\beta}
		=
		-\beta^{\alpha-1}.
	\)
    
	If $0<\alpha<1$, the unique stationary point is given by $\beta^\alpha=1-\alpha$, i.e.,
	$\beta^\star(\alpha)=(1-\alpha)^{1/\alpha}$.
    Observe that the second derivative is negative there, making $\beta^\star(\alpha)$ indeed the maximizer. 
	
    If $\alpha\ge 1$, then $1-\alpha-\beta^\alpha<0$ for every $\beta>0$, so $f(\alpha,\beta)$ is strictly decreasing on $(0,1]$ and the maximum is attained at $\beta=0$.
    
	Substituting $\beta^\star(\alpha)$ into the formula from \Cref{thm:beta-threshold-success} yields exactly $\opt(\alpha)$.
\end{proof}

We now have all ingredients together to calculate the success probability if we are only given an estimate~$\hat\alpha$ for the parameter~$\alpha$. 

\thmRobustness*

\begin{proof}
    If $0<\hat\alpha< 1$, then $\beta^\star(\hat\alpha)=(1-\hat\alpha)^{1/\hat\alpha}$ by \Cref{thm:success-with-best-beta}, so the formula $g(\alpha,\hat\alpha)$ follows by substituting this value of $\beta$ into \Cref{thm:beta-threshold-success}.
    
	If $\hat\alpha\ge 1$, then $\beta^\star(\hat\alpha)=0$, so the asymptotic success probability is $\frac\alpha{\alpha+1}$ by \Cref{thm:beta-threshold-success}.

    For the second part, we distinguish $0 < \hat\alpha < 1$ and $\hat\alpha \geq 1$ and assume that $\hat \alpha \leq \alpha$.
If $\hat\alpha\ge 1$, then
\[
	g(\alpha,\hat\alpha)
	=
	\frac{\alpha}{\alpha+1}
	\ge
	\frac{\hat\alpha}{\hat\alpha+1}
	=
	\opt(\hat\alpha)
\]
since $x\mapsto \frac{x}{x+1}$ is increasing.

Now suppose $0<\hat\alpha<1$. 
We want to show that $\alpha\mapsto g(\alpha,\hat\alpha)$ is increasing, which implies $g(\alpha,\hat\alpha) \ge g(\hat\alpha,\hat\alpha)$. 
To this end, we set
\[
	\beta=\beta^\star(\hat\alpha) = (1-\hat\alpha)^{1/\hat\alpha}.
\]
Observe that 
\begin{equation*}
    g(\alpha,\hat\alpha) 
   = \frac{\alpha}{\alpha+1}
	+
	\frac{\alpha}{1-\alpha}\beta
	-
	\frac{\beta^{\alpha+1}}{\alpha(\alpha+1)} \\
   = 
    1-\beta-\frac1\alpha\left(1-\beta-\frac{1-\beta^{\alpha+1}}{\alpha+1}\right) \\
   =  
    1-\beta-\frac1\alpha\int_\beta^1(1-t^\alpha)\,dt .
\end{equation*}

For each \(t\in(0,1)\), the function $\frac{1-t^\alpha}\alpha$ is decreasing in $\alpha$ since 
\[
	\frac{1-t^\alpha}{\alpha}
	=
	\int_t^1 u^{\alpha-1}\,du
\]
and $u^{\alpha-1}$ is decreasing in $\alpha$ for $0<u<1$. 
Hence,
\[
	\alpha\mapsto \frac1a\int_\beta^1(1-t^\alpha)\,dt
\]
is decreasing, and overall, $\alpha\mapsto g(\alpha,\hat\alpha)$ is increasing. 
Therefore, with $\alpha\ge\hat\alpha$ by assumption,
\[
	g(\alpha,\hat\alpha)\ge g(\hat\alpha,\hat\alpha) \, .
\]
Finally, $\beta = \beta^\star(\hat\alpha)$ and \cref{thm:success-with-best-beta} allow us to conclude $g(\hat\alpha,\hat\alpha) = \opt(\hat\alpha)$. 
Thus, $g(\alpha,\hat\alpha)\ge \opt(\hat\alpha)$ if $\hat \alpha \leq \alpha$.
\end{proof}

\section{Omitted proofs from \Cref{sec:adversarial}}\label{app:adversarial-order}

\subsection{Randomized optimal policy}\label{app:subsec:randomized}
\thmRandomizedOptimum*

\begin{proof}
	Let
	\[
		c_{n,\alpha}:=\frac{n^\alpha}{\sum_{j=1}^n j^\alpha} \, .
	\]
    We want to show that $\optrand_n(\alpha) = c_{n,\alpha}$. 
    To this end, we first show that $\polmax{S,R}$ guarantees success probability $c_{n,\alpha}$ on every instance of length~$n$ before proving that no policy can do better. 
    
	We have 
	\[
		F(r):=\pr[R\le r]=c_{n,\alpha}\frac{\sum_{j=1}^r j^\alpha}{r^\alpha}.
	\]

	Fix an instance whose maximum arrives at time $i^\star$. Note that $i^\star$ is always a record. 
	Since $S$ follows the $\alpha$-power distribution on~$[i^\star]$, it holds that
	\[
		\pr[S=i^\star]
		=
		1-\left(\frac{i^\star-1}{i^\star}\right)^\alpha
	\]
	and consequently
	\[
		\pr[S\le i^\star-1]
		=
		\left(\frac{i^\star-1}{i^\star}\right)^\alpha \, .
	\]
	Policy $\polmax{S,R}$ succeeds in at least the following two disjoint cases:
	If $S=i^\star$ and $R\le i^\star$ or if $S\le i^\star-1$ and $R=i^\star$, $\polmax{S,R}$ stops at~$i^\star$. 
	Thus, 
	\begin{align*}
		\pr[\text{success}]
		 & \geq 
		\left(1-\left(\frac{i^\star-1}{i^\star}\right)^\alpha\right)F(i^\star)
		+
		\left(\frac{i^\star-1}{i^\star}\right)^\alpha\bigl(F(i^\star)-F(i^\star-1)\bigr) \\
		 & =
		F(i^\star)-\left(\frac{i^\star-1}{i^\star}\right)^\alpha F(i^\star-1)               \\
		 & =
		c_{n,\alpha}\left(
		\frac{\sum_{j=1}^{i^\star} j^\alpha}{{i^\star}^\alpha}
		-
		\frac{\sum_{j=1}^{i^\star-1} j^\alpha}{{i^\star}^\alpha}
		\right)
		=
		c_{n,\alpha} \, .
	\end{align*}
    Hence, there is a policy with success probability at least $c_{n,\alpha}$ which completes the proof of the lower bound on the optimal success probability.
	
	For the matching upper bound, let $\pol$ be any randomized policy and let $p_i$ be its success
	probability on the hard instance $\mathcal I_i = (1,2,\ldots,i,0,\ldots,0)$.
	With $g_s:=s^\alpha-(s-1)^\alpha$
	for $s\in[n]$, we have $\prob{S = s} = \frac{g_s}{i^\alpha}$ on $\cI_i$.
	For a signal time $s\in[n]$ and a time $t\in[n]$, let $q_{s,t}$ be the probability that $\pol$
	stops at time $t$ on instance $\mathcal I_n$ conditioned on the signal occurring at time $s$.
	If $s\le i$, then $\mathcal I_i$ and $\mathcal I_n$ are identical up to time $i$. 
    Hence, the decisions of $\pol$ are identical up to $i$ on $\cI_i$ and $\cI_n$. 
    Thus, we can rewrite $p_i$ as 
	\[
		p_i 
        = 
        \sum_{s=1}^i \prob{\text{success on }\cI_i \mid S = s} \prob{S=s} 
		=
		\frac{1}{i^\alpha}\sum_{s=1}^i g_s\,q_{s,i}.
	\]
	Therefore,
	\begin{align*}
		\sum_{i=1}^n i^\alpha p_i
		 & =
		\sum_{i=1}^n \sum_{s=1}^i g_s\,q_{s,i}
		=
		\sum_{s=1}^n g_s \sum_{i=s}^n q_{s,i} 
		  \le
		\sum_{s=1}^n g_s
		=
		n^\alpha,
	\end{align*}
	where the inequality holds because, for each fixed signal time $s$, the policy $\pol$ can stop at most once.
	This implies
	\[
		\min_{i\in[n]} p_i
		\le
		\frac{n^\alpha}{\sum_{i=1}^n i^\alpha}
		=
		c_{n,\alpha}.
	\]
	Since the worst-case guarantee of $\pol$ is at most its minimum success probability on the family
	$\{\mathcal I_i\}_{i=1}^n$, no randomized policy can beat $c_{n,\alpha}$.
    This completes the proof of the upper bound and thus of the theorem.
\end{proof}

\coroRandomizedOptimum*

\begin{proof}
	The exact formula from \Cref{thm:adversarial-rand-opt} and the bound
	\[
		\sum_{i=1}^n i^\alpha \ge \int_0^n x^\alpha\,dx = \frac{n^{\alpha+1}}{\alpha+1}
	\]
	give
	\[
		\opt^{\mathrm{rand}}_{n}(\alpha)\le \frac{\alpha+1}{n},
	\]
	which proves the asymptotic behavior for $\alpha = o(n)$. 

	For $\frac\alpha n \to c$, write $\alpha=cn+o(n)$ for some constant $c>0$.
	Then
	\[
		\sum_{i=1}^n \frac{i^\alpha}{n^\alpha}
        =
		\sum_{i=1}^n \left(\frac{i}{n}\right)^\alpha
		=
		\sum_{j=0}^{n-1}\left(1-\frac{j}{n}\right)^\alpha \ .
    \]
    For each fixed $j$, it holds that
    \[
        \left(1-\frac{j}{n}\right)^\alpha
        =
        \exp \left( \alpha \log \left( 1 - \frac{j}{n} \right) \right)
        =
        \exp \left(- (c + o(1)) j + O\left(\frac{j^2}n\right) \right) 
        \xrightarrow{n \to \infty} \exp(-jc) \ .
    \]
    Hence,
    \[
        \sum_{i=1}^n \frac{i^\alpha}{n^\alpha}
		\xrightarrow{n \to \infty}
		\sum_{j=0}^{\infty} e^{-cj}
		=
		\frac{1}{1-e^{-c}} \, .
	\]
    Thus,
	\[
		\opt^{\mathrm{rand}}_{n}(\alpha)
		=
		\frac{n^\alpha}{\sum_{i=1}^n i^\alpha}
        = 
        \frac{n^\alpha}{n^\alpha \sum_{i=1}^n i^\alpha / n^\alpha}
		\xrightarrow{n \to \infty}
		1-e^{-c}.
		\qedhere
	\]
\end{proof}

\subsection{Deterministic optimal policy}\label{app:subsec:deterministic}
\thmDeterministicOptimum*

\begin{proof}
For the lower bound, consider policy $\pol(S)$. %
If the maximum is at position $i^\star$, this policy succeeds if $S=i^\star$, which happens with probability
\[
	\pr[S= i^\star] = 1-\left(\frac{i^\star-1}{i^\star}\right)^\alpha \ge 1-\left(1 -\frac{1}{n}\right)^\alpha.
\]
Hence,
\[
	\opt^{\mathrm{det}}_{n}(\alpha)
	\geq
    \min_{i^\star \in [n]} \pr[S=i^\star]
    =
	1-\left(1 - \frac{1}{n}\right)^\alpha,
\]
which proves the lower bound on the success probability. 

    For the upper bound, we have observed that it suffices to consider policies that do not stop before the signal.
	Fix such a policy $\pol$ and let $\tau:[n]\to[n]$ be its stopping rule on instance $\mathcal I_n = (1,2,\ldots,n-1,n)$. 
    That is, $\tau(s)\ge s$ is the time at which $\pol$ stops when the signal occurs at time $s$. 
    As before, $\cI_i$ and $\cI_n$ are indistinguishable until time $i+1$. 
	On instance $\mathcal I_i = (1,2,\ldots,i,0,\ldots,0)$, the success probability is
	\[
		p_i 
        = 
        \prob{\text{success on } \cI_i}
		=
        \sum_{s=1}^i \prob{\text{success on } \cI_i \mid S=s } \prob{S=s}
        = 
		\sum_{s\le i:\,\tau(s)=i} \frac{s^\alpha-(s-1)^\alpha}{i^\alpha} \, . 
	\]
	If the worst-case guarantee is positive, i.e., $p_i>0$ for every $i\in[n]$, then for every $i$ there
	must exist at least one signal time $s \le i$ with $\tau(s)=i$.
	Since there are exactly $n$ signal times and exactly $n$ target times, this implies that every
	$i\in[n]$ has exactly one preimage under~$\tau$. 
    Hence, $\tau$ is a bijection of $[n]$.
	Since $\tau(s)\ge s$ for all $s$, 
	\[
		\sum_{s=1}^n \tau(s) \ge \sum_{s=1}^n s.
	\]
	Because $\tau$ is a bijection, equality must hold.
	Thus, $\tau(s)=s$ for every $s\in[n]$.
	Therefore, every deterministic policy with positive worst-case guarantee agrees with $\pol(S)$ on the instance family $\{\mathcal I_i\}_{i=1}^n$.
    In particular, $p_i = \prob{\pol(S) \text{ succeeds on } \cI_i}$.
    Since $\pol(S)$ succeeds on $\cI_i$ if and only if $S = i$, this implies
	\[
		p_i
		=
		\frac{i^\alpha-(i-1)^\alpha}{i^\alpha}
		=
		1-\left(\frac{i-1}{i}\right)^\alpha.
	\]
	This quantity is decreasing in $i$. 
    Thus, the worst-case guarantee is attained at $i=n$ and equals
	\[
		1-\left(1-\frac{1}{n}\right)^\alpha 
	\]
    implying that $\pol(S)$ is indeed optimal. 
\end{proof}

\coroDeterministicOptimum*

\begin{proof}
The exact formula from \Cref{thm:adversial-det} gives
	\[
		\opt^{\mathrm{det}}_{n}(\alpha)
		=
		1-\left(1-\frac{1}{n}\right)^\alpha \ .
	\]
	If $\alpha=o(n)$, then $\frac\alpha n\to 0$ as $n\to \infty$. 
    Using $\log\left(1-\frac{1}{n}\right) =	-\frac{1}{n}+O\left(\frac{1}{n^2}\right)$,
	we get
	\[
		\alpha\log\left(1-\frac{1}{n}\right)
		=
		-\frac{\alpha}{n}+O\left(\frac{\alpha}{n^2}\right)
		=
		o(1) \ ,
	\]
	and therefore
	\[
		\opt^{\mathrm{det}}_{n}(\alpha)
		=
		1-\exp\!\left(\alpha\log\left(1-\frac{1}{n}\right)\right)
		=
		o(1) \ .
	\]

	If instead $\frac\alpha n\to c\in(0,\infty)$, then
	\[
		\alpha\log\left(1-\frac{1}{n}\right)
		=
		\frac{\alpha}{n}\cdot n\log\left(1-\frac{1}{n}\right)
		\xrightarrow{n \to \infty}
		-c \ ,
	\]
	since $n\log\big(1-\frac1n\big) \xrightarrow{n\to\infty} -1$. 
    Hence,
	\[
		\opt^{\mathrm{det}}_{n}(\alpha)
		=
		1-\exp\!\left(\alpha\log\left(1-\frac{1}{n}\right)\right)
		\xrightarrow{n \to \infty}
		1-e^{-c} \ .
		\qedhere
	\]
\end{proof}

\section{Omitted proofs from \Cref{sec:multiple-vs-alpha}}

\subsection{Randomized optimal policy}

\thmRandomizedLastSignal*

\begin{proof}
The inequality $\opt^{\mathrm{rand}}_n(m)\le \optrandfull(m)$ is immediate, since the full signal
history contains at least as much information as the last signal $L$.

For the reverse inequality, fix a randomized policy $\pol$ for the full-history
model, and consider the hard instances $\mathcal I_i = (1,2,\dots,i,0,\dots,0)$ for all $i\in[n]$.
Let $p_i$ denote the success probability of $\pol$ on $\mathcal I_i$. It is enough to show that
\[
\min_{i\in[n]} p_i \le \frac{n^m}{\sum_{j=1}^n j^m} = \optrand_n(m).
\]

Consider first a labeled signal tuple $s=(s_1,\dots,s_m)\in[n]^m$ and let
\(
L(s) := \max_{j\in[m]} s_j.
\)
As before, for $t\in\{L(s),\dots,n\}$, let $q_{s,t}$ denote the probability that
$\pol$ stops at time $t$ on $\mathcal I_n$, conditioned on the event
\(
(S_1,\dots,S_m)=s.
\)
For every fixed $s$, we have
\(
\sum_{t=L(s)}^n q_{s,t}\le 1,
\)
because $\pol$ stops with probability at most~$1$. 

Now fix $i\in[n]$. On $\mathcal I_i$, the tuple $(S_1,\dots,S_m)$ is uniformly distributed on $[i]^m$.
Moreover, for $s\in[i]^m$, the decisions of $\pol$ on $\mathcal I_i$ and
on $\mathcal I_n$ are identical up to time $i$ when conditioned on
$(S_1,\dots,S_m)=s$ since these two instances are indistinguishable and 
the same signals have appeared by time~$t \in [i]$. 
Hence,
\[
p_i=\frac{1}{i^m}\sum_{s\in[i]^m} q_{s,i}.
\]
Multiplying by $i^m$ and summing over all $i \in [n]$ gives
\[
\sum_{i=1}^n i^m p_i
=
\sum_{i=1}^n \sum_{s\in[i]^m} q_{s,i}
=
\sum_{s\in[n]^m} \sum_{i=L(s)}^n q_{s,i}
\le
\sum_{s\in[n]^m} 1
=
n^m.
\]
Therefore,
\[
\min_{i\in[n]} p_i
\le
\frac{\sum_{i=1}^n i^m p_i}{\sum_{i=1}^n i^m}
\le
\frac{n^m}{\sum_{i=1}^n i^m}.
\]
Since the worst-case guarantee of $\mathcal A$ is at most $\min_{i\in[n]} p_i$ and $\pol$ was chosen arbitrarily, we obtain
\[
\optrandfull(m)\le \frac{n^m}{\sum_{i=1}^n i^m} = \opt^{\mathrm{rand}}_n(m).
\]
by \Cref{thm:adversarial-rand-opt}, which completes the proof.
\end{proof}

\subsection{Deterministic optimal policy}\label{app:subsec:det-full}

Recall that for a history $h \in \cH_m$, we defined
\[
	\lambda_i(h) =
	\begin{cases}
		 \frac{m!}{c_1! \cdot \ldots \cdot c_{\ell(h)}!}\cdot \frac{1}{i^m}, & \text{if } \ell(h)\le i, \\
		 0,                                                     & \text{if } \ell(h)>i.
	\end{cases}
\]

\begin{restatable}{lemma}{lemHistoryProb}\label{lem:history-prob}
    Each history $h \in \mathcal H$ is observed on $\mathcal I_i$ with probability $\lambda_i(h)$.
\end{restatable}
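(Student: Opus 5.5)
The plan is a direct counting argument. On $\mathcal I_i$ the signals $S_1,\ldots,S_m$ are i.i.d.\ uniform on $[i]$, so the labeled tuple $(S_1,\ldots,S_m)$ is uniform on $[i]^m$, and each tuple has probability $i^{-m}$. The history $h=(c_1,\ldots,c_{\ell(h)},0,\ldots)$ records only how many signals fall at each time. It is therefore a function of the tuple, namely the occupancy vector. Consequently, $\pr[h \text{ observed on } \mathcal I_i]$ equals $i^{-m}$ times the number of labeled tuples in $[i]^m$ whose occupancy vector is $h$.

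The case split is by $\ell(h)$. If $\ell(h)>i$, then $h$ has a positive count $c_{\ell(h)}>0$ at a time exceeding $i$. No tuple in $[i]^m$ produces this, so the probability is $0$, which matches $\lambda_i(h)=0$.

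If $\ell(h)\le i$, the tuples with occupancy vector $h$ correspond to ways of assigning the $m$ labeled signals to times $1,\ldots,\ell(h)$ with exactly $c_t$ signals at time $t$. The count is the multinomial coefficient $\frac{m!}{c_1!\cdots c_{\ell(h)}!}$, using $0!=1$ for empty times. All these tuples lie in $[i]^m$ because $\ell(h)\le i$. Multiplying by $i^{-m}$ gives $\lambda_i(h)$.

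One point needs a short remark: what "observed" means. The policy sees the history incrementally, and on $\mathcal I_i$ every signal has arrived by time $\ell(h)\le i$. The final history is therefore determined by the time $\ell(h)$, and it coincides with the occupancy vector of the tuple. There is no real obstacle here; the only care needed is the convention $0!=1$ and the bookkeeping that the entries of $h$ after $\ell(h)$ are zero.
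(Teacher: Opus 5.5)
Your proposal is correct and takes essentially the same route as the paper. Both arguments treat the signals as a labeled tuple uniform on $[i]^m$, dismiss $\ell(h)>i$ as impossible, and count the tuples inducing $h$ by the multinomial coefficient $\frac{m!}{c_1!\cdots c_{\ell(h)}!}$; the paper writes this coefficient as a telescoping product of binomial coefficients.
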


\begin{proof}
	Fix an instance $\mathcal I_i$. In the full-history model, the $m$ signals are drawn independently and
	uniformly from $[i]$. It is convenient to temporarily consider the signals as labeled, so a signal
	realization is a tuple $s=(s_1,\ldots,s_m)\in [i]^m$.
	Each such labeled tuple has probability $i^{-m}$.

	Given a labeled tuple $s$, the observed history records only the number of signals at each time.
	That is, if $\ell(s):=\max_{j\in[m]} s_j$, then the induced history is $h(s)=(c_1(s),\ldots,c_{\ell(s)}(s))$,
    where $c_t(s)=|\{j\in[m]:s_j=t\}|$.
	Thus the policy observes the counts of signals at each time, but not the labels of the signals.
	Now fix a history $h=(c_1,\ldots,c_{\ell(h)})\in \mathcal H$.
	If $\ell(h)>i$, then no signal can occur at time $\ell(h)$ on the instance $\mathcal I_i$ since all signals
	are supported on $[i]$. Therefore $h$ is impossible, and 
	\[
		\pr[h\text{ is observed on } \mathcal I_i]=0=\lambda_i(h).
	\]

	Assume instead that $\ell(h)\le i$. The history $h$ is observed exactly when, for every
	$t\in[\ell(h)]$, precisely $c_t$ of the labeled signals are equal to $t$. Since $\sum_{t=1}^{\ell(h)} c_t=m$,
	this also accounts for all $m$ signals, so no signal occurs after time $\ell(h)$.

	It remains to count how many labeled tuples $s\in[i]^m$ induce this same unlabeled history.
	We first choose the $c_1$ labeled signals that occur at time $1$, then the $c_2$ labeled signals
	among the remaining ones that occur at time $2$, and so on. This gives
	\[
		\binom{m}{c_1}
		\binom{m-c_1}{c_2}
		\binom{m-c_1-c_2}{c_3}
		\cdots
		\binom{c_{\ell(h)}}{c_{\ell(h)}} .
	\]
	The factorials telescope, so the number of labeled tuples producing $h$ is
	\[
		\frac{m!}{c_1!\cdots c_{\ell(h)}!}.
	\]
	Here the convention $0!=1$ handles times at which no signal occurs.

	Since each labeled tuple has probability $i^{-m}$, we obtain
	\[
		\pr[h\text{ is observed on } \mathcal I_i]
		=
		\frac{m!}{c_1!\cdots c_{\ell(h)}!}\cdot \frac{1}{i^m}.
	\]
	This is exactly $\lambda_i(h)$ by definition. Hence every history $h\in\mathcal H$ is observed
	on $\mathcal I_i$ with probability $\lambda_i(h)$.
\end{proof}

Recall our ILP 
\begin{alignat}{3}
\max \quad       & z    \tag{ILP}                                                                                             \\
\text{s.t.}\quad & \sum_{t=\ell(h)}^n x_{h,t}=1  & \quad & \forall h \in \mathcal H  \notag                           \\
                 & \sum_{h \in \mathcal H:\ell(h)\le i} \lambda_i(h)\,x_{h,i} \ge z &  & \forall i\in[n] \notag       \\
                 & x_{h,t}\in\{0,1\}   &  & \forall h \in \mathcal H,\ \forall t\in\{\ell(h),\ldots,n\} \notag
\end{alignat}

\thmILP*

\begin{proof}
	First consider any deterministic policy $\pol$.
	As observed before, we may assume that $\pol$ stops only after all $m$ signals
	have appeared.
	Consider its behavior on $\mathcal I_n$.
	For every history $h$, let $\tau(h)\in\{\ell(h),\ldots,n\}$ be the stopping time of $\pol$ after
	observing $h$, and set $x_{h,t}=\mathds{1}[\tau(h)=t]$.
	Then the first set of constraints is satisfied.

	Now fix $i\in[n]$.
	Up to time $i$, the two instances $\mathcal I_i$ and $\mathcal I_n$ are identical, so after any
	history $h$ with $\ell(h)\le i$, the policy makes the same decision on $\mathcal I_i$ as on
	$\mathcal I_n$.
	Hence, by \Cref{lem:history-prob} the success probability of $\pol$ on $\mathcal I_i$ is exactly
	\[
		\sum_{h:\ell(h)\le i} \lambda_i(h)\,x_{h,i}.
	\]
	Therefore every deterministic policy induces a feasible solution to $(\ILP)$ with
	\[
		z\le \min_{i\in[n]} \pr[\pol\text{ wins on }\mathcal I_i].
	\]
	Since the worst-case guarantee of $\pol$ is at most its minimum success probability on the hard
	family $\{\mathcal I_i\}_{i=1}^n$, we obtain that
	$\opt^{\mathrm{det}}_{n,m}$ is at most the optimal objective value of $(\ILP)$.

	Conversely, let $x$ be any feasible solution to $(\ILP)$ with value $z$.
	Define a deterministic policy that ignores the item values, waits until all $m$ signals have
	appeared, reads the resulting history $h$, and then stops at the unique time $t$ with $x_{h,t}=1$.
	For any adversarial instance whose maximum is at position $i$, the distribution of the signal
	history is the same as on $\mathcal I_i$ because the signals depend only on the maximum position.
	Thus, the success probability of this policy is exactly
	\[
		\sum_{h:\,\ell(h)\le i} \lambda_i(h) \cdot x_{h,i}
	\]
	when the maximum is at position $i$, and this is at least $z$ for every $i\in[n]$ by the second set
	of constraints.
	Hence the policy has worst-case guarantee at least $z$.
	This shows that the ILP optimum is at most $\opt^{\mathrm{det}}_{n,m}$.
\end{proof}

Using the characterization just proved, we now turn to the special case of two signals. 
\begin{restatable}{theorem}{thmTwoSignals}\label{thm:m2-det}
	For $m=2$, we have
	\[
		\optdetfull(2)
		=
		\max\Bigl\{z\in[0,1]: \sum_{t=1}^\ell \lceil z t^2\rceil \le \ell^2 \text{ for all } \ell\in[n]\Bigr\}.
	\]
\end{restatable}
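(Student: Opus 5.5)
We apply \Cref{thm:det-ilp} with $m=2$ and show that the ILP reduces to a scheduling problem with integer weights. For $m=2$, a history $h$ with last signal time $\ell=\ell(h)$ is one of two kinds. Either both signals fall at $\ell$, so $h=(0,\ldots,0,2)$ and $\lambda_i(h)=1/i^2$. Or the signals fall at $a<\ell$ and at $\ell$, so $h$ has $c_a=c_\ell=1$ and $\lambda_i(h)=2/i^2$, and there are $\ell-1$ such histories. Set the integer weight $w(h):=i^2\lambda_i(h)\in\{1,2\}$ for $\ell(h)\le i$; it does not depend on $i$. The histories with last signal exactly $\ell$ then have total weight $1+2(\ell-1)=\ell^2-(\ell-1)^2$. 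Hence the histories with $\ell(h)\le \ell$ have total weight exactly $\ell^2$.

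In this language, a feasible ILP solution assigns each history $h$ to one time $\tau(h)\ge \ell(h)$. The constraint for $i$ reads $W_i:=\sum_{h:\tau(h)=i} w(h)\ge z i^2$. Since $W_i$ is an integer, this is equivalent to $W_i\ge d_i:=\lceil z i^2\rceil$. So the claim becomes: for a given $z\in[0,1]$, such an assignment exists if and only if $\sum_{t\le \ell} d_t\le \ell^2$ for all $\ell\in[n]$. The maximum in the statement is attained, because each condition $\lceil z t^2\rceil\le k$ is equivalent to $z t^2\le k$. The feasible set is therefore a finite intersection of sets of the form $\{z:\sum_t\lceil zt^2\rceil\le \ell^2\}$, each closed, and $z=0$ is feasible.

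\textbf{Necessity.} Every history assigned to a time $t\le \ell$ has $\ell(h)\le t\le\ell$. Hence $\sum_{t\le \ell} d_t\le\sum_{t\le\ell}W_t\le \ell^2$.

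\textbf{Sufficiency.} We build the assignment greedily for $t=1,\ldots,n$, with the invariant $W_s=d_s$ for all $s<t$. At time $t$, the pool of unassigned histories with $\ell(h)\le t$ has total weight $t^2-\sum_{s<t}d_s\ge d_t$ by the hypothesis. We want to select a sub-multiset of the pool with weight exactly $d_t$. A multiset of $1$'s and $2$'s with total at least $d$ contains a sub-multiset of weight exactly $d$, unless $d$ is odd and there are no $1$'s. So the only danger is an odd $d_t$ with no weight-$1$ history left.

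We handle this by using weight-$1$ histories sparingly: at each step we use at most one weight-$1$ history when $d_t$ is odd, and otherwise prefer $2$'s. We first try to fill with $2$'s and pairs of $1$'s only as needed. Under this rule, the number of weight-$1$ histories consumed before time $t$ is at most $\#\{s<t: d_s \text{ odd}\}\le t-1$, while $t$ of them have entered the pool. So at least one is available at step $t$.

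The main obstacle is checking that this greedy rule is consistent. Once $2$'s run out, pairs of $1$'s must be used and the count of consumed $1$'s grows faster. I would therefore argue directly that the number of $1$'s remaining in the pool is always at least $1$ whenever the pool weight is positive. I would try to prove this via the parity of the pool weight, namely $t^2-\sum_{s<t}d_s$, and the arrival of one new weight-$1$ history per step.

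Finally, any histories left after step $n$ are assigned to time $n$; this only increases $W_n$. This gives a feasible ILP solution with value $z$, and \Cref{thm:det-ilp} concludes.
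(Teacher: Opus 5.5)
Your reduction to integer weights $w(h)\in\{1,2\}$ and the necessity direction coincide with the paper's argument, and the greedy sufficiency construction is also the paper's. The gap is the step you yourself flag as the main obstacle: you never establish that the pool at step $t$ contains a weight-$1$ history, and neither of your suggested routes does it. The counting bound ``consumed weight-$1$ histories $\le \#\{s<t: d_s \text{ odd}\}$'' breaks once the weight-$2$ histories run out, as you note. A parity argument on the pool weight $t^2-\sum_{s<t}d_s$ only shows that the number of weight-$1$ histories in the pool has the same parity as that weight. When the pool weight is even and $d_t$ is odd, parity alone cannot exclude that no weight-$1$ history is left. As written, the sufficiency direction is therefore incomplete.

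The missing observation is simple, and it makes any selection rule work. The diagonal history with both signals at time $t$ (your $h=(0,\ldots,0,2)$ with $\ell(h)=t$) has weight $1$. It can only be assigned to times $\ge t$, so it cannot have been used at any step $s<t$ and is always in the pool at step $t$. Hence the pool at step $t$ always contains $p\ge 1$ weight-$1$ histories. With $q$ weight-$2$ histories, every integer $r\in\{0,\ldots,p+2q\}$ is attainable: take $\min\{q,\lfloor r/2\rfloor\}$ twos and fill the rest with ones. This is exactly how the paper closes the argument; no sparing rule or parity bookkeeping is needed.

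A minor point: the maximum is attained because $z\mapsto\sum_{t}\lceil zt^2\rceil$ is nondecreasing and left-continuous, so each constraint set is a closed interval containing $0$. The termwise equivalence $\lceil zt^2\rceil\le k \iff zt^2\le k$ that you cite does not by itself give this for the sum.
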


\begin{proof}
For $m=2$, write a history as a sorted pair $(a,b)$, $1\le a\le b\le n$, 
and let
\[
w(a,b)=
\begin{cases}
2,& a<b,\\
1,& a=b.
\end{cases}
\]
The weight $w(a,b)$ is the number of \emph{labeled} signal pairs inducing the sorted history $(a,b)$.
We may assume that a deterministic policy waits until both signals have appeared. 

In order to show 
\[
    \optdetfull(2)
  \leq
	\max\Bigl\{z\in[0,1]: \sum_{t=1}^\ell \lceil z t^2\rceil \le \ell^2 \text{ for all } \ell\in[n]\Bigr\} ,
\]
fix a deterministic policy and consider its behavior on the hard instance $\mathcal I_n$. For every history $(a,b)$, let $\tau(a,b)\ge b$ be the stopping time chosen by the policy on $\mathcal I_n$. Define
\[
C_t := \sum_{\substack{1\le a\le b\le t\\ \tau(a,b)=t}} w(a,b).
\]
On the hard instance $\mathcal I_t$, the history $(a,b)$ occurs with probability $w(a,b)/t^2$, and the policy wins exactly when $\tau(a,b)=t$. Hence its success probability on $\mathcal I_t$ is $C_t/t^2$.

Suppose the policy has worst-case guarantee at least $z$. Then
$C_t \ge zt^2$,
and since $C_t$ is an integer, $C_t \ge \lceil zt^2\rceil$.
Moreover, for every $\ell$, all histories relevant for $\cI_\ell$ must have second signal at most $\ell$. The total weight of such histories is
\[
\sum_{b=1}^\ell \sum_{a=1}^b w(a,b)
=
\sum_{b=1}^\ell (2b-1)
=
\ell^2.
\]
Therefore
\[
\sum_{t=1}^\ell \lceil zt^2\rceil
\le
\sum_{t=1}^\ell C_t
\le
\ell^2
\]
for all $\ell\in[n]$. This proves the upper bound.

For the other direction
\[
    \optdetfull(2)
  \geq
	\max\Bigl\{z\in[0,1]: \sum_{t=1}^\ell \lceil z t^2\rceil \le \ell^2 \text{ for all } \ell\in[n]\Bigr\} ,
\]
suppose that $z\in[0,1]$ satisfies
\[
\sum_{t=1}^\ell \lceil zt^2\rceil \le \ell^2
\]
for all $\ell \in [n]$.
Set
$d_t := \lceil zt^2\rceil$ for all $t<n$
and 
$d_n := n^2-\sum_{t=1}^{n-1}d_t$.
The constraint for $\ell=n$ implies $d_n\ge \lceil zn^2\rceil$.

We now construct a deterministic policy by greedily assigning histories to stopping times so that the total weight of histories assigned to time $t$ is $d_t$. 
Suppose that the assignments for times $1,\ldots,t-1$ have already been made, and let
\[
\mathcal A_t := \{(a,b):1\le a\le b\le t \text{ and } (a,b) \text{ is not yet assigned}\}
\]
be the set of currently available histories. Every history in $\mathcal A_t$ can be assigned to time $t$, since its second signal is at most $t$. Moreover, the total available weight is
\[
W_t:=\sum_{(a,b)\in \mathcal A_t} w(a,b)
= t^2-\sum_{s=1}^{t-1} d_s
\ge d_t,
\]
where the inequality follows from the prefix constraint.

It remains to justify that one can select available histories of \emph{exactly} weight $d_t$. 
The available histories have weights only $1$ and $2$, and the diagonal history $(t,t)$ is newly available at step $t$ and was not available before, so it is still unassigned and has weight $1$. 
Thus $\mathcal A_t$ contains at least one weight-$1$ history. Let $p\ge 1$ be the number of weight-$1$ histories in $\mathcal A_t$, and let $q$ be the number of weight-$2$ histories. Then $W_t=p+2q$. We claim that for every integer $r\in\{0,\ldots,W_t\}$ there is a subset $B\subseteq \mathcal A_t$ with
\[
\sum_{(a,b)\in B} w(a,b)=r.
\]
Indeed, if $r\le 2q$, write $r=2u+\varepsilon$ with $\varepsilon\in\{0,1\}$. If $\varepsilon=0$, choose $u\le q$ weight-$2$ histories. If $\varepsilon=1$, then $r\le 2q$ implies $u\le q-1$, so choose $u$ weight-$2$ histories and one weight-$1$ history. If instead $r>2q$, choose all $q$ weight-$2$ histories and $r-2q$ weight-$1$ histories; this is possible because $r\le p+2q$ implies $r-2q\le p$.

Applying the claim with $r=d_t$ gives a subset of available histories of total weight exactly $d_t$, which we assign to stopping time $t$. We do this for every $t<n$. At time $n$, assign all remaining histories to time $n$. The total remaining weight is
\[
n^2-\sum_{t=1}^{n-1} d_t=d_n,
\]
so the weight assigned to time $n$ is exactly $d_n$.
Let $\tau(a,b)$ denote the
stopping time assigned to history $(a,b)$. By construction, for every
$t\in[n]$, the total weight of histories assigned to stopping time $t$ is
exactly $d_t$, that is,
\[
\sum_{\substack{1\le a\le b\le t\\ \tau(a,b)=t}} w(a,b)=d_t.
\]

Now consider any instance whose maximum arrives at time $t$. The two signals
are independent and uniform on $[t]$. Hence a sorted history $(a,b)$ with
$a\le b\le t$ occurs with probability $w(a,b)/t^2$. The policy wins exactly
on those histories that are assigned to stopping time $t$ because then it
stops exactly when the maximum arrives. Therefore the success probability on
such an instance is
\[
\sum_{\substack{1\le a\le b\le t\\ \tau(a,b)=t}} \frac{w(a,b)}{t^2}
=
\frac{d_t}{t^2}
\ge
z \ ,
\]
where the final inequality follows from 
$d_t=\lceil zt^2\rceil\ge zt^2$ for $t<n$, and \[
d_n
=
n^2-\sum_{t=1}^{n-1}\lceil zt^2\rceil
\ge
\lceil zn^2\rceil
\ge
zn^2 \ .
\]
This defines a deterministic policy with worst-case success probability at
least $z$. Hence $z$ is achievable by a deterministic policy, so $\optdetfull(2)$ must be at least $z$.
This concludes the second part of the proof.
\end{proof}

\begin{restatable}{corollary}{coroTwoSignals}\label{cor:m2-asymptotic}
	For every integer $n\ge 4$, we have
	\(
		\frac{6(n-1)}{(n+1)(2n+1)}
		\le
		\optdetfull(2)
		\le
		\frac{6n}{(n+1)(2n+1)}.
	\)
	In particular, it holds that
	\(
		\optdetfull(2)=\frac{3}{n}+O\left(\frac{1}{n^2}\right).
	\)
\end{restatable}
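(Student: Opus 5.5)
Both bounds follow from the characterization in \Cref{thm:m2-det}: $\optdetfull(2)$ is the largest $z\in[0,1]$ with $\sum_{t=1}^\ell \lceil zt^2\rceil\le \ell^2$ for all $\ell\in[n]$. Write $Q_\ell:=\sum_{t=1}^\ell t^2=\frac{\ell(\ell+1)(2\ell+1)}{6}$. The plan is to exhibit one feasible $z$ for the lower bound and to read off the upper bound from the single constraint $\ell=n$.

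\emph{Upper bound.} Take $\ell=n$ and use $\lceil zt^2\rceil\ge zt^2$. This gives $zQ_n\le n^2$, so
\[
z\le \frac{6n^2}{n(n+1)(2n+1)}=\frac{6n}{(n+1)(2n+1)}.
\]
This is immediate.

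\emph{Lower bound.} Set $z_0:=\frac{6(n-1)}{(n+1)(2n+1)}$ and check that it satisfies every prefix constraint. From $\lceil x\rceil< x+1$ we get
\[
\sum_{t\le \ell}\lceil z_0t^2\rceil< z_0Q_\ell+\ell,
\]
so it suffices to show $z_0Q_\ell+\ell\le \ell^2$, i.e.\ $z_0\frac{(\ell+1)(2\ell+1)}{6}\le \ell-1$ for every $\ell\in[n]$.

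\begin{itemize}
\item For $\ell=n$ this holds with equality by the choice of $z_0$, which is why $z_0$ was chosen this way.
\item For $2\le \ell<n$, the required condition becomes $\frac{(\ell+1)(2\ell+1)}{\ell-1}\le\frac{(n+1)(2n+1)}{n-1}$. The function $h(x)=\frac{(x+1)(2x+1)}{x-1}=2x+5+\frac{6}{x-1}$ is not monotone: it decreases up to $x=1+\sqrt3$ and increases afterwards.
\item For large $\ell$ we therefore use $h(\ell)\le h(n)$ directly, which holds since $h$ is increasing on $[3,n]$.
\item Small $\ell$ needs separate care, and this is where the hypothesis $n\ge4$ enters. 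Two cases need checking:
\begin{itemize}
\item $\ell=1$: since $z_0\le 1$, we have $\lceil z_0\rceil\le1$, which is exactly the constraint.
\item $\ell=2$: here $h(2)=15$ but $h(4)=15$ as well, and for $n\ge4$ we have $h(n)\ge15$. Alternatively, one can check $\lceil z_0\rceil+\lceil4z_0\rceil\le 4$ directly, using $z_0\le \frac{18}{45}<\frac12$ for $n\ge4$.
\end{itemize}
\end{itemize}

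The main obstacle is this handling of the non-monotone $h$ at small $\ell$, together with the strictness of the ceiling bound. The constraint at $\ell=n$ only needs $\le$, and strictness is automatic there, so no extra slack is required.

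\emph{Asymptotics.} Both bounds equal $\frac{3}{n}+O(n^{-2})$ after expanding $\frac{6n}{2n^2+3n+1}$, so $\optdetfull(2)=\frac{3}{n}+O\left(\frac{1}{n^2}\right)$.
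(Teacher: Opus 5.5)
Your proof is correct and follows the paper's argument. The upper bound comes from the $\ell=n$ constraint. The lower bound verifies feasibility of $z_0=\frac{6(n-1)}{(n+1)(2n+1)}$ via $\lceil x\rceil\le x+1$, monotonicity for $\ell\ge 3$, the coincidence $h(2)=h(4)=15$ (the paper's $g(2)=g(4)=\frac25$, where $g=6/h$), and a separate check at $\ell=1$. The only difference is cosmetic: you get monotonicity from the derivative of $h(x)=2x+5+\frac{6}{x-1}$, whereas the paper computes the discrete difference $g(\ell+1)-g(\ell)$ directly.
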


\begin{proof}
	Let $z^\star:=\optdetfull(2)$.
	Then, $z^\star$ is feasible for the characterization given in \Cref{thm:m2-det}. Hence, using the constraint for $\ell=n$, we get
	\[
		z^\star \sum_{t=1}^n t^2
		\le
		\sum_{t=1}^n \ceil{z^\star t^2}
		\le
		n^2.
	\]
	Since $\sum_{t=1}^n t^2=\frac{n(n+1)(2n+1)}6$, this gives
	\[
		\optdetfull(2)=z^\star
		\le
		\frac{n^2}{\sum_{t=1}^n t^2}
		=
		\frac{6n}{(n+1)(2n+1)}.
	\]

	For the lower bound, set
	\[
		\underline z
		:=
		\frac{6(n-1)}{(n+1)(2n+1)}.
	\]
	We show that $\underline z$ is feasible for the characterization given in \cref{thm:m2-det}.
    For this, set
	\[
		g(\ell):=\frac{6(\ell-1)}{(\ell+1)(2\ell+1)}.
	\]
	A direct computation gives, for every $\ell\ge 3$,
	\[
		g(\ell+1)-g(\ell)
		=
		\frac{12(-\ell^2+\ell+3)}{(\ell+1)(\ell+2)(2\ell+1)(2\ell+3)}
		< 0.
	\]
	Thus $g$ is decreasing on $\{3,4,\ldots\}$.  
    Moreover, $g(2)=g(4)=\frac25$, so for every $n\ge 4$ and every $\ell\in\{2,\ldots,n\}$ we have
	\[
		\underline z=g(n)\le g(\ell).
	\]
	Now fix $\ell\in[n]$.  If $\ell=1$, then $0<\underline z\le 1$, and hence
	\[
		\sum_{t=1}^1 \ceil{\underline z t^2}=\ceil{\underline z}=1.
	\]
	If $\ell\ge 2$, then $\ceil{x}\le x+1$ for every real $x$, and therefore
	\[
		\sum_{t=1}^\ell \ceil{\underline z t^2}
		\le
		\underline z\sum_{t=1}^\ell t^2+\ell
		=
		\underline z\frac{\ell(\ell+1)(2\ell+1)}{6}+\ell
		\le
		g(\ell)\frac{\ell(\ell+1)(2\ell+1)}{6}+\ell
		=
		\ell^2.
	\]
	Thus $\underline z$ satisfies all feasibility inequalities in \Cref{thm:m2-det}, and so
	\[
		\opt^{\mathrm{det}}_{n,2}\ge \underline z
		=
		\frac{6(n-1)}{(n+1)(2n+1)}.
	\]

	The two bounds differ by
	\[
		\frac{6n}{(n+1)(2n+1)}-\frac{6(n-1)}{(n+1)(2n+1)}
		=
		O\left(\frac{1}{n^2}\right),
	\]
	and the upper bound is $\frac3n+O\big(\frac1{n^2}\big)$. Hence,
	\[
		\opt^{\mathrm{det}}_{n,2}
		=
		\frac{3}{n}+O\left(\frac{1}{n^2}\right).\qedhere
	\]
\end{proof}

\section{Additional numerical experiments}
\label{app:additional-experiments}

This section contains additional experiments supporting the numerical results in \Cref{sec:experiments} and the adversarial-order theory in \Cref{sec:adversarial}.  We separate the experiments into random-order experiments, which extend the main empirical evaluation, and adversarial-order experiments, which illustrate the hard instances and scaling regimes.

\subsection{Additional random-order experiments}
\label{app:random-order-experiments}

The experiments in this subsection use the same setup as in \Cref{sec:experiments}.  In each trial we sample a full random permutation of ranks $[n]$, draw the signal $S$ conditional on the realized arrival time $I$ of the maximum, run the corresponding stopping rule on the realized order, and record whether the selected item is the maximum. The plotted curves are empirical success probabilities of the actual stopping rules over $1000$ trials. Confidence bands are $95\%$ confidence intervals.

\paragraph{Model scaling with $n$.}
The first experiment repeats the model evaluation for several values of $n$.  For each $\alpha$, we use the finite-$n$ optimal threshold from \Cref{thm:random-order-tight} and estimate its success probability. \Cref{fig:appendix-clean-empirical-overview} compares these empirical curves with the asymptotic optimum $\opt(\alpha)$ and the traditional $\frac1e$ benchmark. The curves approach the asymptotic formula as $n$ grows, while already showing the improvement over the classic $\pol\big(\frac ne\big)$ policy at moderate problem sizes.

\begin{figure}[tp]
    \centering
    \includegraphics[width=.68\textwidth]{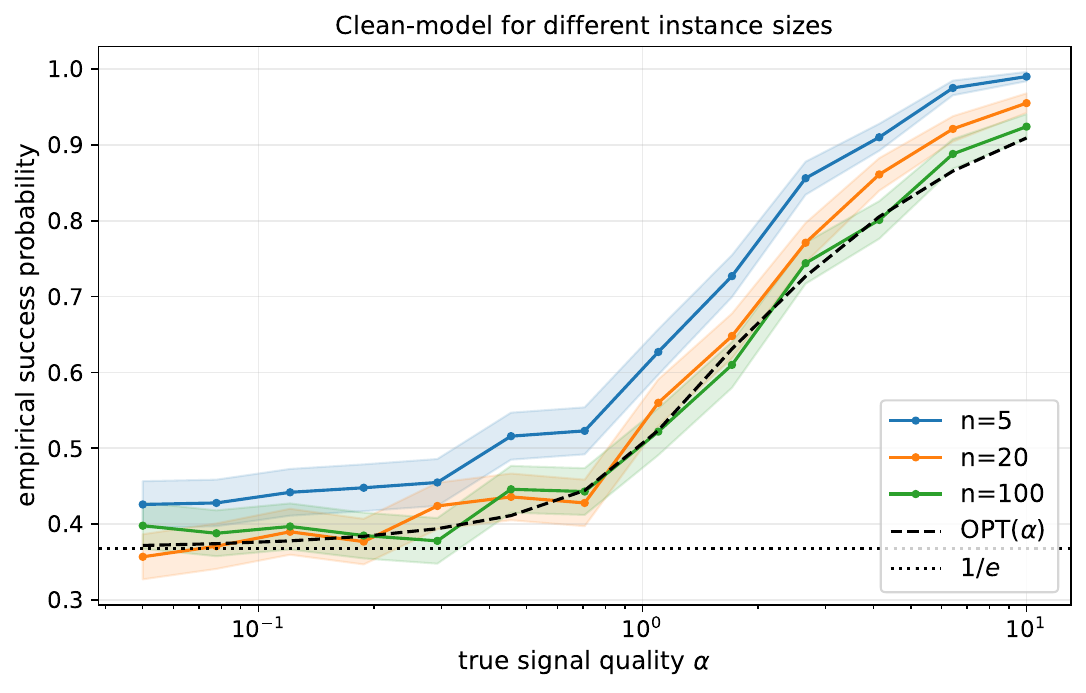}
    \caption{Random-order experiments for several problem sizes. Each curve shows the empirical success probability of the finite-$n$ optimal asynchronous threshold policy from \Cref{thm:random-order-tight}. The dashed curve is the asymptotic optimum $\opt(\alpha)$.}
    \label{fig:appendix-clean-empirical-overview}
\end{figure}

\paragraph{Misspecification slices and induced thresholds.}
The middle panel of \Cref{fig:main-experiments} summarizes robustness to misspecification as a heatmap.  \Cref{fig:appendix-misspecification-slices} gives a complementary one-dimensional view.  We fix several true values of $\alpha$ and vary the estimated parameter $\hat\alpha$. 
We consider the policy $\pol(\max\{S,k_n\})$ from \Cref{thm:conservative-misspecification}
that is tuned to the threshold
$k_n(\hat\alpha)=\max\{1,\lceil \beta^\star(\hat\alpha)n\rceil\}$.
For each true $\alpha$, the same sampled instances are used for all values of $\hat\alpha$, giving a paired comparison across thresholds. 
The lower panel plots the induced threshold fraction $\frac{k_n(\hat\alpha)}n$.  This makes explicit why the success curves become flat once $\hat\alpha\ge 1$: all such predictions induce the same signal-trusting policy $\pol(S)$.  The qualitative behavior is consistent with \Cref{thm:conservative-misspecification}: conservative tuning is robust, while sufficiently overconfident tuning can be harmful when the true signal arrives early.

\begin{figure}[tp]
    \centering
    \includegraphics[width=.68\textwidth]{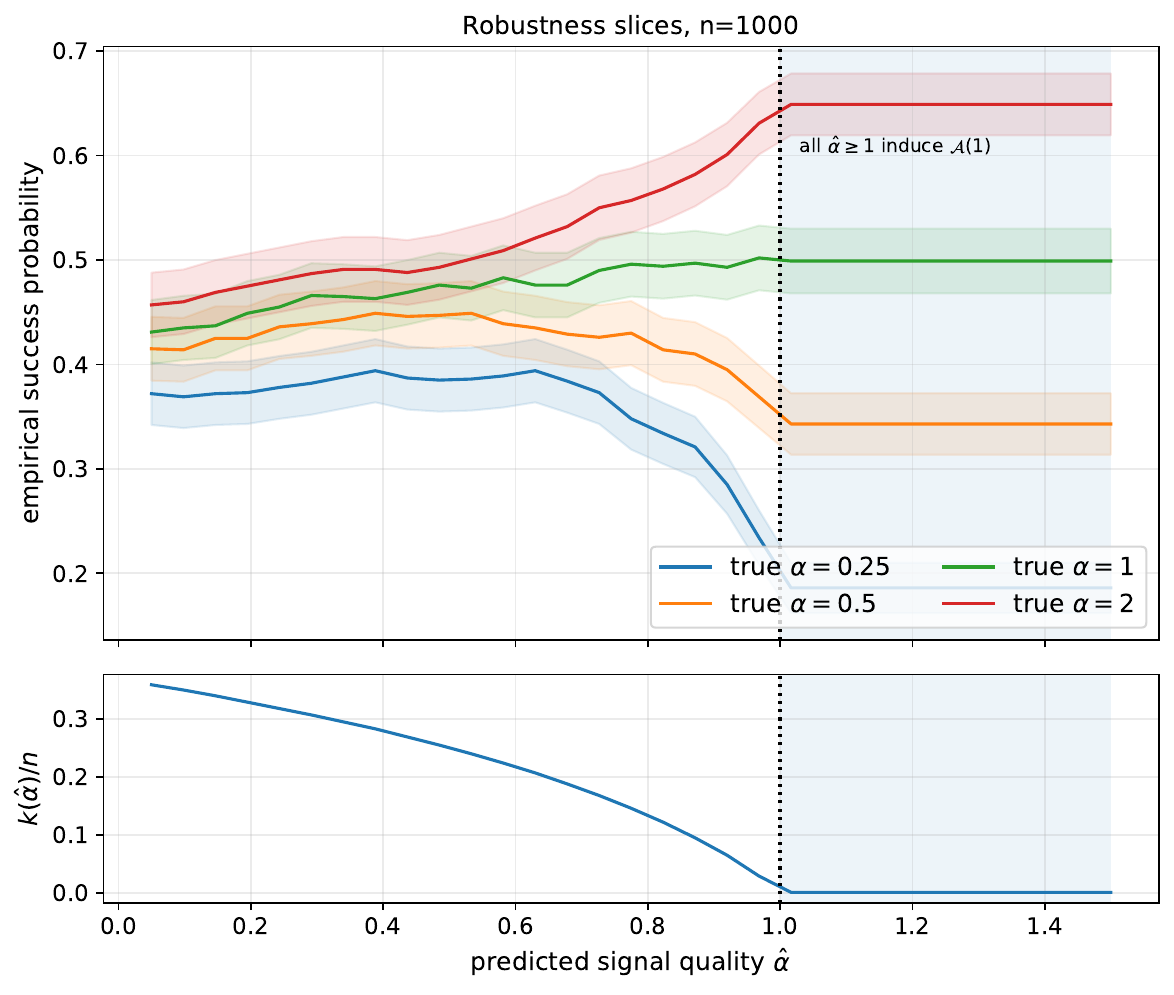}
    \caption{Misspecification slices in the random-order model. The top panel shows empirical success probabilities as a function of the predicted parameter $\hat\alpha$ for several true values of $\alpha$. The bottom panel shows the induced threshold fraction $k_n(\hat\alpha)/n$, explaining the plateau for $\hat\alpha\ge1$.}
    \label{fig:appendix-misspecification-slices}
\end{figure}

\paragraph{Separating the noisy-signal failure modes.}
The right panel of \Cref{fig:main-experiments} combines three corruption types into one mixed corruption rate.  \Cref{fig:appendix-noisy-signal-by-type} separates these effects.  In each panel, with probability $1-\rho$ the clean signal is observed, while with probability $\rho$ the signal is corrupted in one specified way: it is missed entirely, replaced by a uniformly random false alarm in $[n]$, or delayed to a uniformly random time after $I$ when such a time exists.  We compare the optimal policy from \Cref{thm:random-order-tight}, the classic threshold $\pol\big(\frac ne\big)$ policy, and the fallback policy from \Cref{sec:experiments}. 
Missed and late signals mostly harm the pure asynchronous policy by preventing it from acting in time, whereas false alarms can induce premature stopping. 
The fallback rule mitigates both effects by reverting to the classic threshold when the signal is absent or arrives too late.

\begin{figure}[tp]
    \centering
    \includegraphics[width=\textwidth]{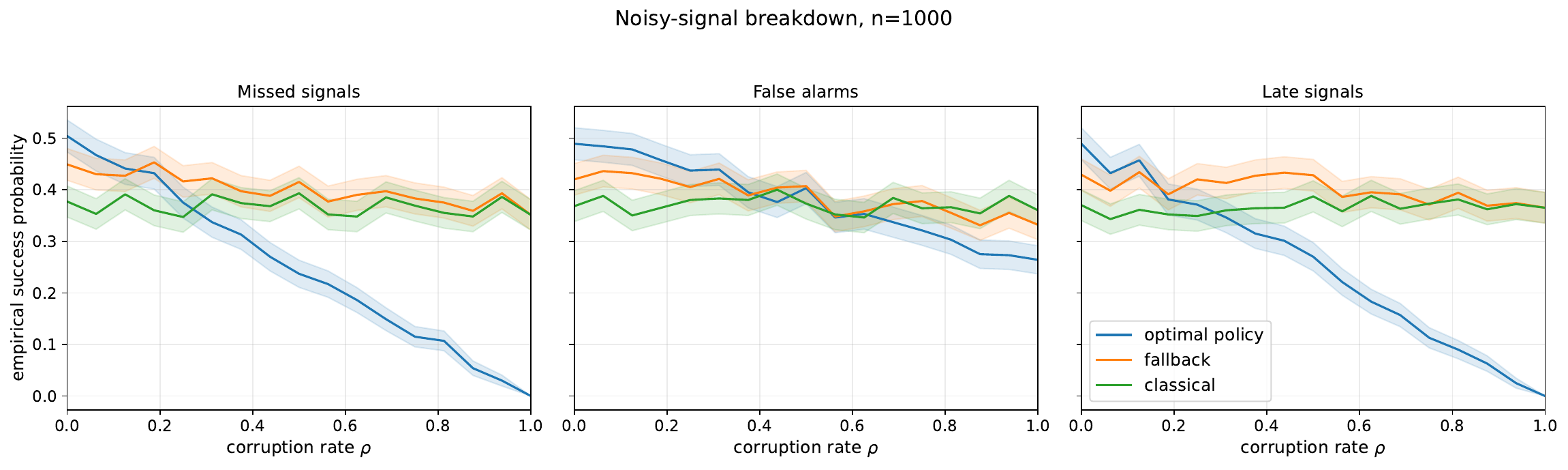}
    \caption{Noisy-signal breakdown in the random-order model. The three panels isolate missed signals, false alarms, and late signals. The fallback policy gives a smoother degradation than the pure asynchronous policy across the different corruption mechanisms.}
    \label{fig:appendix-noisy-signal-by-type}
\end{figure}

\subsection{Additional adversarial-order experiments}
\label{app:adversarial-order-experiments}

We now supplement the adversarial-order results from \Cref{sec:adversarial}.  These experiments use the hard family
$\mathcal I_i=(1,2,\ldots,i,0,\ldots,0)$
where the maximum is at position $i$ and no item after time $i$ is a record. 
Unlike the random-order experiments above, there is no permutation randomness: the adversary fixes the instance $\mathcal I_i$. 
In the single-signal experiments, randomness comes only from the signal time $S$ and, for randomized policies, from the internal random threshold of the algorithm. 
The full-history experiment for $m=2$ is computed exactly from the characterization in \Cref{thm:m2-det}.

\paragraph{Position profile of the hard family.}
The first adversarial experiment fixes $n$ and $\alpha$ and varies the adversarial maximum position $i$. 
We compare the deterministic signal-stopping rule from \Cref{thm:adversial-det} with the randomized policy from \Cref{thm:adversarial-rand-opt}. 
The deterministic policy succeeds on $\mathcal I_i$ only when the signal reaches the maximum position, so its success probability depends strongly on $i$. 
The randomized policy approximately equalizes the hard-family instances, as shown in \Cref{thm:adversarial-rand-opt}.

\begin{figure}[tp]
    \centering
    \includegraphics[width=.68\textwidth]{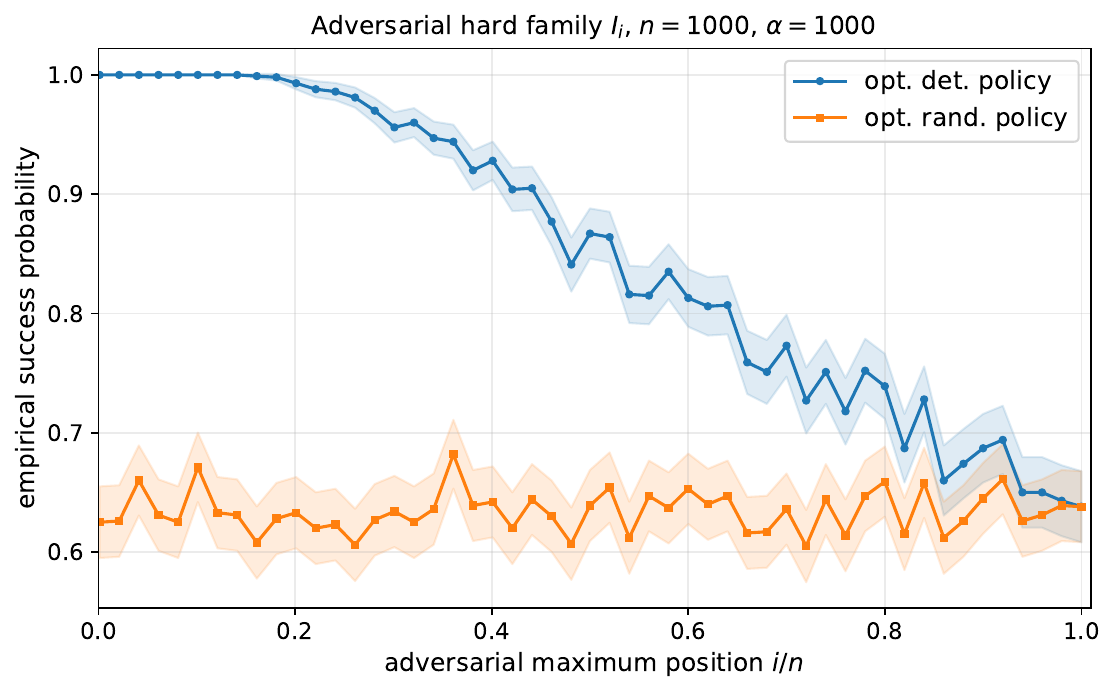}
    \caption{Adversarial position profile on the hard family $\mathcal I_i$.  The deterministic signal-stopping rule is sensitive to the adversarial maximum position, while the randomized minimax policy from \Cref{thm:adversarial-rand-opt} approximately equalizes success across positions.}
    \label{fig:appendix-adversarial-position-profile}
\end{figure}

\paragraph{Scaling regime $\alpha=cn$.}
The second adversarial experiment studies the regime $\alpha=cn$.  For each constant $c > 0$, we estimate the empirical worst-case success probability over the sampled hard-family positions and compare it with the exact deterministic and randomized values from \Cref{thm:adversial-det,thm:adversarial-rand-opt}. The limiting behavior is governed by \Cref{cor:adversarial-det-regimes,cor:adversarial-rand-regimes}: 
when $\alpha$ is proportional to $n$, both values approach the constant curve $1-e^{-c}$.  \Cref{fig:appendix-adversarial-scaling} confirms that the empirical worst-case values track the exact finite-$n$ curves and approach this limit.

\begin{figure}[tp]
    \centering
    \includegraphics[width=.68\textwidth]{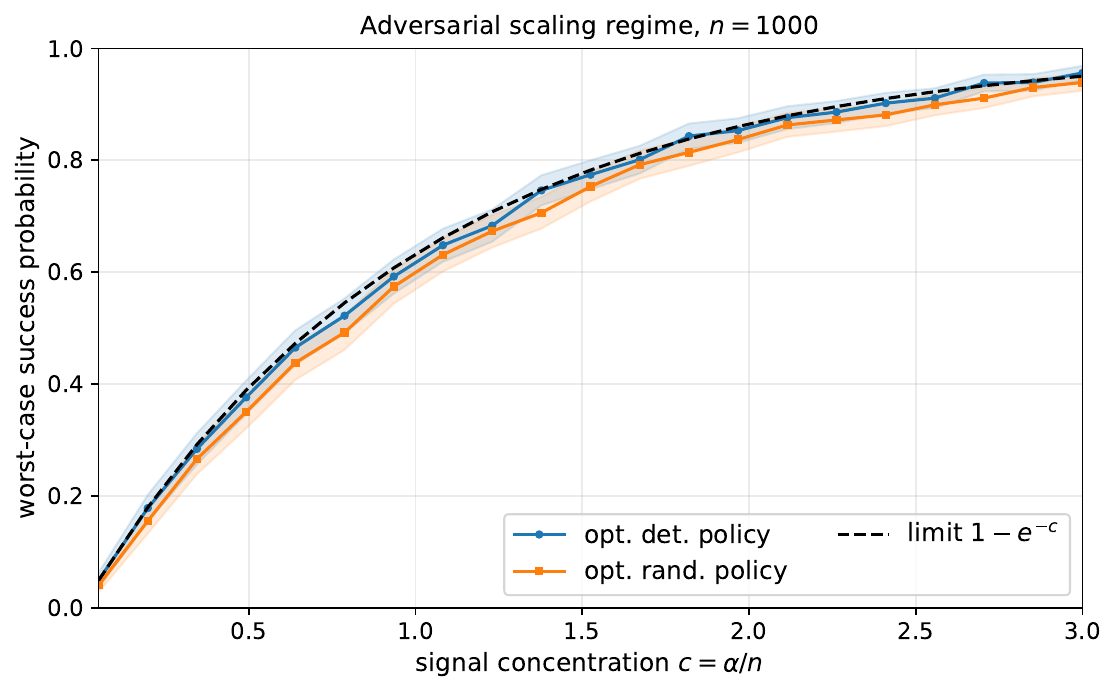}
    \caption{Adversarial scaling regime $\alpha=cn$.  Empirical worst-case success probabilities match the exact finite-$n$ deterministic and randomized values and approach the limit $1-e^{-c}$.}
    \label{fig:appendix-adversarial-scaling}
\end{figure}
\begin{figure}[tp]
    \centering
    \includegraphics[width=.68\textwidth]{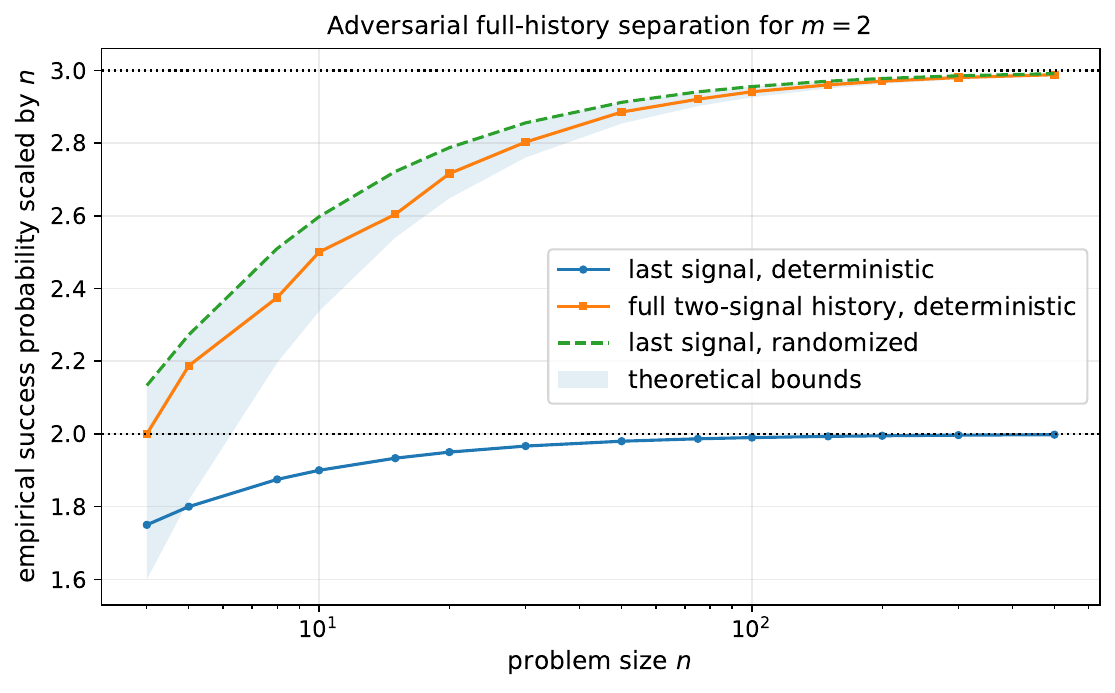}
    \caption{Full-history separation for $m=2$ in the adversarial-order model.  The full-history deterministic value from \Cref{thm:m2-det} lies between the bounds in \Cref{cor:m2-asymptotic} and is strictly larger than the deterministic last-signal value from \Cref{thm:adversial-det}.}
    \label{fig:appendix-adversarial-full-history-m2}
\end{figure}

\paragraph{Full history versus the last signal for $m=2$.}
The final adversarial experiment considers the multiple-signal model from \Cref{sec:multiple-vs-alpha}.  Revealing only the last signal $L=\max_j S_j$ is equivalent to a single $\alpha$-power signal with $\alpha=m$.  Thus, for $m=2$, the deterministic last-signal baseline is given by \Cref{thm:adversial-det} with $\alpha=2$, while the randomized last-signal value is given by \Cref{thm:adversarial-rand-opt}; \Cref{lem:last-signal-rand} shows that randomized algorithms gain no additional worst-case power from observing the full history.  For deterministic algorithms, however, full history can help.  We compute the exact full-history value for $m=2$ using \Cref{thm:m2-det} and compare it with the last-signal baselines.  The shaded region in \Cref{fig:appendix-adversarial-full-history-m2} shows the finite-$n$ bounds from \Cref{cor:m2-asymptotic}.  The figure illustrates the separation: the deterministic full-history value scales like $\frac3n$, whereas the deterministic last-signal value scales like $\frac2n$.

\end{document}